\declaretheorem[style=definition]{example}
\numberwithin{equation}{section}
\DeclareMathOperator*{\argmax}{arg\,max}
\DeclareMathOperator*{\supp}{supp}
\DeclareMathOperator{\sgn}{sgn}
\newtheorem{assm}{Assumption}
\newtheorem{lem}{Lemma}
\newtheorem{thm}{Theorem}
\newtheorem*{thm*}{Theorem}
\newtheorem{cor}{Corollary}
\newtheorem{defn}{Definition}
\newtheorem{rem}{Remark}
\begin{document}
\renewcommand{\thefootnote}{\fnsymbol{footnote}}
\renewcommand\thmcontinues[1]{Continued}
\title{The Politics of Attention}
\author{
Lin Hu\footnote{Research School of Finance, Actuarial Studies and Statistics, Australian National University. lin.hu@anu.edu.au. }
\and Anqi Li\footnote{Department of Economics, Washington University in St. Louis. anqili@wustl.edu. We thank Steve Callander, Matt Gentzkow, Kun Li, Ilya Segal, Chris Shannon, Ken Shotts, Takuro Yamashita, as well as the seminar audience at APEN 2018, Bocconi, Columbia, Toulouse, UC Davis, University of Konstanz, University of Warwick and University of Zurich for comments and suggestions. All errors are our own.}
}
\date{First Draft: July 2018\\This Draft: January 2019 }
\maketitle

\begin{abstract}
We develop an equilibrium theory of attention and politics. In an electoral competition model where candidates have varying policy preferences, we examine what kinds of political behaviors capture voters' limited attention and how this concern affects the overall political outcomes. Following the seminal works of Downs (1957) and Sims (1998), we assume that voters are rationally inattentive and can process information about the policies at a cost proportional to entropy reduction. The main finding is an equilibrium phenomenon called  \emph{attention-} and \emph{media-driven extremism}, namely as we increase the attention cost or garble the news technology, a truncated set of the equilibria captures voters' attention through enlarging the policy differentials between the varying types of the candidates. We supplement our analysis with historical accounts, and discuss its relevance in the new era featured with greater media choices and distractions, as well as the rise of partisan media and fake news.


\bigskip 

\bigskip 

%

\noindent JEL codes: D72, D80. \\\\
\end{abstract} 
\renewcommand{\thefootnote}{\arabic{footnote}}
\pagebreak

\section{Introduction}\label{sec_intro}
The 1790's witnessed the separation of two important figures in the U.S. history: Alexander Hamilton and James Madison. Once closed allies who coauthored the Federalist papers and shared huge successes in justifying and marketing the U.S. Constitution, the two figures started to disagree about Hamilton's economic policies,  which emphasized finance, manufacturing and trade over agriculture. To fight their war and in particular, to arouse and attract public attention,  the two of them  founded political parties and partisan newspapers with extreme and exaggerated positions. To give a vivid example, while Madison himself believed that a healthy country should excel in all areas, he urged that people look inwards, to the center of the country, to farmers, and that they go back to the values that made America great, namely low taxes, agriculture and less trade  (\cite{feldman}). 

In this paper, we ask what kinds of political behaviors capture voters' limited attention and how this concern affects the overall political outcomes. The premise of our analysis, as asserted by \cite{downs},  is that attention is a scarce resource and its utilization should be governed by meticulous cost-benefit analysis.\footnote{As Anthony Downs famously postulated in \emph{An Economic Theory of Democracy} (1957): ``In our model, as in real life, political decisions are made when uncertainty exists and information is obtainable only at a cost. Thus a basic step towards understanding politics is analysis of the economics of being informed, i.e., the rational utilization of scarce resources to obtain data for decision-making.'' } While descriptive studies of rational inattentive voters abound, theoretical analysis is lacking, leaving important questions unanswered. For example, what is the equilibrium effect of limited attention on policies and voter behaviors? Can the changing market conditions heatedly debated by the public--such as greater media choices and distractions, the rise of partisan media and fake news, etc.--partially explain the recent political landscape through the channel of limited voter attention?

We seek to understand the above described questions in a spatial model of electoral competition. Inspired by the existing studies on intra-party politics, we assume that each of the two candidates can have varying  preferences for policies, privately known to themselves, e.g., conservative or liberal left. In particular, we allow such preferences to depend on the candidate's winning status, thus ensuring that the equilibrium policies of the varying types of the candidates can indeed be uncertain from the voter's perspective. 

In our model, as in \cite{gentzkow2} and \cite{pewmediahabit}, attending to politics means taking actions that help reduce policy uncertainties and make voting decisions. Doing so incurs real and opportunity costs, as the time and effort required for the processing and absorption of political contents, holding debates and deliberations that facilitate thinking, etc. could be spent elsewhere, such as work, leisure and entertainment. Following \cite{sims} and \cite{sims1}, we assume that attention cost is proportional to the mutual information between policies and voting decisions. It is scaled by a parameter called the \emph{marginal attention cost}, which represents cost shifters such as exposures to distractions and media choices, the competition between firms for consumer eyeballs, etc.  (\cite{cabletv}; \cite{prior}; \cite{teixeira}; \cite{dunaway}; \cite{perez}). 

Our choice of the attention function seems well-suited for today's information-rich world, where a larger body of political information is accessible to ordinary citizens than ever before. Attention becomes a scarce resource as foreseen by \cite{simon}, and part of it is paid to politics through our regular exposures to a selective yet wide range of sources (\cite{gentzkow2}; \cite{iyengar}). For the reasons discussed in Section \ref{sec_cost}, we model attention as a selective communication channel \`{a} la \cite{shannon}, whose long-run operating cost is well known to be the mutual information between the source data and the voting decision. The material of Appendix  \ref{sec_voter} provides further vindications, documenting interesting consequences of rational, flexible attention allocation, such as selective exposure, confirmatory biases and seeking big occasional surprises. 

Our main finding is an equilibrium phenomenon called \emph{attention-driven extremism}, namely as we increase the marginal attention cost, a truncated subset of the equilibria captures voters' attention through enlarging the policy differences between the varying types of the candidates. As demonstrated in Appendices \ref{sec_commitment} and \ref{sec_multi}, this result holds true even if (1) policies are multi-dimensional, suggesting that our analysis is not confined to left-right politics;  or if (2) candidates can only partially honor their policy proposals, which can be regarded as their promises made during the campaign. 

It is worth distinguishing our result from the following best-response argument: as attention becomes costly, voters do not bother to notice the small policy differences between the varying types of the candidates, hence the latter should push themselves apart as a means of retaining attention.  This argument is incomplete, because  it ignores the equilibrium effects that attention and policies can potentially exert on each other. Theorem \ref{thm_main} goes one step further, showing--using robust properties of the attention function--that under general assumptions about the environment, voter characteristics, such as marginal attention cost, have no effect on candidates' winning probabilities. Thus in  equilibrium, marginal attention cost affects only attention strategies but not policies, and increases in this parameter reduce the set of equilibria that arouses and attracts attention and raise the needed degree of policy extremism for achieving this goal. 

In our baseline model, as in many existing electoral competition models, candidates care about voters only through the channel of winning probability. In  Appendix  \ref{sec_selection}, we use  visibility as an equilibrium selection device, assuming that the dissemination of political information involves entities that profit from voters' eyeballs, e.g., revenue-maximizing content providers. Assuming that these entities incur no loss in equilibrium, we obtain a stronger result: increases in the marginal attention cost leave us with a truncated set of policy distributions that exhibits a greater degree of extremism than ever. We discuss supporting evidence in Section \ref{sec_evidence}. 

In Section \ref{sec_noisy}, we assume that news about candidates is itself a random variable, and call its probability distribution the \emph{news technology}. Voters pay costly attention to news before casting votes. Recently, economists, journalists and political scientists have voiced concerns for the rise of partisan media and fake news (\cite{levendusky}; \cite{handbooktheory}; \cite{handbookempirical}; \cite{gentzkowfakenews}). Following \cite{handbooktheory}, we model the above described changes as Blackwell (1953)'s garbling of the news technology and examine their equilibrium effects through the channel of limited voter attention.

The results appeared in Theorems \ref{thm_noisy_policy} and \ref{thm_noisy_attention} are more delicate than their equivalents in the baseline model. On the one hand, the irrelevance of marginal attention cost to equilibrium policies remains valid. On the other hand, garbling affects both policies and attention strategies, and the overall effect can be subtle. The lesson is twofold. First, any equilibrium that grabs voters' attention after garbling must exhibit greater policy differentials than those that fail to do so prior to garbling -- a phenomenon we term as \emph{media-driven extremism.} Second, in order to back out the extent of garbling from real-world data, much effort should be spent on the identification of shocks that affect only attention strategies but not  policies, and the attention cost shifters as discussed in Section \ref{sec_cost} seem to serve this purpose well. 

At a high level, our theory formalizes the role of rationally inattentive voters in the overall political landscape, and calls for its rigorous analysis through the equilibrium channels identified above. Without it, our understandings of certain events in human history will be deemed incomplete. A famous example dates back to the 1950s, when the conformity in national politics of the United States--hallmarked by Eisenhower's embrace of New Deal and the conservative Democrats' seizure of the congress--created a sentiment among voters that was best characterized apathy and indifference. Sensing the pressure from the other side, local candidates exercised increasing caution over time, restricting policy choices to those that pushed the varying types of themselves far apart and forgoing the remainder that exhibited miniature differences. According to \cite{campbell}, visibility seemed to be the primary concern, since ``In the electoral as a whole where the level of attention is so low, what the public is exposed to must be highly visible--even stark--if it is to have an impact on opinion.''

Our result sheds light on how we should interpret voter data. As discussed in Appendix \ref{sec_nonmonotone}, a consequence of rational, flexible attention allocation is that the average propensity one supports the candidate from his own ideological camp can vary non-monotonically with the marginal attention cost. Once we realize that opinions and votes are the result of voter paying limited attention to their political surroundings, two takeaways are immediate. First, one should not equate measured behaviors with intrinsic preferences and should instead tease them apart using the methods proposed and reviewed in \cite{caplindean}. Second, in light of the changing market conditions that affect the attention cost, one should not be too surprised that the evidence on mass polarization as documented in \cite{fiorina} and \cite{gentzkow} is at best mixed. 

\subsection{Related Literature}

\paragraph{Probabilistic voting models} Our candidates have random policy preferences, and they  know the median voter's position. Thus the main difference between the probabilistic voting models pioneered by \cite{wittman} and \cite{calvert1} is twofold.\footnote{Variations of classical models abound. See, among others, \cite{duggan} for a comprehensive survey of the theory literature, as well as the textbook of \cite{persson} for applications to important institutions such as public spending and redistribution. } First, our notion of policy extremism, defined by the policy gap between the varying types of the same candidate, should not be confused with the \emph{policy divergence} (between different candidates) that probabilistic voting models aim to produce.  Second, our voters act randomly because of limited attention, but they together impose no aggregate uncertainty on candidates. Thus, factors that affect the randomness of individual votes, such as marginal attention cost, are irrelevant in the determination of equilibrium policies, and they should not be regarded as sources of the aggregate-level voter uncertainty that candidates face in probabilistic voting models. 

\paragraph{Electoral competition with costly information acquisition} 
Several recent papers introduce costly information acquisition into electoral competition models.\footnote{A common feature of these works is that voters can take actions that affect the uncertainties they face. This possibility is ruled out by \cite{snyder}, \cite{glaeseretal},  \cite{gulpesendorfer} and \cite{aragones2}, in which voters treat the partial observability of the fundamentals as exogenously given.  

One should not confuse such partial observability with the ``strategic ambiguity'' of the political messages as studied in \cite{shepsle}, \cite{alesina}, \cite{callanderwilson} and the follow-up works. In the former case, policies are vague because voters lack the capacity to fully absorb their contents. In the latter case, voters fully absorb the messages announced by candidates, but messages are only partially revealing of the fundamentals due to reasons such as risk preferences, asymmetric information and behavioral considerations.}  \cite{matejka} examines a probabilistic voting model with office-motivated candidates. Unlike our model where policy uncertainty arises from policy preferences, there voters can reduce the variances of normally distributed random variables equal to the policies plus exogenous shocks, and the cost of variance reduction can depend on the candidates' informational attributes such as transparency and media coverage. As noted by these authors, their model produces divergence, apart from other illuminating results, if candidates differ by information attributes. Our notion of policy extremism differs, and our analysis assumes symmetry. 

\cite{yuksel} allows the voters of probabilistic voting models to partially observe their aggregate preference shock. In the case of single-issued policies, improvements in the information technology is shown to produce policy divergence through enlarging the information gap between players. 

 \cite{prato} develops a model of electoral communication in which paying attention increases the chance that the voter discovers the winning candidate's plan once in office. Surprisingly, too much attention can be a curse, because it may lead incompetent candidates to make inefficient policy choices. 

\paragraph{Incomplete information about candidates' policy preferences} A key element of our model is a static incomplete information game in which candidates are privately informed of their policy preferences, and they care about voters through the channel of winning probability. Private information is modeled as a type that takes finitely many values, each specifying the candidate's utility functions in cases where he wins the election and not. Prior studies such as \cite{banks}, \cite{kartik}, \cite{callander}, \cite{callanderwilkie} tend to work with specific preference and information structures, and some of them can be nested by our framework after careful modifications (more on this later). Nevertheless, we view our games as distinct ones along other important dimensions, such as whether candidates can take actions besides setting policies or not. 

 \paragraph{Rational inattention}  
 Early development of the literature on rational inattention (hereinafter, RI) pioneered by \cite{sims}, \cite{sims1}, \cite{riprice} and \cite{woodford1} sought to explain the stickiness of macroeconomic variables by entropic information costs. Recently, a great deal of effort by \cite{riconsumer}, \cite{yang0}, \cite{yang}, \cite{ravid}, \cite{martin}, etc., has been devoted to understanding the impact of RI for strategic interactions. A common thread that runs through these works is the flexibility that pertains to RI attention allocation. In our case, such flexibility paves the way to sharp equilibrium characterizations, as it ensures that the monotonicity properties of voters' preferences will pass seamlessly along to optimal attention strategies. The construction combines the results of  \cite{mckay} and \cite{yang} with techniques in monotone comparative statics.

\paragraph{Media} Our approach to studying media bias is borrowed from \cite{handbooktheory}, and our investigation of its joint effect with limited voter attention is absent from the existing political models such as \cite{stromberg}, \cite{dugganmartinelli} and  \cite{gulpesendorfermedia}. By modeling media as an exogenous technology, we abstract away from the issue of self-interested media, and we refer the reader to the handbook chapters of \cite{handbooktheory}, \cite{handbookempirical} and \cite{handbookprat} for thorough reviews of the existing literature. Embedding self-interested media into political models with rationally inattentive voters is the subject matter of our companion work \cite{fragmentation}. 

\bigskip

The remainder of this paper proceeds as follows: Section \ref{sec_baseline} introduces the model setup; Section \ref{sec_eqm} conducts equilibrium analysis; Section \ref{sec_noisy} investigates an extension of the baseline model; Section \ref{sec_conclusion} concludes. Omitted proofs and supplementary materials can be found in Appendices \ref{sec_voter}--\ref{sec_online}.

\section{Baseline Model}\label{sec_baseline}
\subsection{Setup}\label{sec_setup}
\paragraph{Primitives} There is a unit mass of infinitesimal voters and two candidates named $\alpha$ and $\beta$. Each player has a \emph{type} that affects his valuations of the policies in $\Theta=[-1,1]$, and the distribution of types is independent across players. In particular, voters' types follow a continuous distribution $F$ with full support $\Theta$ and zero median, whereas each candidate $c$'s type is drawn from a vector space $T$ according to a finite distribution $P_c$, $c=\alpha,\beta$. In what follows, let $t$ and $\widetilde{t}_c$ denote the type of a typical voter and candidate $c$, respectively, and divide the voters into \emph{pro-$\alpha$} voters, the \emph{median voter} and \emph{pro-$\beta$ voters}  based on whether their types belong to $\Theta_{\alpha}=[-1,0)$, $\Theta_0=\{0\}$ or $\Theta_{\beta}=(0,1]$. 


 Each candidate $c$ has access to the policies in a \emph{finite subset} $A_c$ of $\Theta_c$.\footnote{This assumption is often stronger than needed, and its role will be commented on after the statement of Assumption \ref{assm_u}.} In the case where candidate $c$ assumes office and implements policy $a$,  the utility of type $t$ voter, the winning candidate and the losing candidate is $u(a,t)$, $u_+(a, t_c)$ and $u_-(a, t_{-c})$, respectively. The functions $u: \Theta \times \Theta \rightarrow \mathbb{R}$, $u_+: \Theta \times T \rightarrow \mathbb{R}$ and $u_-: \Theta \times T \rightarrow \mathbb{R}$ are continuous in the first argument, and their properties will be further discussed in Section \ref{sec_eqm}. For now, it is worth noting that our candidates can have  office and policy motivations, both in general forms, as their utilities can depend on who is in charge and which policy is being implemented. 
 
\paragraph{Election game} We add three ingredients to Downs' (1957) model: random policy preferences, limited voter attention and noisy news. Time evolves as follows:

\begin{enumerate}
\item Nature draws types for players;

\item candidates observe their own types and simultaneously propose policies denoted by $a_c$, $c=\alpha, \beta$;

\item the press releases news $\bm{\omega}$ about the \emph{policy profile} ${\bf{a}}=\left(a_{\alpha}, a_{\beta}\right)$; 

\item voters \emph{attend to politics} and vote for one of the candidates;

\item winner is determined by simple majority rule with even tie-breaking and implements his policy proposal in Stage 2.
\end{enumerate}

\paragraph{Candidate's strategy}  Each candidate $c$'s strategy is $\sigma_c: \supp\left(P_c\right) \rightarrow \Delta(A_c)$, and the profile of candidate strategies is $\sigma=(\sigma_{\alpha}, \sigma_{\beta})$. The policy profile $\widetilde{\bf{a}}$ induced by $\sigma$ is a non-degenerate random variable if $\vert \supp(\sigma) \rvert \geq 2$, in which case we simply say that $\sigma$ \emph{is non-degenerate}. 

\paragraph{Voter's problem}  We make two assumptions about voters.  First, voting is \emph{expressive} as in most election models, meaning that each voter chooses his preferred candidate based on the information available. 

 Second, attending to politics helps voters reduce policy uncertainty and decide which candidate to choose, and the cost is proportional to the mutual information between the news and the decision. We motivated this assumption in Section \ref{sec_intro} and will continue to do so in Section \ref{sec_cost}. 

The baseline model deals with the benchmark case where news coincides with the policy profile itself, i.e., $\bm{\omega}={\bf{a}}$. By Lemma 1 of \cite{mckay}, we can summarize the decisions of an arbitrary voter $t$ by an \emph{attention strategy} $m_t: A_{\alpha} \times A_{\beta} \rightarrow \left[0,1\right]$, where $m_t({\bf{a}})$ represents the probability that voter $t$ chooses candidate $\beta$ under policy profile $\bf{a}$. Let 
\begin{equation*}
v\left({\bf{a}}, t\right)=u\left(a_{\beta}, t\right)-u\left(a_{\alpha}, t\right)
\end{equation*}
be the voter's differential utility from choosing candidate $\beta$ over candidate $\alpha$ under policy profile $\bf{a}$, and let 
\begin{equation*}
V_t\left(m_t,\sigma\right)=\sum_{{\bf{a}} \in \supp(\sigma)} m_t\left({\bf{a}}\right) v\left({\bf{a}}, t\right) \sigma({\bf{a}}) 
\end{equation*}
be the expected differential utility under any given strategy profile $\left(m_t, \sigma\right)$. The voter's expected payoff is then 
\begin{equation*}
V_t\left(m_t, \sigma\right)-\mu \cdot I\left(m_t,\sigma\right), 
\end{equation*}
where the second term of the above expression is the attention cost. 

\paragraph{Candidate's payoff}  For any policy profile $\bf{a}$, define 
\begin{equation}\label{eqn_wp}
w\left({\bf{a}}\right)=\begin{cases}
0 & \text{ if } \int m_t\left({\bf{a}}\right) dF(t)<\frac{1}{2},  \\ 
\frac{1}{2} & \text{ if } \int m_t\left({\bf{a}}\right) dF(t)=\frac{1}{2}, \\
1 & \text{ if } \int m_t\left({\bf{a}}\right) dF(t)>\frac{1}{2}, 
\end{cases}
\end{equation}
and let $w_{\alpha}({\bf{a}})=1-w({\bf{a}})$ and $w_{\beta}({\bf{a}})=w({\bf{a}})$ be the winning probability of candidate $\alpha$ and $\beta$, respectively. Let $m=\left(m_t\right)_{t \in \Theta}$ denote the attention strategies of all voters. Under any strategy profile $\left(m, \sigma\right)$, the expected payoff of candidate $c$ is 
\begin{equation}\label{eqn_candidate}
V_c(m, \sigma)= \mathbb{E}_{m, \sigma}\left[w_c\left(\widetilde{\bf{a}}\right)  u_+\left(\widetilde{a}_c, \widetilde{t}_c\right) + \left(1-w_c\left(\widetilde{\bf{a}}\right)\right) u_-\left(\widetilde{a}_{-c}, \widetilde{t}_c\right) \right].
\end{equation}

\paragraph{Equilibrium} A strategy profile $\left(m^*, \sigma^*\right)$ constitutes a \emph{Bayes Nash equilibrium} of the election game if 
\begin{enumerate}
\item each $m_t^*$ maximizes voter $t$'s expected payoff, taking $\sigma^*$ as given, i.e., $\forall t$, 
\begin{equation*}
m_t^* \in \argmax_{m_t: A_{\alpha} \times A_{\beta}\rightarrow [0,1]} V_t\left(m_t, \sigma^*\right) - \mu \cdot I\left(m_t, \sigma^*\right); 
\end{equation*}

\item each $\sigma_c^*$ maximizes candidate $c$'s expected payoff, taking $m^*$ and $\sigma_{-c}^*$ as given, i.e., $\forall c$, 
\begin{equation*}
 \sigma_c^* \in \argmax_{\sigma_c} V_c\left(m^*,\sigma_c, \sigma_{-c}^*\right). 
\end{equation*}
\end{enumerate}

\begin{rem}
In the case where not all feasible policy profiles arise on the equilibrium path, we adopt the standard refinement in the RI literature that any off-equilibrium path policy profile can be observed by voters without errors,\footnote{Such best-response arises in the limit as we restrict candidates to adopting totally mixed strategies and take the probability of trembling to zero. See, e.g., \cite{ravid} for further discussions. } i.e., $\forall {\bf{a}} \notin \supp\left(\sigma^*\right)$, 
\begin{equation}\label{eqn_perfect}
m_t^*\left({\bf{a}}\right)=\begin{cases}
1 & \text{ if } v\left({\bf{a}}, t\right)>0,\\
0 & \text{ else}. 
\end{cases}
\end{equation}
As demonstrated later, this assumption plays a minimal role in our qualitative predictions, and it will be disposed of in Section \ref{sec_noisy} where news is assumed to be a noisy signal drawn from a full-support distribution. 
\end{rem}

\subsection{Rational Inattentive Voters: Part I}\label{sec_cost}
 Downs' (1957) thesis has two main elements: attention is costly and attention allocation is rational. Both elements are grounded in reality.

\paragraph{Attention is costly}  Throughout the analysis, attending to politics means taking actions that help reduce policy uncertainties and make voting decisions. Examples include: the processing and absorption of political contents, conversing with colleagues, family members and friends, and holding debates and deliberations that facilitate thinking  (\cite{gentzkow2}; \cite{pewmediahabit}). 

In today's information-rich world, our attention capacity is rather limited. Among other things, greater exposures to distractions and media choices seem to have significant behavioral impacts, as they have shifted our attention from politics to entertainment, and have lowered the levels of engagement with political contents among mobile device users (\cite{cabletv}; \cite{prior}; \cite{teixeira};  \cite{dunaway}; \cite{perez}).\footnote{For example, \cite{teixeira} estimates that the cost of consumer attention has risen by seven to nine times in the past two decades. \cite{perez} reports that mobile apps for gaming, messaging, music and social alone seize up to three hours of U.S. users' time per day as of 2016.}

\paragraph{Attention as a selective communication channel} The idea that voters focus on the part of the politics they care most about is not new.  A famous example dates back to the 1950s, when Eisenhower's differing attitudes towards agriculture and industry led farmers and laborers to tell different stories about his leadership (\cite{campbell}). Recent studies of  \cite{gentzkow2} and \cite{iyengar} find that people obtain political news through regular exposures to a selective yet wide range of sources. This suggests that we can model attention as a communication channel  \`{a} la \cite{shannon}.

Here is how things work. Imagine there is a myriad of sources, each conveys a bit of information and together paint a holistic picture of the policies. Every time before making a decision, the voter visits a number of sources sequentially at a constant marginal cost, and the problem is repeated for many times. To save costs, the voter optimizes over the sources based on the decision he wishes to make. For example, if the decision is to vote unconditionally for one of the candidates, then no consultation is needed. If the decision varies significantly with certain aspects of the policies but not others, then priority should be given to the sources that tell the most relevant stories. On average, the minimum number of the visited sources is approximately equal to the mutual information between the source data and the decision, and this is the famous result of  \cite{shannon}.

Following \cite{sims} and \cite{sims1}, we assume the following attention cost function throughout the analysis: 
\[\underbrace{\mu}_\text{marginal attention cost} \cdot \underbrace{I\left(m_t, \sigma\right)}_\text{mutual information}. \]
The parameter $\mu>0$ is called the \emph{marginal attention cost} and is used to  capture the aforementioned cost shifters. In principle, we could allow this parameter to vary with the voter's   predisposition, but we choose not to do this, because nothing changes except when we conduct in-depth analysis of voter behaviors in Appendix \ref{sec_voter}. 

The term $I\left(m_t, \sigma\right)$ is the mutual information between the policies and the voting decision, jointly distributed according to $(m_t, \sigma)$. Intuitively, if a voter pays no attention to politics, then the uncertainty he faces can be captured by the entropy of the policy profile: 
\[
H\left(\sigma\right)=-\sum_{{\bf{a}} \in \supp(\sigma)} \sigma\left({\bf{a}}\right)\log \sigma\left({\bf{a}}\right). 
\]
If he gathers information as above, then he has some partial idea about the policies by the decision-making time. For any given attention strategy $m_t$, let 
\[
\overline{m}_{t,s}=\begin{cases}
\mathbb{E}_{\sigma}\left[m_t\left(\widetilde{\bf{a}}\right)\right]& \text{ if } s=\beta,\\
\mathbb{E}_{\sigma}\left[1-m_t\left(\widetilde{\bf{a}}\right)\right] & \text{ if } s= \alpha,
\end{cases}
\]
be the marginal probability that the decision is $s=\alpha, \beta$. The residual uncertainty conditional on the decision being $s$ is then 
\[
H\left(\sigma_t\left(\cdot \mid s\right)\right)=-\sum_{{\bf{a}} \in \supp(\sigma)} \sigma_t\left({\bf{a}} \mid s\right) \log\sigma_t\left({\bf{a}} \mid s\right), 
\]
where 
\[
\sigma_t\left({\bf{a}} \mid s\right)=\begin{cases} \displaystyle \frac{m_t\left({\bf{a}}\right) \sigma\left({\bf{a}}\right)}{\overline{m}_{t,\beta}} &\text{ if } s =\beta,\\
 \displaystyle \frac{\left(1-m_t({\bf{a}})\right) \sigma\left({\bf{a}}\right)}{\overline{m}_{t,\alpha}} &\text{ if } s =\alpha,
\end{cases}
\]
is the conditional probability that the policy profile is $\bf{a}$, whenever the Bayes' rule applies. The reduction in entropy, also termed mutual information: 
\begin{equation*}
I\left(m_t, \sigma\right)=H\left(\sigma\right)-\sum_{s=\alpha, \beta} \overline{m}_{t,s} H\left(\sigma_t\left(\cdot \mid s\right)\right)
\end{equation*}
captures the amount of the time and effort one spends on information processing and digestion. This term increases as the decision becomes more informed, equaling zero if the latter is invariant with the policies and $H(\sigma)$ if it is to always choose one's preferred candidate. 

\paragraph{Optimal attention strategy} The next lemma gives characterizations of optimal attention strategies: 

\begin{lem}\label{lem_foc}
For any given profile $\sigma$ of candidate strategies, the optimal attention strategy of any voter $t$ uniquely exists. Let $m_t: A_{\alpha} \times A_{\beta} \rightarrow [0,1]$ be as such, and let \[\overline{m}_t=\mathbb{E}_{\sigma}\left[m_t\left(\widetilde{\bf{a}}\right)\right]\] be the average probability that voter $t$ chooses candidate $\beta$ under $\sigma$. 
Then, 
\begin{enumerate}[(i)]
\item if $\mathbb{E}_{\sigma}\left[\exp\left(v\left(\widetilde{\bf{a}}, t\right)/\mu \right) \right] < 1$, then $\overline{m}_t=0$;
\item if $\mathbb{E}_{\sigma}\left[\exp\left(-v\left(\widetilde{\bf{a}}, t\right)/\mu\right) \right] <1$, then $\overline{m}_t=1$;
\item otherwise $\overline{m}_t \in \left(0,1\right)$ and for any ${\bf{a}} \in \supp(\sigma)
$: 
\[m_t\left({\bf{a}}\right)=\frac{\Lambda_t\exp\left(\frac{v\left({\bf{a}},t\right)}{\mu}\right)}{\Lambda_t\exp\left(\frac{v\left({\bf{a}},t\right)}{\mu}\right)+ 1}, \] 
where
\[\Lambda_t=\frac{\overline{m}_t}{1-\overline{m}_t}\]
is the likelihood that voter $t$ chooses candidate $\beta$ over candidate $\alpha$ under $\sigma$; 
\item in Part (iii), $\overline{m}_t$ is strictly increasing in $t$ if the function $u$ has strict increasing differences in $\left(a,t\right)$. 
\end{enumerate}
\end{lem}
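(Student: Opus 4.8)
The plan is to treat each voter's problem as a standard binary-action rational-inattention problem and to exploit the fact that, on $\supp(\sigma)$, the objective is concave in the channel $m_t$. First I would rewrite the mutual information as $I(m_t,\sigma)=H(\overline{m}_t)-\mathbb{E}_\sigma[H(m_t(\widetilde{\mathbf{a}}))]$, where $H(p)=-p\log p-(1-p)\log(1-p)$ is the binary entropy, so that the objective becomes
\[
\sum_{\mathbf{a}\in\supp(\sigma)}\sigma(\mathbf{a})\big[m_t(\mathbf{a})\,v(\mathbf{a},t)+\mu H(m_t(\mathbf{a}))\big]-\mu H(\overline{m}_t).
\]
Since $V_t$ is linear and mutual information is convex in the conditional law $m_t$ for fixed $\sigma$, the objective is concave on the compact cube $[0,1]^{\lvert\supp(\sigma)\rvert}$; existence of a maximizer is then immediate, and uniqueness will follow once the first-order conditions are shown to pin the strategy down.

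Next I would derive the first-order (Lagrange/KKT) conditions. Differentiating in $m_t(\mathbf{a})$ and using $\partial\overline{m}_t/\partial m_t(\mathbf{a})=\sigma(\mathbf{a})$ gives, for every $\mathbf{a}\in\supp(\sigma)$ at an interior optimum, $\mu\log\frac{m_t(\mathbf{a})}{1-m_t(\mathbf{a})}=v(\mathbf{a},t)+\mu\log\Lambda_t$ with $\Lambda_t=\overline{m}_t/(1-\overline{m}_t)$, which rearranges to the logit form in Part (iii). The corner cases are handled by evaluating the directional derivative of the objective at $\overline{m}_t=0$ and $\overline{m}_t=1$: pushing mass toward $\beta$ away from the all-$\alpha$ corner is unprofitable exactly when $\mathbb{E}_\sigma[\exp(v/\mu)]\le 1$ (Part i), and symmetrically for the all-$\beta$ corner with $\mathbb{E}_\sigma[\exp(-v/\mu)]\le 1$ (Part ii). To nail down the interior solution and its uniqueness, I would substitute the logit form into $\overline{m}_t=\sum_{\mathbf{a}}\sigma(\mathbf{a})m_t(\mathbf{a})$ and reduce everything to the scalar equation $\Phi(\Lambda_t,t)=0$, where
\[
\Phi(\Lambda,t)=\sum_{\mathbf{a}\in\supp(\sigma)}\sigma(\mathbf{a})\,\frac{e^{v(\mathbf{a},t)/\mu}}{1+\Lambda\,e^{v(\mathbf{a},t)/\mu}}-\frac{1}{1+\Lambda}.
\]
The boundary value $\Phi(0,t)=\mathbb{E}_\sigma[e^{v/\mu}]-1$ and the sign of $\Phi$ as $\Lambda\to\infty$ recover exactly the thresholds in (i)--(ii); and a short computation shows that at any zero of $\Phi$ one has $\partial_\Lambda\Phi=\tfrac{1}{(1+\Lambda)^2}-\sum_{\mathbf{a}}\sigma(\mathbf{a})f(\mathbf{a})^2\le 0$, where $f(\mathbf{a})=e^{v(\mathbf{a},t)/\mu}/(1+\Lambda e^{v(\mathbf{a},t)/\mu})$ satisfies $\sum_{\mathbf{a}}\sigma(\mathbf{a})f(\mathbf{a})=1/(1+\Lambda)$ at the zero, so that $\sum_{\mathbf{a}}\sigma(\mathbf{a})f(\mathbf{a})^2\ge(\sum_{\mathbf{a}}\sigma(\mathbf{a})f(\mathbf{a}))^2$ by Jensen. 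Hence $\Phi(\cdot,t)$ downcrosses zero exactly once, giving a unique interior $\Lambda_t$.

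For Part (iv), the key observation is that $a_\beta>0>a_\alpha$ for every feasible profile, so strict increasing differences of $u$ in $(a,t)$ make $v(\mathbf{a},t)=u(a_\beta,t)-u(a_\alpha,t)$ strictly increasing in $t$, uniformly over $\mathbf{a}\in\supp(\sigma)$. Because $\Lambda\mapsto x/(1+\Lambda x)$ is strictly increasing in $x$, raising every $e^{v(\mathbf{a},t)/\mu}$ shifts $\Phi(\Lambda,\cdot)$ strictly upward at each $\Lambda>0$; combined with the single-crossing-from-above property just established, the unique zero $\Lambda_t$ must move strictly to the right as $t$ increases, and therefore $\overline{m}_t=\Lambda_t/(1+\Lambda_t)$ is strictly increasing in $t$.

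I expect the main obstacle to be Part (iv), specifically obtaining monotonicity of the scalar aggregate $\overline{m}_t$ even though the normalizing constant $\Lambda_t$ is itself endogenous and $u$ need not be differentiable (or even continuous) in $t$. The device that makes this clean is the reduction to the single equation $\Phi(\Lambda_t,t)=0$ together with the variance/Jensen inequality establishing that $\Phi$ crosses zero only from above; this converts the comparative-statics question into a robust single-crossing argument that avoids any appeal to the implicit function theorem. A secondary point requiring care is verifying that the optimum is genuinely interior throughout the regime of Part (iii), and that the corner thresholds in (i)--(ii) partition the type space into the lower, middle, and upper intervals consistent with the asserted monotonicity.
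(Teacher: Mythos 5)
Your proof is correct, but it takes a genuinely different and more self-contained route than the paper's. The paper disposes of Parts (i)--(iii) by citation---Theorem 1 of \cite{mckay} and Proposition 2 of \cite{yang}---and proves Part (iv) by monotone comparative statics: it reduces the voter's problem to the scalar choice of the likelihood ratio $\Lambda$, asserts that the reduced objective has strict increasing differences in $(\Lambda,t)$, and invokes \cite{milgromshannon}. You instead derive (i)--(iii) from first principles (concavity of the objective in the channel, the decomposition $I(m_t,\sigma)=H(\overline{m}_t)-\mathbb{E}_\sigma[H(m_t(\widetilde{\mathbf{a}}))]$, and the KKT conditions), and you run the comparative statics of Part (iv) through the first-order condition $\Phi(\Lambda_t,t)=0$ rather than through the objective: the Jensen/variance computation shows $\Phi(\cdot,t)$ can cross zero only from above, and strict increasing differences of $u$ make $\Phi$ strictly increasing in $t$ (your remark that $a_\alpha<0<a_\beta$ is exactly what converts increasing differences into strict monotonicity of $v(\mathbf{a},\cdot)$), so the unique zero moves strictly right. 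The paper's route buys brevity and robustness---Milgrom--Shannon needs no differentiability and no uniqueness of the optimum; your route buys a constructive derivation of the logit formula and the corner thresholds, an explicit uniqueness argument for the interior solution (which the paper simply inherits from the cited theorems), and a strict conclusion in (iv) that the terse Milgrom--Shannon citation delivers only after an additional strict-single-crossing step. Two details are worth tightening. First, $\partial_\Lambda\Phi\le 0$ at a zero is strict exactly when $v(\cdot,t)$ is non-constant on $\supp(\sigma)$; if $v(\cdot,t)$ is constant the voter is either in a corner regime (so Part (iii) is moot) or, when $v\equiv 0$, every constant strategy is optimal and uniqueness genuinely fails---a knife-edge the lemma statement itself glosses over---so you should note that strict Jensen is available whenever (iii) has bite. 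Second, your corner analysis should record that the infinite boundary slopes of the binary entropy rule out maximizers with $m_t(\mathbf{a})\in\{0,1\}$ for only some profiles, which is what confines attention to the two constant corners and the fully interior logit solution.
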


\begin{proof}
Parts (i) - (iii) are immediate from Theorem 1 of \cite{mckay} and Proposition 2 of \cite{yang}. The proof of Part (iv) can be found in Appendix \ref{sec_proof_cost}. 
\end{proof}

According to Lemma \ref{lem_foc}, the optimal attention strategy can take two forms, depending on whether the gain from attending to politics justifies the cost or not. In Parts (i) and (ii), the voter clearly prefers one candidate over another and acts deterministically without paying attention. In Part (iii), the voter has no clear preference a priori, and the result of him paying attention is a random decision that follows a shifted logit rule.

A closer look at the solution reveals interesting patterns. First, $m_t\left({\bf{a}}\right)$ depends on ${\bf{a}}$ \emph{only} through $v\left({\bf{a}},t\right)$, because distinguishing policy profiles of the same $v$ value does no good to decision-making but incurs redundant costs. 

\bigskip 

\begin{figure}[!h]
    \centering
     \includegraphics[height=8cm]{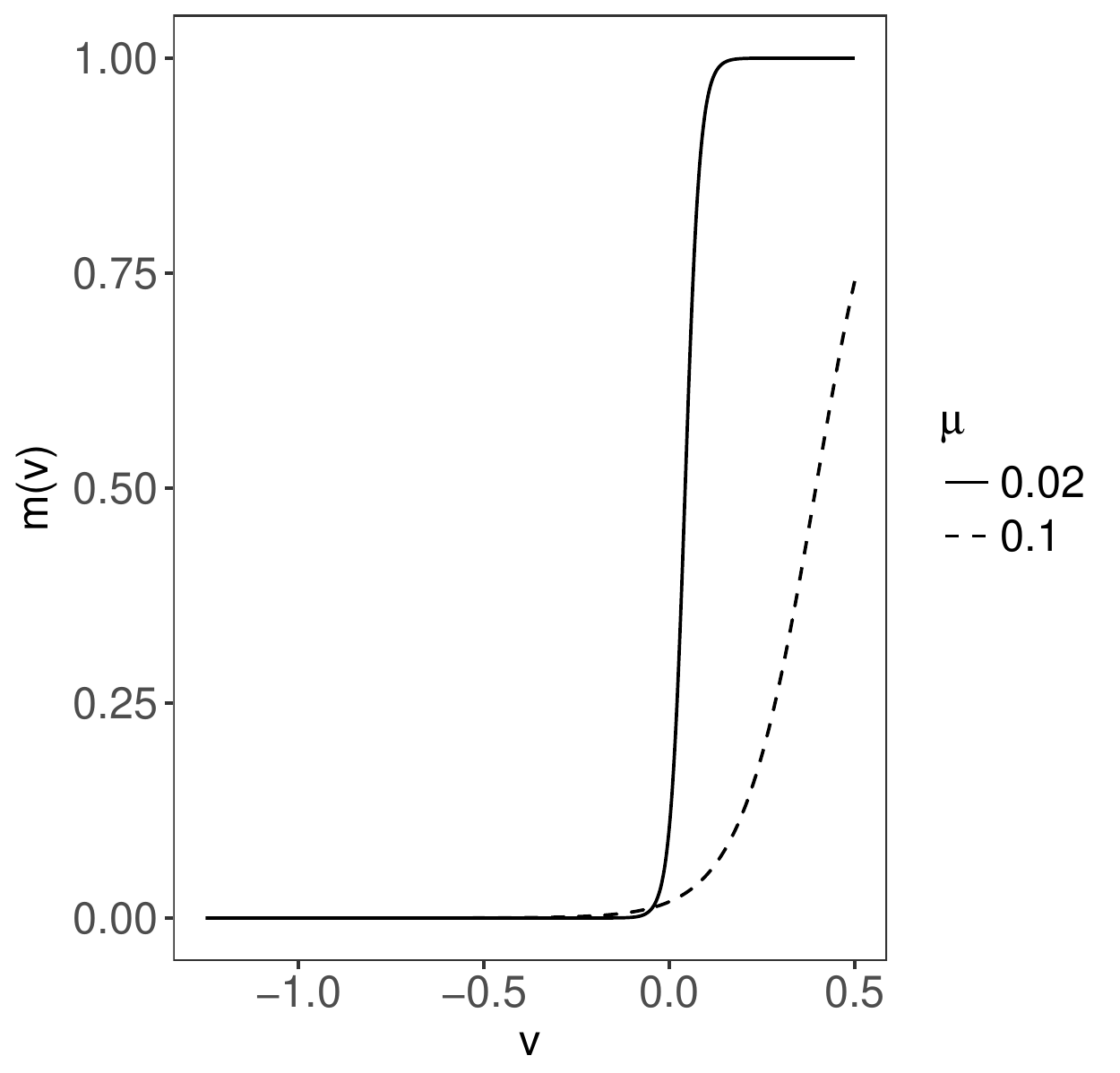}
    \caption{Plot optimal attention strategy against $v\left({\bf{a}},t\right)$ for $t=-.25$: $a_c$ is uniformly distributed on $\Theta_c$, $c=\alpha, \beta$, and $u(a,t)=-|t-a|$.}\label{figure_m}
\end{figure} 

 Second, the optimal attention strategy, when formulated as a function of $v$, is \emph{increasing} in its argument, hence one should choose a candidate more often as the gain increases. This monotonicity property, albeit a simple one, plays an important role in the upcoming analysis.

Third, optimal attention allocation is selective. While knowing the exact policy positions wouldn't hurt had attention been costless, such detailed information is rationally ignored even in light of an arbitrarily small attention cost. Intead, the focus becomes on which policy position is more preferable than the other, and the corresponding attention strategy resembles a step function that jumps from zero to one around $v=0$ as depicted in Figure \ref{figure_m}.\footnote{In the RI literature, ``focusing on event $E$'' means clearly distinguishing whether $E$ is true or not and varying the decision probability accordingly. This is different than the notion of focus as studied in \cite{nunnari}, which means making salient features even more salient.}  As attention cost increases, voters no longer  differentiate the varying positions as much, and the attention strategy flattens out accordingly. 

In Appendix \ref{sec_voter}, we continue the investigation of selective attention allocation, documenting interesting consequences such as (1) selective exposure, confirmatory biases and seeking big occasional surprises; as well as (2) the non-monotone variations of voter behaviors with personal characteristics such as political predisposition and marginal attention cost. We conclude with discussions of how we should interpret voter data and evaluate the effects of changing market conditions on mass polarization.

\section{Equilibrium Analysis}\label{sec_eqm}
In this section, we assume that the environment is symmetric around the median voter and focus on the symmetric equilibria of the election game:

\begin{assm}\label{assm_symmetry}
The environment satisfies the following properties:
\begin{enumerate}[(i)]
\item  $\widetilde{u}(a,t)=\widetilde{u}(-a,-t)$ for all $\widetilde{u} \in \left\{u, u_+, u_-\right\}$ and $(a,t) \in \mathrm{Dom}\left(\widetilde{u}\right)$;
\item $F(t)=1-F(-t)$ for all $t \in \Theta$ and $P_{\beta}(t)=P_{\alpha}(-t)$ for all $t \in T$.
\end{enumerate}
\end{assm}

\begin{defn}
A strategy profile $\left(m, \sigma\right)$ is \emph{symmetric around the median voter} if
\begin{enumerate}[(i)]
\item  $\sigma_{\beta} \left(a\mid t\right)=\sigma_{\alpha}\left(-a \mid -t\right)$ for all $a \in A_{\beta}$ and $t \in \supp\left(P_{\beta}\right)$; 
\item  $m_t\left(-a,a'\right)=1-m_{-t}\left(-a',a\right)$ for all $\left(-a,a'\right)\in A_{\alpha}\times A_{\beta}$ and $t \in \Theta$. 
\end{enumerate}
\end{defn}

The analysis makes the following assumptions about voters' utility function: 
\begin{assm}\label{assm_u}
The function $u: \Theta \times \Theta \rightarrow \mathbb{R}$ is continuous and satisfies the following properties: 
\begin{enumerate}[(i)]
\item  $u\left(\cdot, t\right)$ is strictly increasing on $[-1,t]$ and is strictly decreasing on $[t,1]$ for all $t$; 
\item $u(\cdot,t)$ is concave for all $t$;
\item $u(a',t)-u(a,t)$ is strictly increasing in $t$ for all $a'>a$; 
\item there exists $\kappa>0$ such that for all $t>0$, \[\min_{{\bf{a}} \in A_{\alpha} \times A_{\beta}} v\left({\bf{a}},t\right)-v\left({\bf{a}},0\right) >\kappa t.\]
\end{enumerate}
\end{assm}

All except Part (iv) of Assumption \ref{assm_u} are standard, with Part (iii) asserting that voters' utilities exhibit strict increasing differences between policies and types, hence pro-$\beta$ voters more prefer pro-$\beta$ policies to pro-$\alpha$ ones than pro-$\alpha$ voters do. Part (iv) of Assumption \ref{assm_u} says that all pro-$\beta$ voters (by symmetry, pro-$\alpha$ voters) value the policies much differently than the median voter does. Under the assumption that $A_{c}$ is a finite subset of $\Theta_c$, this condition holds true for many standard utility functions. Notable examples  including $u(a,t)=-|t-a|$, for which $\kappa \geq 2\min A_{\beta}$, and $u(a,t)=-\left(t-a\right)^2$, for which $\kappa\geq 4 \min A_{\beta}$. 

The next definition builds on Lemma \ref{lem_foc} and will prove useful:\footnote{Under Assumptions \ref{assm_symmetry} and \ref{assm_u}, we can never have $\overline{m}_t=1$, and the other condition for interior solution $\mathbb{E}_{\sigma}\left[\exp\left(-v\left(\widetilde{\bf{a}}, t\right)/\mu \right) \right] < 1$ is automatically satisfied. } 
\begin{defn}\label{defn_attention}
Under any symmetric profile $\sigma$ of candidate strategies, a voter $t<0$ is said to \emph{pay attention to politics} if $\overline{m}_t \in (0,1)$, or equivalently
\[\mathbb{E}_{\sigma}\left[\exp\left(v\left(\widetilde{\bf{a}}, t\right)/\mu \right) \right] \geq 1.\] 
\end{defn}


The remainder of this section is organized as follows: Sections \ref{sec_example} presents an illustrative example; Section \ref{sec_matrix} develops a matrix representation of equilibrium outcomes; Section \ref{sec_eqm} conducts equilibrium analysis; Sections \ref{sec_evidence} and \ref{sec_discussion} discuss empirical evidence and related issues. 

\subsection{An Illustrative Example}\label{sec_example}
The next example illustrates the phenomenon of \emph{attention-driven extremism}: 

\begin{example}[label=exa:cont1]
Each candidate can be of either the centrist type $\left(t=\pm t_c\right)$ or the extreme type $\left(t=\pm t_e\right)$ with equal probability, and he can adopt either the centrist position ($a=\pm a_c$), the moderate position ($a=\pm a_m$) or the extreme position ($a=\pm a_e$). The winner and loser's utility function is $R-\delta_+|t-a|$ and $\delta_-|t-a|$, respectively, where $R \geq 0$ represents the office rent, and $\delta_+$ and $\delta_-$ the preference weights on policies. There are three groups of voters with types $t=-\tau, 0, \tau$. Their utility function is $u\left(a,t\right)=-|t-a|$. 

Consider equilibria in which candidates adopt pure, symmetric and non-degenerate strategies. The last restriction is satisfied by all equilibria of the numerical example presented below, and it will be disposed of in later sections. Let $-a_2<-a_1 \leq 0 \leq a_1<a_2$ denote the policies of the varying types of the candidates, each realized with probability $1/2$. A typical policy profile is then ${\bf{a}}_{ij}=\left(-a_i, a_j\right)$, $i, j=1,2$. 

The phenomenon of our interest is composed of two parts: 

\paragraph{Part I: equilibrium policies}  We first characterize equilibrium policies. To this end, we need to aggregate the decision probabilities across individual voters and convert the result into winning probabilities.

Consider first the median voter. By symmetry, this voter equally prefers both candidates on average, i.e., \[\overline{m}_0=\frac{1}{2},\] and an immediate corollary is that \emph{he always pays attention to politics}. Under any given policy profile, he chooses the closest candidate to the center most often, so in particular, 
\[m_0\left({\bf{a}}_{21}\right)>\frac{1}{2}. \]

Consider next pro-$\alpha$ and pro-$\beta$ voters. By Lemma \ref{lem_foc} (iv), these voters are more supportive of the candidates from their own ideological camps, i.e., \[\overline{m}_{-\tau}<\frac{1}{2}<\overline{m}_{\tau}.\] Had we known their average decision  probabilities, we could simply plug them into the logit rule and back out the decision probabilities  under each policy profile. In general, this proof strategy does not work well, because solving the average decision probability has proven to be difficult by the existing RI literature. 

To circumvent this challenge, we instead exploit symmetry, as well as the basic properties of voters' preferences and attention strategies. To illustrate, consider the problem of aggregating voters' decisions under ${\bf{a}}_{21}$:  

\begin{enumerate}[Step 1]
\item By symmetry, we can convert the problem of aggregating symmetric voters' decisions under the same policy profile to that of comparing the same voter's decisions across symmetric policy profiles: 
\[
m_{-\tau}\left({\bf{a}}_{21}\right)+m_{\tau}\left({\bf{a}}_{21}\right)
=1-m_{\tau}\left({\bf{a}}_{12} \right) +  m_{\tau}\left({\bf{a}}_{21}\right)
\]

\item Since the probability that one supports a candidate is increasing in the value thereof, we can further reduce the above problem to that of comparing a single voter's differential utilities  across symmetric policy profiles, i.e., by Lemma \ref{lem_foc},
\[
\sgn m_{\tau}\left({\bf{a}}_{21}\right) - m_{\tau}\left({\bf{a}}_{12}\right)=\sgn v\left({\bf{a}}_{21}, \tau\right)-v\left({\bf{a}}_{12}, \tau\right).
\]

\item Since voters have concave preferences for policies, it follows that
\[
v\left({\bf{a}}_{21}, \tau\right)-v\left({\bf{a}}_{12}, \tau\right) = u\left(a_1, \tau\right)+u\left(-a_1,\tau\right)-\left[u\left(a_2, \tau\right) + u\left(-a_2, \tau\right)\right]  \\
\geq  0.
\]
That is, candidate $\beta$ is more preferred among pro-$\beta$ voters when he is closer to the center rather than the other way around. 

Combining Steps 1-3, we obtain that \[m_{\tau}\left({\bf{a}}_{21}\right)+m_{-\tau}\left({\bf{a}}_{21}\right)\geq 1,\] i.e., candidate $\beta$ is as popular as, if not more so than candidate $\alpha$ is among pro-$\alpha$ and pro-$\beta$ voters. 

\item Using the median voter to break the tie, i.e., $m_0\left({\bf{a}}_{21}\right)>1/2$, we find that candidate $\beta$ wins the election for sure under ${\bf{a}}_{21}$. 
\end{enumerate}

Theorem \ref{thm_main} (i) of 
Section \ref{sec_extremism} generalizes the above described result, showing that under any policy profile of any symmetric equilibrium, the closest candidate to the center wins for sure, whereas equidistant candidates from the center split the votes evenly. While limited attention creates uncertainties for individual voters, it imposes no aggregate-level uncertainty on candidates, for whom the winner is determined the same as in  \cite{downs}.\footnote{The converse of Step 3 says that when the concavity assumption fails, candidate $\beta$ can lose the election despite being closer to the center than candidate $\alpha$ is, thus distinguishing our result from \cite{downs}. }  

\paragraph{Part II: voters' attention} Consider next the policies that capture voters' limited attention. As discussed earlier, the median voter always pays attention to politics, whereas the pro-$\alpha$ and pro-$\beta$ voters do so if and only if
\[
\sum_{i,j \in \{1,2\}}\frac{1}{4} \exp\left(v\left({\bf{a}}_{ij},-\tau\right) \right/\mu) \geq 1.
\]
Tedious but straightforward algebra as in the proof of Lemma \ref{lem_attentionset} reduces the above condition to
\begin{equation}\label{eqn_lowerbound}
 a_2-a_1 \geq \text{ a term strictly increasing in } \mu. 
\end{equation}
The message is clear: as marginal attention cost increases, retaining voters' attention becomes harder and requires that the varying types of the candidates adopt more different policies than before. 

\paragraph{Attention-driven extremism}  Figure \ref{figure_combine} summarizes our results. Under the assumption that the winning candidate has a stronger policy preference than the losing candidate does,\footnote{This assumption is inspired by \cite{calvert}'s original thesis, which asserts that ``Previous political trades or connections make it necessary to treat the policy concerns of some important supporters as a constraint on the candidate's action, who would then behave as though those concerns were his own.'' To us, this suggests that the strength of the policy preference can depend on the winning status, and that the winner can indeed face more constraints than the loser does. The same assertion is made by  \cite{banks} and \cite{callanderwilkie}, which essentially assume that $R> 0$, $\delta_+>0$ and $\delta_-=0$.} the game has two equilibria as depicted in diamonds, in which the centrist candidate adopts the centrist position, whereas the extreme candidate adopts either the moderate position or the extreme position. As demonstrated earlier, factors that affect the randomness of individual votes, such as marginal attention cost, are irrelevant in the determination of equilibrium policies. 

Meanwhile, the black line depicts the boundary at which pro-$\alpha$ and pro-$\beta$ voters are indifferent between (1) paying attention as in the shaded area, and (2) paying no attention as in the unshaded area. As marginal attention cost increases, this line moves northwest, causing an expansion of the unshaded area and a contraction of the shaded area. 

When marginal attention cost is low, the right-hand side of Condition (\ref{eqn_lowerbound}) is close to zero, hence all equilibria capture all voters' attention. As attention cost increases, a truncated subset of the equilibria capture voters' attention, and those that do so exhibit a greater degree of policy differential than before. In Section \ref{sec_extremism}, we argue why it makes sense to use visibility as an equilibrium selection device, especially when the dissemination of political information is costly and involves entities that profit from voters' eyeballs. If so, then increases in the marginal attention cost will leave us with a truncated set of policy distributions that exhibits a greater degree of extremism than ever. This phenomenon, formalized in Theorem \ref{thm_main} of Section \ref{sec_extremism}, is what we term as attention-driven extremism. 

\bigskip 

\begin{figure}[!h]
   \centering
    \includegraphics[width=8cm]{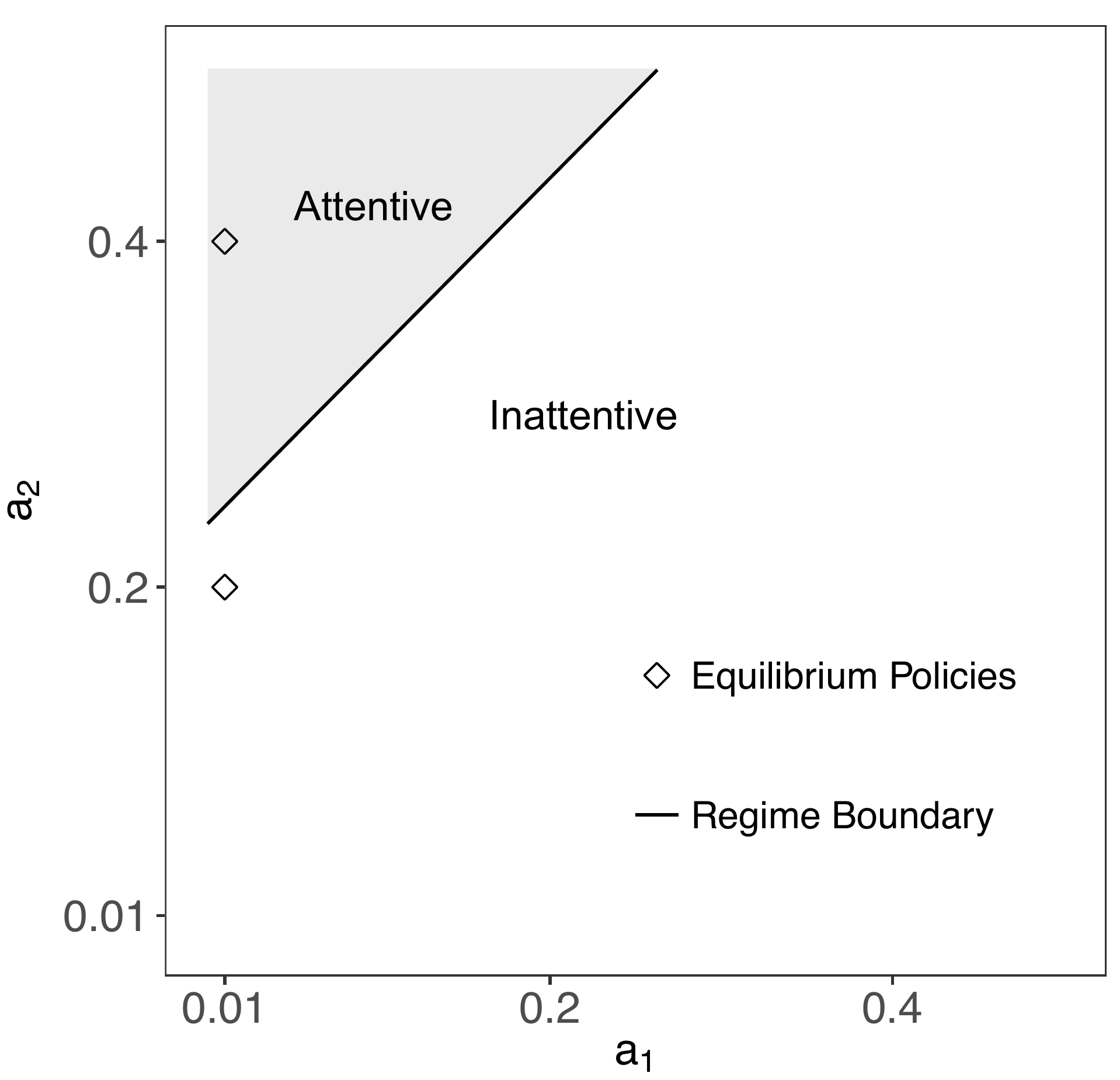}
   \caption{Equilibrium outcomes: $a_c=.01$, $a_m=.2$, $a_e=.4$, $t_c=0.3$, $t_e=0.8$, $R=8$, $\delta_+=12$ and $\delta_-=1$; regime boundary is drawn for $\tau=.001$ and $\mu=10$. }\label{figure_combine}
\end{figure}
\end{example}

\begin{rem}We will discuss the following issues after the statement of the main result: (1) supporting evidence, (2) equilibrium existence, and (3) the roles of policy preference, finite policy space and attention function in the predictions. 
\end{rem}

\subsection{Matrix Representation}\label{sec_matrix}
Let $-a_N<\cdots<-a_1 < 0 < a_1<\cdots<a_N$ denote the policies of the varying types of the candidates, where $N$ is an endogenous integer greater than one if not all types of the same candidate propose the same policy. Let $\bf{A}$, $\bm{\Sigma}$ and $\bf{W}$ be square matrices of order $N$, and denote their $ij^{th}$ entries by ${\bf{a}}_{ij}$, $\sigma_{ij}$ and $w_{ij}$, respectively. Assume throughout that (1) ${\bf{a}}_{ij}=\left(-a_i, a_j\right)$, $\sigma_{ij} > 0$, $\sigma_{ij}=\sigma_{ji}$ and $w_{ij} \in \left\{0, 1/2,1\right\}$ for all $i$ and $j$, and that (2) $\sum_{i, j} \sigma_{ij}=1$. 

 Intuitively, ${\bf{A}}$ and $\bm{\Sigma}$ compile all policy profiles and their probabilities and will therefore be referred to as the \emph{policy matrix} and the \emph{probability matrix}, respectively. Likewise, ${\bf{W}}$ collects candidate $\beta$'s winning probability under each policy profile and will thus be called the \emph{winning probability matrix}. To give a few examples, recall that the probability matrix in the leading example is $\frac{1}{4}{\bf{J}}_2$, where ${\bf{J}}_2$ denotes the $2 \times 2$ matrix of all ones. In the case where policies can be observed by voters without errors as in \cite{downs}, the winning probability matrix is $\widehat{\bf{W}}_N$ with $ij^{th}$ entry being: 
\begin{equation}\label{eqn_symmetry_wp}
\widehat{w}_{ij}=\begin{cases}
0 & \text{ if } i>j,\\
\frac{1}{2} & \text{ if } i=j,\\
1 & \text{ if } i<j.
\end{cases} 
\end{equation}

A tuple $\left[{\bf{A}}, {\bm{\Sigma}}, {\bf{W}}\right]$ can be attained in a symmetric equilibrium of the election game if (1) $\left[ {\bf{A}}, {\bm{\Sigma}} \right]$ is \emph{incentive compatible for the  candidates under $\bf{W}$} (hereinafter, $\bf{W}$-IC), and if (2) $\bf{W}$ can be \emph{rationalized by optimal attention strategies under $\left[{\bf{A}}, {\bm{\Sigma}}\right]$} (hereinafter, $[{\bf{A}}, \bm{\Sigma}]$-rationalizable). Formally, 

\begin{defn}\label{defn_ic}
$[ {\bf{A}}, {\bm{\Sigma}} ]$ is \emph{$\bf{W}$-IC} if there exists a profile $\sigma$ of candidate  strategies such that 
\begin{enumerate}[(i)]
\item the probabilities of the policy profiles under $\sigma$ are given by ${\bm{\Sigma}}$, i.e., 
\begin{equation*}
\sigma({\bf{a}}_{ij})=\sigma_{ij} \text{ } \forall i, j;
\end{equation*}

\item each $\sigma_c$ maximizes candidate $c$'s expected payoff, taking the winning probability matrix ${\bf{W}}$ and the strategy $\sigma_{-c}$ of candidate $-c$ as given, i.e., 
\begin{equation*}
\sigma_c \in \argmax_{\sigma_c'} V_c\left({\bf{W}}, \sigma_c', \sigma_{-c}\right), 
\end{equation*}
where $V_c\left({\bf{W}}, \sigma\right)$ can be obtained from plugging ${\bf{W}}$ and $\sigma$ into Equation (\ref{eqn_candidate}). 
\end{enumerate}
\end{defn}

\begin{defn}\label{defn_rationalizable}
\emph{$\bf{W}$ is $[{\bf{A}}, {\bm{\Sigma}}]$-rationalizable} if  
\begin{equation*}
w_{ij}=w ({\bf{a}}_{ij}) \text{ } \forall i, j, 
\end{equation*}
where $w({\bf{a}}_{ij})$ can be obtained from plugging $\left(m_t\left({\bf{a}}_{ij}\right)\right)_{t \in \Theta}$ under $[{\bf{A}}, {\bm{\Sigma}}]$ into Equation (\ref{eqn_wp}). 
\end{defn}

The matrix representation singles out an underpinning assumption, namely candidates care about voters through the channel of winning probability. This assumption or a slight departure thereof is crucial for the upcoming analysis.

\subsection{Analysis}\label{sec_extremism}
 Fix any probability matrix $\bm{\Sigma}$ of any order $N$, as well as any level $\mu>0$ of marginal attention cost. Define the set of policy matrices that can be attained in the symmetric equilibria of the election game by
\begin{align*}
\mathcal{E}\left({\bm{\Sigma}}, \mu\right)=\left\{{\bf{A}}: \exists {\bf{W}} \text{ s.t. } \begin{matrix}  
[{\bf{A}}, {\bm{\Sigma}}] \text{ is } {\bf{W}}-\text{IC}  \\
 {\bf{W}} \text{ is } [{\bf{A}}, {\bm{\Sigma}}]-\text{rationalizable}
\end{matrix} \right\}. 
\end{align*}
Define the \emph{attention set} of any voter $t<0$ by the policy matrices that induce the voter to pay attention: 
\[\mathcal{A}_t\left(\bm{\Sigma},\mu\right)=\left\{{\bf{A}}: \mathbb{E}_{\left[{\bf{A}}, {\bm{\Sigma}}\right]} \left[\exp\left(v\left(\widetilde{\bf{a}}, t\right)/\mu\right) \right] \geq 1\right\}.\]
Taking intersection yields the set of policy matrices that captures the voter's attention in equilibrium:
\[\mathcal{EA}_t\left({\bm{\Sigma}}, \mu\right)=\left\{{\bf{A}} \in \mathcal{E}\left({\bm{\Sigma}},\mu\right): \mathbb{E}_{\left[{\bf{A}}, {\bm{\Sigma}}\right]} \left[\exp\left(v\left(\widetilde{\bf{a}}, t\right)/\mu\right) \right] \geq 1\right\}.\]
The next theorem gives characterizations of these sets: 

\begin{thm}\label{thm_main}
Assume Assumptions \ref{assm_symmetry} and \ref{assm_u}.  Then,
\begin{enumerate}[(i)]
\item  for any probability matrix $\bm{\Sigma}$ of any order $N$, the set $\mathcal{E}\left(\bm{\Sigma},\mu\right)$ is independent of $\mu$ and is given by 
\[\mathcal{E}\left(\bm{\Sigma}\right)=\left\{{\bf{A}}: [{\bf{A}}, {\bm{\Sigma}}] \emph{ is } \widehat{\bf{W}}_{N}-\emph{IC}\right\}; 
\]

\item let $\bm{\Sigma}$ be any probability matrix of any order $N \geq 2$ for which the set $\mathcal{E}\left({\bm{\Sigma}}\right)$ is non-empty. Then for any $t<0$ such that $v({\bf{a}}, t)>0$ for some ${\bf{a}} \in {\bf{A}}\in \mathcal{E}(\bm{\Sigma})$, the following happen as $\mu$ grows from zero to infinity: 
\begin{enumerate}
\item the set $\mathcal{EA}_t\left({\bm{\Sigma}}, \mu\right)$ shrinks; 
\item $\min\left\{u\left(a_1,0\right)-u\left(a_N,0
\right): {\bf{A}} \in \mathcal{EA}_t\left({\bm{\Sigma}}, \mu\right)\right\}$ increases;
\item the objects in Parts (a) and (b) do not always stay constant.
\end{enumerate}
\end{enumerate}
\end{thm}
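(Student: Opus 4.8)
The plan is to prove Part (i) by showing that, for any $[{\bf{A}}, {\bm{\Sigma}}]$, the winning-probability matrix that can be rationalized by optimal attention is unique and coincides with the Downsian matrix $\widehat{\bf{W}}_N$, regardless of $\mu$. Uniqueness is immediate from Lemma \ref{lem_foc}, which pins down each $m_t$ (hence each $w({\bf{a}}_{ij})$) given $\sigma$. To identify this matrix I would aggregate votes exactly as in Steps 1--4 of the leading example, but for general $N$ and continuous $F$. Writing symmetry of the strategy profile as $m_t(-a_i, a_j) = 1 - m_{-t}(-a_j, a_i)$ and using $F(t) = 1 - F(-t)$, a change of variables gives $\int m_t({\bf{a}}_{ij})\, dF(t) = \tfrac12 + \int_{t<0}\left[m_t({\bf{a}}_{ij}) - m_t({\bf{a}}_{ji})\right] dF(t)$. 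Since $m_t$ is increasing in $v$ (Lemma \ref{lem_foc}) and $v({\bf{a}}_{ij}, t) - v({\bf{a}}_{ji}, t) = g_t(a_j) - g_t(a_i)$ with $g_t(a) := u(a,t) + u(-a,t)$ weakly decreasing in $a>0$ by concavity (Assumption \ref{assm_u}(ii)), the integrand has the sign of $a_i - a_j$; hence $\int m_t({\bf{a}}_{ij})\, dF(t) \gtrless \tfrac12$ according to whether candidate $\beta$ is closer to or farther from the center, which is exactly $\widehat{\bf{W}}_N$. Because $\widehat{\bf{W}}_N$ does not depend on $\mu$, and because candidates value voters only through ${\bf{W}}$ (so that $\widehat{\bf{W}}_N$-IC involves candidate payoffs but not $\mu$), the set $\mathcal{E}({\bm{\Sigma}}, \mu)$ collapses to the $\mu$-free set $\{{\bf{A}}: [{\bf{A}}, {\bm{\Sigma}}] \text{ is } \widehat{\bf{W}}_N\text{-IC}\}$, proving Part (i).

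The delicate point in Part (i) --- and the step I expect to be the main obstacle --- is upgrading the weak inequality to a strict one off the diagonal, so that the rationalized entries are $0$ or $1$ rather than $\tfrac12$. With continuous $F$ the median voter has measure zero and cannot break ties as it does in the discrete example, and the concavity in Assumption \ref{assm_u}(ii) is only weak (e.g.\ $u(a,t) = -|t-a|$ is piecewise linear, so $g_t$ is locally flat for extreme $t$). My plan is to localize near the median: at $t=0$ symmetry gives $g_0(a) = 2u(a,0)$, which is \emph{strictly} decreasing in $a>0$ by Assumption \ref{assm_u}(i); by continuity $g_t(a_j) - g_t(a_i)$ stays strictly signed for $t$ in a neighborhood of $0$, and these near-median voters pay attention (since $\overline m_0 = \tfrac12$ is interior and attention is an open condition). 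Full support of $F$ then delivers a positive-measure set on which the integrand is strictly positive, giving the strict inequality; non-attentive voters contribute $m_t({\bf{a}}_{ij}) - m_t({\bf{a}}_{ji}) = 0$ and are harmless.

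For Part (ii) I would first use Part (i) to write $\mathcal{EA}_t({\bm{\Sigma}}, \mu) = \mathcal{E}({\bm{\Sigma}}) \cap \{{\bf{A}}: \mathbb{E}_{[{\bf{A}}, {\bm{\Sigma}}]}[\exp(v(\widetilde{\bf{a}}, t)/\mu)] \geq 1\}$, so $\mu$ enters only through the attention condition. The key observation for (a) is that $g(\lambda) := \mathbb{E}_{[{\bf{A}}, {\bm{\Sigma}}]}[\exp(\lambda v(\widetilde{\bf{a}}, t))]$ is the moment generating function of $v(\widetilde{\bf{a}}, t)$, hence convex with $g(0) = 1$. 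A one-line convexity estimate (writing $\lambda_1$ as a convex combination of $0$ and $\lambda_2$) then shows $g(\lambda_1) \geq 1 \Rightarrow g(\lambda_2) \geq 1$ for $\lambda_2 > \lambda_1 > 0$, so $\{\lambda > 0 : g(\lambda) \geq 1\}$ is an up-set; equivalently attention holds iff $\mu \leq \mu^*({\bf{A}})$ for a threshold $\mu^*({\bf{A}})$. Thus the attention set, and with it $\mathcal{EA}_t({\bm{\Sigma}}, \mu)$, is nested decreasing in $\mu$. Part (b) is then immediate: since $\mu' > \mu$ implies $\mathcal{EA}_t({\bm{\Sigma}}, \mu') \subseteq \mathcal{EA}_t({\bm{\Sigma}}, \mu)$, minimizing the fixed objective $u(a_1, 0) - u(a_N, 0)$ over the smaller set can only raise its value.

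Finally, for (c) I would exhibit strictness rather than prove it in general: the computation behind Condition (\ref{eqn_lowerbound}), formalized in Lemma \ref{lem_attentionset}, shows that the threshold $\mu^*({\bf{A}})$ genuinely varies across ${\bf{A}} \in \mathcal{E}({\bm{\Sigma}})$, so as $\mu$ increases past such a threshold a less-extreme policy matrix drops out of $\mathcal{EA}_t({\bm{\Sigma}}, \mu)$ and the minimum in (b) jumps up strictly. This confirms that the monotone objects in (a) and (b) do not always stay constant. Overall the conceptual core is Part (i)'s irrelevance of $\mu$ for winning probabilities; given that, the comparative statics of Part (ii) reduce to convexity of an MGF and the trivial monotonicity of a minimum over nested sets.
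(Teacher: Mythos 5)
Your Part (i) is essentially the paper's own proof: the symmetry identity $\int m_t({\bf{a}}_{ij})\,dF=\tfrac12+\int_{t<0}\left[m_t({\bf{a}}_{ij})-m_t({\bf{a}}_{ji})\right]dF$, the weak inequality from concavity of $u(\cdot,t)$ plus monotonicity of $m_t$ in $v$ (Lemma \ref{lem_foc}), and strictness obtained by localizing near the median. One caution on the localization: ``attention is an open condition'' is not literally true --- paying attention is defined by a weak inequality, hence is a closed condition --- so interiority of $\overline{m}_0$ alone does not spread to a neighborhood. What makes the step work is that for $N\geq 2$ the attention inequality holds \emph{strictly} at $t=0$, because $\exp(x)+\exp(-x)>2$ for $x\neq 0$ and $u(\cdot,0)$ is strictly single-peaked; that computation is exactly Lemma \ref{lem_interior} and has to be carried out. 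For Parts (ii)(a)--(b) your route is correct and genuinely different from the paper's: the paper proves nestedness of the attention set by a revealed-preference argument (the optimal mutual information is decreasing in $\mu$), and proves (b) by deriving an explicit necessary condition $u(a_1,0)-u(a_N,0)\geq\delta(\mu)$ and showing $\delta$ is strictly increasing through a lengthy computation; your observation that $g(\lambda)=\mathbb{E}_{[{\bf{A}},\bm{\Sigma}]}\left[\exp\left(\lambda v(\widetilde{\bf{a}},t)\right)\right]$ is convex with $g(0)=1$, so that $\left\{\lambda>0: g(\lambda)\geq 1\right\}$ is an up-set, delivers (a) in a few lines, and (b) is then the trivial monotonicity of a minimum over nested sets. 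This is cleaner, though it yields only the qualitative statement and not the paper's quantitative hurdle $\delta(\mu)$.

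The genuine gap is Part (ii)(c). Invoking Lemma \ref{lem_attentionset} there is circular: that lemma \emph{is} the paper's formalization of claims (a)--(c) for the attention set, i.e., the very content you are asked to establish. Moreover, the fact you want from it --- that the threshold $\mu^*({\bf{A}})$ ``genuinely varies across ${\bf{A}}\in\mathcal{E}(\bm{\Sigma})$'' --- is neither proved nor the right statement: $\mathcal{E}(\bm{\Sigma})$ may be a singleton, or all of its elements may share a common threshold, in which case your argument produces no change at all; conversely, variation of thresholds is useless unless some threshold is finite and positive. What (c) actually requires is (1) non-emptiness of $\mathcal{EA}_t(\bm{\Sigma},\mu)$ at small $\mu$, which follows from the hypothesis that $v({\bf{a}},t)>0$ for some on-path ${\bf{a}}$ (then $g(\lambda)\geq\sigma({\bf{a}})\exp\left(\lambda v({\bf{a}},t)\right)\to\infty$ as $\lambda\to\infty$), and (2) emptiness at large $\mu$, i.e., that \emph{every} threshold is finite. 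The latter is the missing step, and your own MGF setup delivers it in one line: by Assumption \ref{assm_u}(iii) and symmetry, $g'(0)=\mathbb{E}_{[{\bf{A}},\bm{\Sigma}]}\left[v(\widetilde{\bf{a}},t)\right]<\mathbb{E}_{[{\bf{A}},\bm{\Sigma}]}\left[v(\widetilde{\bf{a}},0)\right]=0$ for every $t<0$ and every ${\bf{A}}$ of order $N\geq 2$, so $g(\lambda)<1$ for all small $\lambda>0$ and attention fails once $\mu$ is large. Since there are finitely many policy matrices, $\mathcal{EA}_t(\bm{\Sigma},\mu)$ then moves from non-empty to empty as $\mu$ grows, which is the non-constancy claimed in (c); this is precisely Part (iii) of the paper's proof of Lemma \ref{lem_attentionset}.
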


Part (i) of Theorem \ref{thm_main} shows that in every symmetric equilibrium of the election game, winner is determined the same as in \cite{downs}, meaning that under any given policy profile, the closest candidate to the center wins for sure, whereas equidistant candidates from the center split the votes evenly. Knowing the winning probability matrix, voters' characteristics such as marginal attention cost are irrelevant in the determination of equilibrium policies. While these characteristics certainly affect individual voters' decisions, they impose no aggregate-level uncertainty on candidates, for whom it is as if policies could be observed by voters without errors.

Part (ii) of Theorem \ref{thm_main} shows that as the marginal attention cost increases, a truncated subset of the equilibria captures voters' attention through enlarging the policy difference between the varying types of the candidates than before. Remarkably, this result holds true for any non-degenerate policy distribution that can arise in equilibrium, as well as any moderate voter who would occasionally prefer the opposing candidate in the case where policies could be observed without errors. 

At this moment, one may wonder why we as analysts should care about equilibrium attention levels. In order to address this question, recall the assumption shared by our baseline model and many existing models, namely candidates care about voters only through the channel of winning probability. What is thus ignored is the practical consideration of visibility, of eyeballs, which is the ``fuel of campaign advertising'' according to Roger Ailes.\footnote{Roger Ailes was the Chairman and CEO of Fox News and Fox Television Stations, from which he resigned in July 2016. He served as a media consultant for Republican presidents Richard Nixon, Ronald Reagan and George H. W. Bush, and was an adviser to the Donald Trump campaign. \url{https:/ /en.wikipedia.org/wiki/Roger_Ailes}.} The stylized model developed in Appendix \ref{sec_selection} formalizes this consideration, assuming that the dissemination of political information is costly and non-automatic, and it involves entities such as revenue-maximizing content providers that profit from voters' eyeballs.\footnote{\cite{berr} reports the ``handsome'' political ads revenues earned by media companies during the 2018 midterm election. } In the case where these entities incur no loss in equilibrium, we obtain a stronger result: increases in the marginal attention cost leave us with a truncated set of policy distributions that exhibits a greater degree of extremism than ever. The next section details a few accounts that support this view. 

The two interpretations differ by welfare implications. If we take Theorem \ref{thm_main} at face value, then increases in the marginal attention cost have no effect on candidates but make voters worse off.\footnote{By the envelope theorem, e.g., \cite{milgromsegal}, a slight increase in $\mu$ decreases voter's utility by a negative amount. } If we stick to the second interpretation, then as the marginal attention cost increases: (1) candidates can be better off if they end up being closer to their ideal positions; (2) voters can be better off as intuited in  \cite{abramowitzbook} if the rising policy differential triggers attention and enables more informed decisions. 

\subsection{Evidence}\label{sec_evidence}
\paragraph{The conformity in the 1950s} During the 1950s, a sense of uniformity pervaded the  American society. Economically, the country experienced marked growth and prosperity  hallmarked by the rapid expansions of the manufacturing and home construction sectors. Politically, the Cold War and fear of communism helped created a conservative and stable climate, tainted by the quasi-confrontations that slowly got intensified over time. Accordingly, the United States in 50's is considered both socially conservative and highly materialistic in nature. Conformity and conservatism set the social norms of the time.  
	
Among other things, Eisenhower's embrace of New Deal and the conservative Democrats' control of the congress left remarkable footprints on the country's political landscape. Fueled by the lack of perceived differences between  candidates, the public's attitude towards politics was best characterized by apathy and indifference, with unengaged and uninformed voters spanning through all age groups, ethnicities and occupations. Based on the National Study of Presidential Elections conducted by the University of Michigan, \cite{campbell} reports that in 1956, about one fourth of the population demonstrated familiarity with less than one out of the two issues presented to them. Even among the informed segment of the population, only 40 to 60 percent could articulate the candidates' positions  and answer questions such as whether a candidate was a conservative left or a liberal left.  

According to \cite{campbell}, candidates responded, at least at the regional  level, to the pressure from the other side. Over time, they imposed increasing constraints on policy choices, forgoing those with miniature differences and keeping the remainder that set the varying types of themselves far apart. Visibility seemed to be the foremost concern here. According to some interviewees, had they been more permissive and acted the same as before, they could not always cross the threshold of public awareness, which was vital for the messages to even ``reach the newspaper and dinner tables.'' Based on the rich interactions with regional candidates, \cite{campbell} concludes that ``In the electoral as a whole where the level of attention is so low, what the public is exposed to must be highly visible--even stark--if it is to have an impact on opinion.''

The above reported case confirms the prediction of Theorem \ref{thm_main}. The national political and economic environment during the Eisenhower era resulted in an increase in the marginal attention cost. The lack of perceived differences between national-level candidates created apathy and indifference among the public. To regain the needed visibility, local candidates exercised caution and deliberation, and the restricted policy choices exhibited increasing degrees of extremism over time.

\paragraph{Campaign messages} In Appendix \ref{sec_commitment}, we investigate an extension in which the winning candidate can only partially honor his policy proposal, now regarded as his promise made during the campaign. The result thereof, namely high attention cost leads to the selection of extreme and exaggerated campaign messages, is a mere understatement of reality. As Roger Ailes once described his role as campaign strategist:``... eyeball is the fuel and becomes difficult to grab over time... Everything I did, from issues to speeches to ads to debate rhetorics, was to keep the varying types of the candidates far apart...'' (\cite{popkin}; \cite{vavreck}).

\subsection{Discussions}\label{sec_discussion}
\paragraph{Equilibrium existence} By Theorem \ref{thm_main} (i), we can solve the equilibrium policies by considering an augmented finite game, in which candidates' payoffs depend solely on who is closer to the center and what their type realizations and policy positions are (there are finite of them by assumption). The existence of symmetric (mixed strategy) equilibria in this game follows from \cite{cheng}:\footnote{By Harsanyi (1973), mixed strategy equilibria can be viewed as the limit of the pure strategy equilibria of a perturbed game in which players have private information about their payoffs as in our model.  }

\begin{cor}\label{cor_existence}
Let everything be as in Theorem \ref{thm_main}. Then the election game admits symmetric equilibria. 
\end{cor}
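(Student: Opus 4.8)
The plan is to follow the roadmap sketched just before the statement: use Theorem \ref{thm_main} (i) to collapse the election game into an augmented finite Bayesian game between the two candidates, obtain a symmetric equilibrium of that finite game from \cite{cheng}, and then lift it back to a symmetric Bayes Nash equilibrium of the full game by appending voters' optimal attention strategies. Concretely, I would first define the augmented game $G$ whose players are candidates $\alpha$ and $\beta$ with finite type spaces $\supp(P_c)$ and finite action spaces $A_c$; given a realized policy profile $\mathbf{a}=(a_\alpha,a_\beta)$ the winner is declared to be the candidate whose policy is closer to the median (with an even split when $|a_\alpha|=|a_\beta|$), and each candidate's payoff is computed from $u_+$ and $u_-$ exactly as in the candidate payoff, with $w_c$ replaced by this Downsian rule $\widehat{\bf{W}}_N$. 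Since $A_c$ and $\supp(P_c)$ are finite, $G$ is a finite Bayesian game.

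The second step is to exhibit the symmetry of $G$ and invoke \cite{cheng}. Because $\alpha$ and $\beta$ have different action sets, $G$ is not literally invariant under a player swap, so I would first relabel actions and types through the reflection $a\mapsto -a$, $t\mapsto -t$ so that both candidates' normalized actions and types lie in $(0,1]$. Assumption \ref{assm_symmetry} (i)--(ii) --- symmetry of $u_+,u_-$ about the origin together with $P_\beta(t)=P_\alpha(-t)$ --- guarantees that in these normalized coordinates $G$ is invariant under exchanging the two candidates. The existence theorem of \cite{cheng} for symmetric finite games then yields a symmetric mixed-strategy equilibrium $\sigma^*$ of $G$, which in the original coordinates satisfies $\sigma_\beta^*(a\mid t)=\sigma_\alpha^*(-a\mid -t)$, i.e.\ candidate strategies symmetric around the median voter.

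The third step lifts $\sigma^*$ to the full game. I would let $m^*$ be voters' optimal attention strategies against $\sigma^*$, given by Lemma \ref{lem_foc} on $\supp(\sigma^*)$ and by the off-path refinement (\ref{eqn_perfect}) elsewhere; symmetry of $\sigma^*$ together with the closed form in Lemma \ref{lem_foc} makes $m^*$ symmetric around the median. The crux is to verify that, holding $m^*$ fixed, the winning-probability function each candidate faces coincides with the Downsian rule of $G$ on \emph{every} policy profile. For profiles in $\supp(\sigma^*)$ this is precisely the aggregation argument behind Theorem \ref{thm_main} (i) (Steps 1--4 of the leading example): symmetry reduces cross-voter aggregation to a single voter's comparison across mirror-image profiles, and monotonicity of attention in $v$ plus concavity of $u$ force the closer candidate to win. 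For profiles outside $\supp(\sigma^*)$ the refinement (\ref{eqn_perfect}) makes voters vote sincerely, and the single-peaked, zero-median structure in Assumption \ref{assm_u} again delivers the Downsian winner. Consequently each candidate's objective in the full game (with $m^*$ fixed) is identical to his objective in $G$, so the equilibrium property of $\sigma^*$ transfers while voters best-respond by construction, and $(m^*,\sigma^*)$ is a symmetric Bayes Nash equilibrium.

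I expect the main obstacle to be this lifting step rather than the invocation of \cite{cheng}: one must confirm that the Downsian winning rule is robust not only along the symmetric equilibrium path but across all policy profiles a candidate can reach by a unilateral --- hence symmetry-breaking --- deviation. The on-path half is handed to us by Theorem \ref{thm_main} (i); the off-path half rests on (\ref{eqn_perfect}), and I would check carefully that under perfect observation a strictly closer-to-center policy is preferred by a strict majority, which follows from the zero-median, single-peaked structure of Assumptions \ref{assm_symmetry}--\ref{assm_u}. A secondary point to treat cleanly is the relabeling that recasts the two-candidate game as a genuinely symmetric game in the sense of \cite{cheng}, so that the equilibrium it produces is exactly symmetry around the median voter.
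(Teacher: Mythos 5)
Your proposal is correct and follows exactly the paper's own route: Theorem \ref{thm_main} (i) reduces the problem to an augmented finite game between candidates with the Downsian winning rule $\widehat{\bf{W}}_{N}$, whose symmetric mixed-strategy equilibria exist by \cite{cheng}. The paper states this in two sentences and leaves the lifting step implicit; your verification of it---symmetric optimal attention strategies on path via Lemma \ref{lem_foc}, sincere voting off path via the refinement (\ref{eqn_perfect}), and the check that the Downsian rule governs every profile reachable by a unilateral deviation---is the right way to fill in what the paper takes for granted.
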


\paragraph{Policy differential} The equilibria of Example \ref{exa:cont1} exhibit non-trivial degrees of policy differentials because the winning candidate has a stronger policy preference than the losing candidate does.  Under this assumption, candidates prefer to lose than to win with undesirable positions and hence are reluctant to move away from their bliss points.\footnote{By contrast, the assumption of finite policy space is not as crucial for producing policy differentials, and this can be seen from two angles. First, even in the baseline model, we can have situations that attain the outcomes of Example \ref{exa:cont1} in equilibrium even if the policy space is continuous. A simple (and rather trivial) example would be $R=0$, $\delta_-=0$, $u_+(\cdot, t_c)$ single-peaked at $.01$ and $u_+(\cdot, t_e)$ double-peaked at $.2$ and $.4$, where the last assumption captures the reality discusseed in \cite{calvert} that the winning candidate may have multiple interest groups to serve. 

Second, in the extension to noisy news--which, in our opinion, more accurately paints the reality but is probably too difficult to be studied upfront--the existence of symmetric equilibria follows from \cite{cheng} in the case where (1) the policy space is compact and convex, and (2) the news distribution is log-supermodular and is continuous in policy positions. Even so, moving towards the center is still shown to increase the winning probability, suggesting, once again, that the policy differential as depicted in Figure \ref{figure_noisy} should be the result of policy preferences rather than the topological properties of the policy space. See Example \ref{exa:cont2} for further discussions.}

To compare and contrast, consider the situations investigated by (1) \cite{downs}, in which $R>0$ and $\delta_+=\delta_-=0$, and (2) \cite{calvert}, in which $R\geq 0$ and $\delta_+=\delta_->0$. In both cases, the unique equilibrium outcome is $\left(a_1,a_2\right)=\left(a_c, a_c\right)$, in contrast to what we have in Example \ref{exa:cont1}. The reality may lie somewhere in between or be even more different,\footnote{For example, \cite{kartik} essentially assumes that the candidate is either non-strategic with $\delta_+=\infty$ and $t$ being drawn from a continuous distribution, or is strategic with $R>0$ and $\delta_+=\delta_-=0$. } which we make no judgment about. The mere goal here is to clarify the role of policy preference in producing policy differentials. 

\paragraph{Attention in the middle} Theorem \ref{thm_main} (i) builds on the following lemma, saying that the middle is always attentive to non-degenerate policy distributions (at a level that could be arbitrarily low): 

\begin{lem}\label{lem_interior}
Assume Assumption \ref{assm_symmetry} and fix any $\mu>0$. Then in any symmetric equilibrium in which the policy distribution is non-degenerate, there exists a neighborhood around the median voter that pays attention to politics. 
\end{lem}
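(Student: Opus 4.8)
The plan is to reduce attention to a single scalar inequality and then exploit symmetry to obtain a strict buffer at the median that continuity propagates to nearby voters. By Definition~\ref{defn_attention}, a voter $t<0$ pays attention precisely when $g(t):=\mathbb{E}_{\sigma}\left[\exp\left(v(\widetilde{\bf{a}},t)/\mu\right)\right]\geq 1$, so it suffices to locate an interval around $0$ on which $g\geq 1$ and then handle $t>0$ by symmetry. First I would record that each map $t\mapsto v({\bf{a}}_{ij},t)$ is strictly increasing by Assumption~\ref{assm_u}(iii)---every realized profile has $a_j>-a_i$---so $g$ is a finite, increasing sum of exponentials in $t$.

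The heart of the argument is to show that the median voter is attentive \emph{with slack}, i.e.\ $g(0)>1$. Using Assumption~\ref{assm_symmetry}(i) at $t=0$ gives $u(a,0)=u(-a,0)$, whence off-diagonal profiles satisfy $v({\bf{a}}_{ij},0)=u(a_j,0)-u(a_i,0)=-v({\bf{a}}_{ji},0)$, while diagonal profiles satisfy $v({\bf{a}}_{ii},0)=0$. Pairing ${\bf{a}}_{ij}$ with ${\bf{a}}_{ji}$ and using $\sigma_{ij}=\sigma_{ji}$, I would rewrite
\[
g(0)=\sum_{i}\sigma_{ii}+\sum_{i<j}2\sigma_{ij}\cosh\!\left(\frac{v({\bf{a}}_{ij},0)}{\mu}\right)\geq \sum_{i,j}\sigma_{ij}=1,
\]
since $\cosh\geq 1$. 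Non-degeneracy ($N\geq 2$, all $\sigma_{ij}>0$) together with the single-peakedness of $u(\cdot,0)$ (Assumption~\ref{assm_u}(i)) guarantees some off-diagonal profile with $v({\bf{a}}_{ij},0)\neq 0$, making the inequality strict: $g(0)>1$.

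With the strict bound in hand, continuity of $g$ in $t$ yields $\varepsilon>0$ such that $g(t)>1$ for all $t\in(-\varepsilon,\varepsilon)$; in particular every $t\in(-\varepsilon,0)$ satisfies the attention condition, and the median voter pays attention as well. To cover $t\in(0,\varepsilon)$, I would invoke Lemma~\ref{lem_foc}: such a voter is attentive iff both $\mathbb{E}_{\sigma}[\exp(v/\mu)]\geq 1$ and $\mathbb{E}_{\sigma}[\exp(-v/\mu)]\geq 1$. The first holds because $g$ is increasing, so $g(t)\geq g(0)>1$; for the second, the identity $v({\bf{a}}_{ji},-t)=-v({\bf{a}}_{ij},t)$ (again from Assumption~\ref{assm_symmetry}(i)) together with $\sigma_{ij}=\sigma_{ji}$ shows $\mathbb{E}_{\sigma}[\exp(-v(\widetilde{\bf{a}},t)/\mu)]=g(-t)>1$ for $t\in(0,\varepsilon)$. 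Hence the whole interval $(-\varepsilon,\varepsilon)$ is attentive, which is the claimed neighborhood.

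The main obstacle is the strict inequality $g(0)>1$: it is exactly what provides the buffer that survives perturbation to nearby types, and it is not a free consequence of symmetry alone. Indeed, were the median indifferent across all realized profiles ($g(0)=1$), the increasing character of $g$ would force $g(t)<1$ just below the median and the conclusion would fail; ruling this out is where non-degeneracy, strict positivity of the $\sigma_{ij}$, and single-peakedness of $u(\cdot,0)$ do the essential work. A secondary point requiring care is continuity of $g$ in $t$, which I would supply from the continuity of $v({\bf{a}}_{ij},\cdot)$ that holds for the admissible utility functions.
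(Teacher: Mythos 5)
Your proof is correct and follows essentially the same route as the paper's: both pair the symmetric off-diagonal profiles at $t=0$, use $\exp(x)+\exp(-x)\geq 2$ (your $\cosh$ bound) with strictness supplied by non-degeneracy and the strict single-peakedness of $u(\cdot,0)$, and then pass to a neighborhood by continuity before invoking Lemma~\ref{lem_foc}. The only cosmetic difference is that for $t>0$ you route through monotonicity of $g$ plus the identity $\mathbb{E}_{\sigma}\left[\exp\left(-v(\widetilde{\bf{a}},t)/\mu\right)\right]=g(-t)$, whereas the paper simply observes that both expectations $\mathbb{E}\left[\exp\left(\pm v/\mu\right)\right]$ strictly exceed $1$ at $t=0$ and remain so nearby by continuity.
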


\begin{proof}
The proof exploits symmetry and the fact that $\exp(-x)+\exp(x)>2$ for all $x \neq 0$. See Appendix \ref{sec_proof_eqm} for further details. 
\end{proof}

Lemma \ref{lem_interior} adds to the on-going debate over whether the middle attends to politics or not---an issue which we will continue to narrate in Appendix \ref{sec_voter}. Its proof is one of the few places that exploit the functional form of mutual information, leaving us to wonder about the validity of Theorem \ref{thm_main} in more general environments. 

\paragraph{Alternative attention function} The answer to the above question is an affirmative one.  Recall the equilibrium construction in Example \ref{exa:cont1}, which has four steps. Steps 1-3 exploit symmetry and basic monotonicity properties and should thus carry over to more general settings. To break the tie, we invoke Lemma \ref{lem_interior} in Step 4, but the proof also works if (1) voters have strictly concave preferences for policies, and (2) some symmetric pair of voters is actively attending to politics. Thus, the only difference it makes from assuming alternative attention functions is the following: after the marginal attention cost exceeds a threshold, all voters, including the median one, could stop paying attention, and candidates best respond to the winning probability $\frac{1}{2}{\bf{J}}_N$ instead. Theorem \ref{thm_main} remains valid beforehand.

\paragraph{Attention set} The proof of Theorem \ref{thm_main} (ii) combines Theorem \ref{thm_main} (i) with the following characterization of the attention set: 

\begin{lem}\label{lem_attentionset}
Let everything be as in Theorem \ref{thm_main}. As $\mu$ grows from zero to infinity, the set $\mathcal{A}_t\left({\bm{\Sigma}}, \mu\right)$ satisfies the description of $\mathcal{EA}_t\left(\bm{\Sigma}, \mu\right)$ in Theorem \ref{thm_main}. 
\end{lem}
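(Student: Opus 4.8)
The attention set $\mathcal{A}_t(\bm{\Sigma},\mu)$ is cut out by a single scalar inequality, so the plan is to track how its left-hand side
\[
g\left(\mu, {\bf{A}}\right) = \mathbb{E}_{[{\bf{A}}, {\bm{\Sigma}}]}\left[\exp\left(v\left(\widetilde{\bf{a}}, t\right)/\mu\right)\right] = \sum_{i,j} \sigma_{ij}\exp\left(v\left({\bf{a}}_{ij}, t\right)/\mu\right)
\]
moves with $\mu$, and then read off the three assertions (the set shrinks, the minimal extremism rises, neither is constant in general). The structural fact I would lean on is the convexity of $z\mapsto\exp(z)$ together with the anchoring $g(\mu,{\bf{A}})\to 1$ as $\mu\to\infty$ (because $\exp(0)=1$ and $\sum_{i,j}\sigma_{ij}=1$). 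It is worth flagging up front that $g(\mu,{\bf{A}})$ is \emph{not} monotone in $\mu$: for $t<0$ one has $\mathbb{E}_{[{\bf{A}},\bm{\Sigma}]}[v]<0$, so $g$ first dips below $1$ and then returns to $1$ from below; hence differentiating in $\mu$ is the wrong instinct. The right statement is that the super-level set $\{\mu : g(\mu,{\bf{A}})\geq 1\}$ is a down-interval, and this is what the convexity argument delivers directly.

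For the shrinking claim (part (a)) I would fix a policy matrix ${\bf{A}}$ and show $g(\mu_2,{\bf{A}})\geq 1 \Rightarrow g(\mu_1,{\bf{A}})\geq 1$ whenever $0<\mu_1<\mu_2$, which is exactly the nesting $\mathcal{A}_t(\bm{\Sigma},\mu_2)\subseteq\mathcal{A}_t(\bm{\Sigma},\mu_1)$. Setting $\lambda=\mu_1/\mu_2\in(0,1)$ and writing $1/\mu_2=\lambda\cdot(1/\mu_1)+(1-\lambda)\cdot 0$, convexity of $\exp$ gives, profile by profile,
\[
\exp\left(v\left({\bf{a}}_{ij}, t\right)/\mu_2\right) \leq \lambda\exp\left(v\left({\bf{a}}_{ij}, t\right)/\mu_1\right) + (1-\lambda).
\]
Taking $\mathbb{E}_{[{\bf{A}},\bm{\Sigma}]}$ and using $\sum_{i,j}\sigma_{ij}=1$ yields $g(\mu_2,{\bf{A}})\leq \lambda\,g(\mu_1,{\bf{A}})+(1-\lambda)$; if $g(\mu_2,{\bf{A}})\geq 1$ then $\lambda\leq\lambda\,g(\mu_1,{\bf{A}})$, i.e.\ $g(\mu_1,{\bf{A}})\geq 1$. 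This is the crux, and it is agnostic to the signs of the $v\left({\bf{a}}_{ij},t\right)$, sidestepping the non-monotonicity of $g$ altogether.

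Part (b) is then immediate: the map ${\bf{A}}\mapsto u(a_1,0)-u(a_N,0)$ is a fixed function of the policy matrix (positive by single-peakedness of $u(\cdot,0)$ since $0<a_1<a_N$), and minimizing a fixed function over the shrinking family $\mathcal{A}_t(\bm{\Sigma},\mu)$ gives a nondecreasing value, with the convention that the minimum over the empty set is $+\infty$ should attention eventually collapse (which it does, since $g\to 1$ from below forces every matrix to exit for $\mu$ large and $A_\beta$ is finite). For part (c) -- non-constancy -- it suffices to exhibit one configuration in which a threshold is crossed strictly, so I would carry out the explicit reduction in the leading Example~\ref{exa:cont1}: with $u(a,t)=-|t-a|$ and $\bm{\Sigma}=\tfrac14{\bf{J}}_2$, the inequality $g(\mu,{\bf{A}})\geq 1$ simplifies (the ``tedious but straightforward algebra'') to Condition~(\ref{eqn_lowerbound}), $a_2-a_1\geq$ a term strictly increasing in $\mu$; since $A_\beta$ is finite this strictly rising threshold ejects matrices from $\mathcal{A}_t$ at discrete values of $\mu$ and makes the minimal spread jump up. Finally, because $\mathcal{E}(\bm{\Sigma},\mu)=\mathcal{E}(\bm{\Sigma})$ is independent of $\mu$ by Theorem~\ref{thm_main}(i), intersecting the nested family $\mathcal{A}_t(\bm{\Sigma},\mu)$ with the fixed set $\mathcal{E}(\bm{\Sigma})$ transfers all three properties verbatim to $\mathcal{EA}_t(\bm{\Sigma},\mu)=\mathcal{E}(\bm{\Sigma})\cap\mathcal{A}_t(\bm{\Sigma},\mu)$, matching the description in Theorem~\ref{thm_main}. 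The only laborious step is the algebraic reduction behind part (c); parts (a) and (b) reduce to the single convexity inequality above, so I expect the main obstacle to be purely computational rather than conceptual.
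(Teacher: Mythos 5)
Your parts (a) and (b) are correct, and part (a) takes a genuinely different route from the paper. The paper proves shrinkage indirectly: by a revealed-preference argument, the mutual information between the policy profile and the voter's optimal decision is decreasing in $\mu$, so (via Lemma \ref{lem_foc}) once a voter stops attending to a given $[{\bf{A}},\bm{\Sigma}]$ he never resumes. Your inequality $\exp\left(v/\mu_2\right)\leq \lambda \exp\left(v/\mu_1\right)+(1-\lambda)$ with $\lambda=\mu_1/\mu_2$ acts directly on the defining inequality of $\mathcal{A}_t\left(\bm{\Sigma},\mu\right)$, requires no appeal to optimal strategies, and handles the weak inequality exactly; it is shorter and more self-contained. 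For part (b), your observation that the minimum of a fixed function over a nested decreasing family is nondecreasing is logically sufficient for the claim as stated. The paper instead derives the explicit necessary condition $u\left(a_1,0\right)-u\left(a_N,0\right)\geq \delta(\mu)$ with $\delta$ strictly increasing (via the function $\gamma$ and a derivative-plus-concavity argument); that extra work is not needed for (b) itself, but it is the source of the quantitative threshold in Condition (\ref{eqn_lowerbound}) and is reused in Corollary \ref{cor_noisy_mu}, so your shortcut forgoes that by-product.

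Part (c) is where there is a genuine gap. The claim that the objects ``do not always stay constant'' is asserted for \emph{every} admissible configuration: Theorem \ref{thm_main} fixes an arbitrary $\bm{\Sigma}$ of order $N\geq 2$ with $\mathcal{E}\left(\bm{\Sigma}\right)\neq\emptyset$ and an arbitrary $t<0$ with $v\left({\bf{a}},t\right)>0$ for some ${\bf{a}}\in{\bf{A}}\in\mathcal{E}\left(\bm{\Sigma}\right)$, and non-constancy must hold for that configuration. Exhibiting the single configuration of Example \ref{exa:cont1} therefore does not prove it; this is exactly why the paper's part (iii) is written for general $[{\bf{A}},\bm{\Sigma}]$: the attention set is non-empty when $\mu\approx 0$ because $\exp\left(v\left({\bf{a}},t\right)/\mu\right)\to\infty$ for the profile guaranteed by the hypothesis, and empty when $\mu$ is large because $\mathbb{E}\left[\exp\left(v\left(\cdot,t\right)/\mu\right)\right]\approx 1+\mathbb{E}\left[v\left(\cdot,t\right)\right]/\mu<1$.

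Your proposal in fact contains the key ingredient of the general argument---the remark that $\mathbb{E}_{[{\bf{A}},\bm{\Sigma}]}[v]<0$ forces every matrix to exit for large $\mu$---but you assert that inequality without justification, and it is not a generic fact about $t<0$: it requires Assumption \ref{assm_u}(iii) (so that $v\left(\cdot,t\right)<v\left(\cdot,0\right)$ pointwise) together with the symmetry $\sigma_{ij}=\sigma_{ji}$ and $v\left({\bf{a}}_{ij},0\right)=-v\left({\bf{a}}_{ji},0\right)$, which yield $\mathbb{E}\left[v\left(\cdot,0\right)\right]=0$. You also never state the small-$\mu$ non-emptiness, which is precisely where the hypothesis $v\left({\bf{a}},t\right)>0$ enters. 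If you prove the expectation inequality and pair large-$\mu$ emptiness with small-$\mu$ non-emptiness, you obtain (c) in full generality and the Example \ref{exa:cont1} computation becomes unnecessary. A final caution on your closing paragraph: passing from $\mathcal{A}_t$ to $\mathcal{EA}_t$ is not ``verbatim'' for (c); you need the matrix witnessing non-emptiness at small $\mu$ to lie in $\mathcal{E}\left(\bm{\Sigma}\right)$, which again is supplied by the hypothesis of Theorem \ref{thm_main}.
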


\begin{proof}
See Appendix \ref{sec_proof_eqm}.
\end{proof}

Two things are noteworthy. First, the lemma itself, albeit sounding straightforward, is  not so easy to prove. Second, one should not confuse Theorem \ref{thm_main} (ii) with this mere lemma, because had equilibrium policies depended on the marginal attention cost, too, we could arrive at the opposite conclusion: as marginal attention cost increases, we could end up with more rather than fewer equilibria that grab voters' attention!

\section{Noisy News}\label{sec_noisy}
\subsection{Setup}\label{sec_noisy_setup}
In this section, let news be a profile $\bm{\omega}=\left(\omega_{\alpha}, \omega_{\beta}\right)$ of reports, each drawn independently from a finite set $\Omega_c \subset \Theta_c$ according to a probability mass function $f_c\left(\cdot \mid a_c\right)$. The joint probability distribution $f=f_{\alpha}\times f_{\beta}$ has support  $\Omega=\Omega_{\alpha}\times \Omega_{\beta}$, and it is called the \emph{news technology}. 

 In this modified setting, attention strategy is a mapping $m_t: \Omega \rightarrow [0,1]$, where each $m_t\left(\bm{\omega}\right)$ specifies the probability that voter $t$ chooses candidate $\beta$ under news profile $\bm{\omega}$. For any given pair $x=\left(f,\sigma\right)$ of news technology and candidate strategies, let 
\begin{equation*}
\nu_{x}\left(\omega, t\right)=\mathbb{E}_{x}\left[v\left(\widetilde{\bf{a}}, t\right) \mid \bm{\omega}\right]
\end{equation*}
be the voter's expected differential utility from choosing candidate $\beta$ over candidate $\alpha$ under news profile $\bm{\omega}$, and let 
\begin{equation*}
V_t\left(m_t, x\right)=\mathbb{E}_{x}\left[m_t\left(\widetilde{\bm{\omega}}\right) \nu_{x}\left(\widetilde{\bm{\omega}}, t\right)\right].
\end{equation*}
be the expected differential utility under $\left(m_t,x\right)$. The net payoff is then
\begin{equation*}
V_t\left(m_t,x\right) - \mu \cdot I\left(m_t, x\right), 
\end{equation*}
where $I\left(m_t, x\right)$ is the mutual information between news and the voting  decision. Under any profile $(m,x)$ of strategies and news technology, the expected payoff of candidate $c$ is equal to
\begin{equation*}
V_c\left(m, x\right)= \mathbb{E}_{m,x}\left[w_c\left(\widetilde{\bm{\omega}}\right) u_+\left(\widetilde{a}_c, \widetilde{t}_c\right) + \left(1-w_c\left(\widetilde{\bm{\omega}}\right) \right) u_-\left(\widetilde{a}_{-c}, \widetilde{t}_c\right) \right].
\end{equation*}
In the above expression, $w_c\left(\bm{\omega}\right)$ represents candidate $c$'s winning probability under news profile $\bm{\omega}$ and can be obtained from replacing $\bf{a}$ with $\bm{\omega}$ in Equation (\ref{eqn_wp}). 

\subsection{Equilibrium Analysis}\label{sec_noisy_eqm}
Under news technology $f$, a strategy profile $\left(m^*, \sigma^*\right)$ constitutes a \emph{Bayes Nash equilibrium} of the election game if each player maximizes expected utility, taking $f$ and the other players' strategies as given. In what follows, we assume that the environment, which now includes the news technology, is symmetric around the median voter: 

\begin{assm}\label{assm_noisy_symmetry}
$f_{\alpha}\left(-\omega \mid -a\right)=f_{\beta}\left(\omega \mid a\right)$ for all $a \in A_{\beta}$ and $\omega \in \Omega_{\beta}$.  
\end{assm}

Again, we focus on the symmetric equilibria of the election game:
 
\begin{defn}
A strategy profile $\left(m, \sigma\right)$ is \emph{symmetric around the median voter} if 
\begin{enumerate}[(i)]
\item $\sigma_{\beta} \left(a\mid t\right)=\sigma_{\alpha}\left(-a \mid -t\right)$ for all $a \in A_{\beta}$ and $t \in \supp\left(P_{\beta}\right)$;
\item  $m_t\left(-\omega,\omega'\right)=1-m_{-t}\left(-\omega', \omega\right)$ for all $\left(-\omega,\omega'\right)\in \Omega$ and $t \in \Theta$. 
\end{enumerate}
\end{defn} 

The analysis assumes that the news technology is log-supermodular  (\cite{milgrom}), meaning that we are more likely to observe extreme rather than centrist news reports as the underlying policy becomes more extreme: 

\begin{assm}\label{assm_logsupermodular}
The following holds true for all $c$, $a, a' \in A_c$ and $\omega, \omega' \in \Omega_c$ such that $a<a'$ and $\omega<\omega'$: 
\[\frac{f_c\left(\omega' \mid a\right)}{f_c\left(\omega \mid a\right)}<\frac{f_c\left(\omega' \mid a'\right)}{f_c\left(\omega \mid a'\right)}.\]
\end{assm}

The next definition extends Definition \ref{defn_attention} to encompass noisy news: 
\begin{defn}
Under any symmetric profile $x=(f,\sigma)$ of news technology and candidate strategies, a voter $t<0$ is said to \emph{pay attention to politics} if 
\[\mathbb{E}_{x}\left[\exp\left(\nu_x\left(\widetilde{\bm{\omega}}, t\right)/\mu \right) \right] \geq 1.\] 
\end{defn}


\subsubsection{Matrix Representation}\label{sec_noisy_matrix}
Let $-\omega_K<\cdots<-\omega_1<0< \omega_1<\cdots<\omega_K$ denote the realizations of news signals, where $K$ is an exogenous integer greater than one. Write $\bm{\omega}_{mn}=\left(-\omega_m, \omega_n\right)$ for $m, n=1,\cdots, K$. Let ${\bf{A}}$ and $\bm{\Sigma}$ be as above, and let $\bf{W}$ be the $K \times K$ matrix whose $mn^{th}$ entry $w_{mn} \in \left\{0,1/2,1\right\}$ represents candidate $\beta$'s winning probability under news profile $\bm{\omega}_{mn}$. 

 Under news technology $f$, a tuple $\langle {\bf{A}}, {\bm{\Sigma}}, {\bf{W}}\rangle$ of matrices can be attained in a symmetric equilibrium of the election game if (1) $[{\bf{A}}, {\bm{\Sigma}}]$ is \emph{incentive compatible for the candidates under} $\langle f, {\bf{W}} \rangle$, and if (2) $\bf{W}$ \emph{can be rationalized by optimal attention strategies under} $\langle f, {\bf{A}}, {\bf{\Sigma}} \rangle $ (hereinafter, $\langle f, {\bf{W}} \rangle$-IC and $\langle f, {\bf{A}}, {\bm{\Sigma}}\rangle$-rationalizable, respectively). Formally, 

\begin{defn}
$[ {\bf{A}}, {\bm{\Sigma}} ]$ is \emph{$\langle f,  {\bf{W}}\rangle $-IC} if in Definition \ref{defn_ic}, the winning probability matrix is replaced with an $N \times N$ matrix whose $ij^{th}$ entry is 
\begin{equation*}
\sum_{m,n=1}^K w_{mn} f\left(\bm{\omega}_{mn} \mid {\bf{a}}_{ij}\right) \sigma_{ij}. 
\end{equation*}
\end{defn}

\begin{defn}
$\bf{W}$ is \emph{$\langle f, {\bf{A}}, {\bm{\Sigma}} \rangle $-rationalizable}  if 
\begin{equation*}
w_{mn}=w\left(\bm{\omega}_{mn}\right) \text{ } \forall m, n, 
\end{equation*}
where $w\left(\bm{\omega}_{mn}\right)$ can be obtained from plugging $\left(m_t\left({\bm{\omega}}_{mn}\right)\right)_{t \in \Theta}$ under $\langle f, {\bf{A}}, {\bm{\Sigma}}\rangle $ into Equation (\ref{eqn_wp}). 
\end{defn}

\subsubsection{Analysis}
\paragraph{Equilibrium policies} For any pair $\left(\bm{\Sigma},f\right)$, define the set of policy matrices that can be attained in the symmetric equilibria of the election game by 
\begin{align*}
\mathcal{E}\left({\bm{\Sigma}}, f\right)=\left\{{\bf{A}}: \exists {\bf{W}} \text{ s.t. }   \begin{matrix}  
[{\bf{A}}, {\bm{\Sigma}}] \text{ is } \langle f, {\bf{W}} \rangle-\text{IC}  \\
{\bf{W}} \text{ is } \langle f, {\bf{A}}, {\bm{\Sigma}} \rangle-\text{rationalizable}
\end{matrix}\right\}
\end{align*}
The next theorem gives a full characterization of this set:
\begin{thm}\label{thm_noisy_policy}
Assume Assumptions \ref{assm_symmetry}-\ref{assm_logsupermodular}. Then for any probability matrix $\bm{\Sigma}$ of any order $N \geq 2$,\footnote{The case of $N=1$ is less interesting: $\mathcal{E}\left({\bm{\Sigma}}, f\right)=\left\{{\bf{A}}: 
[{\bf{A}}, {\bm{\Sigma}}] \text{ is } \langle f, \frac{1}{2}{\bf{J}}_{K}\rangle-\text{IC} \right\}$.} 
 
\[
\mathcal{E}\left({\bm{\Sigma}}, f\right)=\left\{{\bf{A}}: 
[{\bf{A}}, {\bm{\Sigma}}] \emph{ is } \langle f, \widehat{\bf{W}}_{K}\rangle-\emph{IC} \right\}. \]
\end{thm}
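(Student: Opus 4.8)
The plan is to mirror the noise-free argument behind Theorem~\ref{thm_main}(i): reduce the claim to showing that $\widehat{\mathbf W}_K$ is the \emph{unique} $\langle f,{\bf A},{\bm\Sigma}\rangle$-rationalizable winning-probability matrix, for \emph{every} policy matrix ${\bf A}$ (with ${\bm\Sigma}$ symmetric and $f$ log-supermodular). Once this is established, the existential quantifier defining $\mathcal E({\bm\Sigma},f)$ collapses: since voters' optimal attention — and hence the induced $\mathbf W$ — depends only on $\langle f,{\bf A},{\bm\Sigma}\rangle$ and is uniquely pinned to $\widehat{\mathbf W}_K$, membership in $\mathcal E({\bm\Sigma},f)$ is equivalent to $[{\bf A},{\bm\Sigma}]$ being $\langle f,\widehat{\mathbf W}_K\rangle$-IC, which is the asserted characterization. (Note $\mu$ never enters this reduction, because it only rescales the logit in Lemma~\ref{lem_foc} and leaves the sign structure below untouched.) Thus the entire content is the aggregation statement: under any news profile $\bm\omega_{mn}=(-\omega_m,\omega_n)$, the candidate whose report is closer to the center wins outright, and equidistant reports split the vote.

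I carry out the aggregation in three steps, reusing the logic of Steps~1--4 in Example~\ref{exa:cont1} but now at the level of posterior beliefs. \textbf{Step A (symmetrization).} Using the symmetry of $F$ together with the (uniquely optimal, hence symmetric) attention profile $m_t(-\omega,\omega')=1-m_{-t}(-\omega',\omega)$, I pair each voter $t$ with $-t$ and reduce the comparison of $\int m_t(\bm\omega_{mn})\,dF(t)$ with $1/2$ to the sign of $\int\big[m_t(\bm\omega_{mn})-m_t(\bm\omega_{nm})\big]\,dF(t)$, i.e.\ to a \emph{single} voter comparing the swapped profiles $\bm\omega_{mn}$ and $\bm\omega_{nm}$. \textbf{Step B (monotonicity).} By the noisy analogue of Lemma~\ref{lem_foc}, an attending voter's $m_t(\bm\omega)$ is a strictly increasing transform of the posterior differential utility $\nu_x(\bm\omega,t)=\mathbb E_x[v(\widetilde{\bf a},t)\mid\bm\omega]$, while a non-attending voter's decision is constant across profiles; hence the integrand has the sign of $\nu_x(\bm\omega_{mn},t)-\nu_x(\bm\omega_{nm},t)$.

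\textbf{Step C (the crux).} It remains to sign this posterior difference. Writing the unnormalized posterior weight of cell $(i,j)$ as $\phi_{mn}(i,j)=\sigma_{ij}\,f_\beta(\omega_m\mid a_i)\,f_\beta(\omega_n\mid a_j)$ — where I first invoke Assumption~\ref{assm_noisy_symmetry} to rewrite $\alpha$'s likelihood as $f_\beta$ — the symmetry $\sigma_{ij}=\sigma_{ji}$ gives the transpose identity $\phi_{nm}(i,j)=\phi_{mn}(j,i)$. This equalizes the two normalizers and, after relabeling, yields
\[
\nu_x(\bm\omega_{mn},t)-\nu_x(\bm\omega_{nm},t)=\frac{1}{Z}\sum_{i>j}\big[\phi_{mn}(i,j)-\phi_{mn}(j,i)\big]\big[\psi_t(a_j)-\psi_t(a_i)\big],
\]
where $\psi_t(a)=u(a,t)+u(-a,t)$ and $Z=\sum_{ij}\phi_{mn}(i,j)$. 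Two monotonicity facts close the sign. First, concavity of $u(\cdot,t)$ (Assumption~\ref{assm_u}(ii)) makes $\psi_t$ nonincreasing on $[0,1]$, so $\psi_t(a_j)-\psi_t(a_i)\ge 0$ whenever $i>j$. Second, log-supermodularity (Assumption~\ref{assm_logsupermodular}) is exactly an MLRP statement: for $m>n$ the ratio $f_\beta(\omega_m\mid a)/f_\beta(\omega_n\mid a)$ is increasing in $a$, giving $\phi_{mn}(i,j)>\phi_{mn}(j,i)$ for $i>j$ — more extreme news about $\alpha$ shifts posterior mass toward the cells where $\beta$ is the more centrist candidate. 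Every summand is nonnegative, so $\nu_x(\bm\omega_{mn},t)\ge\nu_x(\bm\omega_{nm},t)$ for all $t$ when $m>n$. Strictness comes from the median voter, for whom $\psi_0=2u(\cdot,0)$ is \emph{strictly} decreasing on $[0,1]$ (Assumption~\ref{assm_u}(i)); since a neighborhood of the median pays attention (the noisy analogue of Lemma~\ref{lem_interior}) and $\nu_x(\cdot,t)$ varies continuously in $t$, the integral in Step~A is strictly positive. Hence $\beta$ wins whenever its report is closer to center, and the symmetric identity from Step~A delivers the loss and tie cases, pinning $\mathbf W=\widehat{\mathbf W}_K$.

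The main obstacle is Step~C. With noise the posterior over $(a_\alpha,a_\beta)$ does not factorize — $\bm\Sigma$ need not be a product measure — so the two candidates' news cannot be treated independently, and a naive term-by-term signing of $\nu_x$ fails. The device that restores tractability is the transpose identity $\phi_{nm}(i,j)=\phi_{mn}(j,i)$, which converts the comparison of the two news profiles into a sum of \emph{pairwise} comparisons, each governed by log-supermodularity and concavity; this is precisely where Assumptions~\ref{assm_noisy_symmetry} and \ref{assm_logsupermodular} do their work. Secondary care is needed to (i) verify the noisy analogues of Lemma~\ref{lem_foc} (monotonicity of $m_t$ in $\nu_x$, uniqueness, and symmetry of the optimum) and of Lemma~\ref{lem_interior} (a positive-measure set of attentive voters near the median, for strictness), and (ii) confirm that non-attending voters, whose decisions are constant across $\bm\omega_{mn}$ and $\bm\omega_{nm}$, contribute zero to the integral and so never overturn the sign.
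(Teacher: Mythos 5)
Your proposal is correct, and its skeleton is the paper's own: the paper proves Theorem \ref{thm_noisy_policy} by rerunning the proof of Theorem \ref{thm_main}(i) with $v(\mathbf{a},t)$ replaced by $\nu_x(\bm{\omega},t)$, invoking Lemma \ref{lem_noisy_foc} (your Step B) and Lemma \ref{lem_noisy_symmetry} (your Step C); your Step A is the same symmetrization identity. The genuine difference is how the crux, Lemma \ref{lem_noisy_symmetry}, gets proved. The paper writes $\nu_x(\bm{\omega}_{mn},t)=\mathbb{E}_x[u(\widetilde{a}_{\beta},t)\mid\omega_{\beta}=\omega_n]-\mathbb{E}_x[u(-\widetilde{a}_{\beta},t)\mid\omega_{\beta}=\omega_m]$ --- each term conditioning on a single candidate's report --- and then applies Milgrom's MLRP lemma (Lemma \ref{lem_milgrom}) to the map $\omega\mapsto\mathbb{E}_x[u(\widetilde{a}_{\beta},t)+u(-\widetilde{a}_{\beta},t)\mid\omega_{\beta}=\omega]$. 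That identity is legitimate only because the policy profile induced by any strategy profile factorizes across candidates (types are independent and each candidate's strategy depends on his own type), a fact the paper uses tacitly; for a general symmetric $\bm{\Sigma}$ the posterior does not factorize, which is exactly the obstacle you flagged. Your transpose identity $\phi_{nm}(i,j)=\phi_{mn}(j,i)$ circumvents factorization altogether, reducing the comparison to pairwise terms signed directly by Assumption \ref{assm_logsupermodular} and concavity of $u(\cdot,t)$; this proves the lemma for every symmetric $\bm{\Sigma}$, product or not. For the theorem itself the extra generality is moot --- a non-product $\bm{\Sigma}$ is induced by no strategy profile, so both sides of the asserted characterization are empty --- but your argument is the more robust one, and it also delivers the antisymmetry $\nu_x(\bm{\omega}_{nm},0)=-\nu_x(\bm{\omega}_{mn},0)$ needed downstream. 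The two items you defer rather than prove --- the noisy analogues of Lemmas \ref{lem_foc} and \ref{lem_interior} --- are indeed routine: the first is exactly Lemma \ref{lem_noisy_foc} (whose proof the paper omits as analogous), and the second follows from your Step C at $t=0$ together with the inequality $\exp(x)+\exp(-x)>2$ for $x\neq 0$, so flagging them does not leave a real gap.
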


\begin{proof}
See Appendix \ref{sec_proof_noisy}. 
\end{proof}

Theorem \ref{thm_noisy_policy} shows that in any symmetric equilibrium in which candidates adopt non-degenerate strategies, \emph{winner is determined as if news were fully revealing}, meaning that under any news profile, the candidate earning the most centrist news report wins for sure, whereas those earning equidistant news reports from the center split the votes evenly. 

The implication of Theorem \ref{thm_noisy_policy} is twofold. First, the key insight of Theorem \ref{thm_main} (i)--namely voters' characteristics such as marginal attention cost are irrelevant in the determination of equilibrium policies--remains valid. As before, equilibrium policies can be solved by considering an augmented game between candidates only, and the existence of symmetric equilibrium follows from \cite{cheng}.\footnote{The same is true when the news distribution is continuous in policy positions and the policy space is compact and convex, yet we do not state this as a main result in order to keep consistency with the baseline model. }

Second, news technology enters the determination of equilibrium policies, and the effect is more delicate than what we have seen so far (more details to come). 


\paragraph{Attention set}
 For any given pair $\left(\bm{\Sigma}, f\right)$, define the \emph{attention set} of an arbitrary voter $t<0$ by 
\[\mathcal{A}_t\left({\bm{\Sigma}}, f\right)=\left\{{\bf{A}}: \mathbb{E}_{\langle f, {\bf{A}}, {\bm{\Sigma}} \rangle}\left[\exp\left(\mu^{-1}\nu_{\langle f, {\bf{A}}, {\bm{\Sigma}} \rangle}(\bm{\omega},t)\right)\right] \geq 1\right\}.\]
Taking intersection with $\mathcal{E}\left({\bm{\Sigma}},f\right)$ yields the set of policy matrices that draws the voter's attention to politics in equilibrium: 
\[\mathcal{EA}_t\left({\bm{\Sigma}}, f\right)=\mathcal{E}\left({\bm{\Sigma}}, f\right) \cap \mathcal{A}_t\left({\bm{\Sigma}}, f\right),\]
%
In Appendix \ref{sec_noisy_mu}, we verify that the above defined sets vary the same with the marginal attention cost as their equivalents in Theorem \ref{thm_main}. Below we examine how they depend on the informativeness of the news technology. 

As in \cite{handbooktheory}, we adopt Blackwell (1953)'s notion of informativeness: 
 
\begin{defn}
$f$ is more \emph{Blackwell-informative} than $f'$ (\emph{$f'$ is a garble of $f$}, $f \succeq f'$) if there exists a Markov kenel $\rho$ such that for all $\bf{a}$ and $\bm{\omega'}$, 
\[
f'\left(\bm{\omega}' \mid {\bf{a}}\right)=\sum_{\bm{\omega}\in \Omega}f\left(\bm{\omega}\mid {\bf{a}}\right)\rho\left(\bm{\omega}' \mid \bm{\omega}\right). 
\]
\end{defn}

Since \cite{blackwell}, it is well known that garbling destroys information and makes decision makers worse-off in all decision problems. Recently, economists, journalists and political scientists have voiced concerns for the rise of partisan media and fake news (\cite{levendusky};  \cite{handbooktheory}; \cite{handbookempirical}; \cite{gentzkowfakenews}). We seek to understand how these concerns, modeled as garblings of the news technology, can affect equilibrium outcomes through the channel of limited voter attention.

We first examine the effect of garbling on the attention set: 

\begin{thm}\label{thm_noisy_attention}
Assume Assumptions \ref{assm_symmetry} and \ref{assm_u}. Fix any probability matrix $\bm{\Sigma}$ of any order $N \geq 2$, as well as any $f \succeq f'$ that satisfy Assumptions \ref{assm_noisy_symmetry} and \ref{assm_logsupermodular}. Then for any $t<0$ such that the sets $\mathcal{A}_t\left({\bm{\Sigma}}, f\right)$ and $\mathcal{A}_t\left({\bm{\Sigma}},f'\right)$ are different and non-empty, 
\begin{enumerate}[(i)]
\item $\mathcal{A}_t\left({\bm{\Sigma}},f'\right)\subsetneq \mathcal{A}_t\left({\bm{\Sigma}},f\right)$;
\item $\displaystyle \min_{{\bf{A}} \in \mathcal{A}_t\left({\bm{\Sigma}},f'\right)}\nu_{\langle f'', {\bf{A}}, {\bm{\Sigma}}\rangle}\left(\bm{\omega}_{K1},0\right)>\displaystyle \min_{{\bf{A}} \in \mathcal{A}_t\left({\bm{\Sigma}},f\right)}\nu_{\langle f'', {\bf{A}}, {\bm{\Sigma}}\rangle}\left(\bm{\omega}_{K1},0\right)$ for $f''=f,f'$. 
\end{enumerate}
\end{thm}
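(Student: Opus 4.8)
The plan is to handle both parts with a single engine: Blackwell garbling interacting with the convexity of $z \mapsto \exp(z/\mu)$, which is exactly the curvature that governs the attention condition. The attention set is defined purely by the inequality $\mathbb{E}_{\langle g, {\bf{A}}, {\bm{\Sigma}}\rangle}[\exp(\nu_g(\bm{\omega},t)/\mu)] \ge 1$, so no equilibrium input (hence no appeal to Theorem \ref{thm_noisy_policy}) is needed; only Assumptions \ref{assm_symmetry}, \ref{assm_u}, \ref{assm_noisy_symmetry}, \ref{assm_logsupermodular} enter.

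For part (i), fix ${\bf{A}}$ and let $\rho$ be the Markov kernel witnessing $f \succeq f'$. Since $\bm{\omega}'$ is conditionally independent of the policy profile given $\bm{\omega}$, the garbled posterior differential is an average of the posterior differentials under $f$: $\nu_{\langle f', {\bf{A}}, {\bm{\Sigma}}\rangle}(\bm{\omega}', t) = \sum_{\bm{\omega}} \nu_{\langle f, {\bf{A}}, {\bm{\Sigma}}\rangle}(\bm{\omega}, t)\, \pi(\bm{\omega} \mid \bm{\omega}')$, where $\pi(\cdot \mid \bm{\omega}')$ is the posterior over $\bm{\omega}$ induced by $\rho$. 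Applying convexity of $\exp(\cdot/\mu)$ pointwise and then averaging over $\bm{\omega}'$, whose marginal collapses $\pi$ back onto the marginal of $\bm{\omega}$, yields $\mathbb{E}_{\langle f', {\bf{A}}, {\bm{\Sigma}}\rangle}[\exp(\nu_{f'}/\mu)] \le \mathbb{E}_{\langle f, {\bf{A}}, {\bm{\Sigma}}\rangle}[\exp(\nu_{f}/\mu)]$. Thus the inequality defining $\mathcal{A}_t$ is weakly easier to meet under the more informative technology, giving $\mathcal{A}_t({\bm{\Sigma}}, f') \subseteq \mathcal{A}_t({\bm{\Sigma}}, f)$; the hypothesis that the two sets differ upgrades this to the strict inclusion claimed.

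For part (ii) fix $f'' \in \{f, f'\}$ and observe that the objective $D({\bf{A}}) := \nu_{\langle f'', {\bf{A}}, {\bm{\Sigma}}\rangle}(\bm{\omega}_{K1}, 0)$ is the \emph{same} function in both minimizations, so only the feasible set changes and part (i) already delivers the weak inequality $\min_{\mathcal{A}_t(f')} D \ge \min_{\mathcal{A}_t(f)} D$. To make it strict I would first establish a comonotonicity lemma, the noisy analog of the reduction behind Condition (\ref{eqn_lowerbound}): along the family of policy matrices obtained by spreading $0 < a_1 < \cdots < a_N$ outward, both the attention functional $\Phi_g({\bf{A}}) := \mathbb{E}_{\langle g, {\bf{A}}, {\bm{\Sigma}}\rangle}[\exp(\nu_g(\bm{\omega}, t)/\mu)]$ and the index $D({\bf{A}})$ are strictly increasing — the former by the convexity-plus-averaging mechanism above, the latter because log-supermodularity (Assumption \ref{assm_logsupermodular}) together with strict increasing differences (Assumption \ref{assm_u}(iii)) ensures that conditioning on the most asymmetric news $\bm{\omega}_{K1}=(-\omega_K,\omega_1)$ concentrates the posterior policies on the extremes and raises the median voter's perceived gap. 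Comonotonicity forces the minimizer of $D$ over $\mathcal{A}_t(g)$ onto the boundary $\Phi_g = 1$ (interior slack would let one contract policies and lower $D$), so each minimum equals $D$ evaluated at the least-extreme attention-grabbing profile; since part (i) gives the pointwise strict gap $\Phi_{f'} < \Phi_f$ on the active region, the surface $\{\Phi_{f'} = 1\}$ lies strictly outward of $\{\Phi_f = 1\}$ there, and strict monotonicity of $D$ turns this into $\min_{\mathcal{A}_t(f')} D > \min_{\mathcal{A}_t(f)} D$, with the non-emptiness and differentness hypotheses ensuring the gap is not vacuous.

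I expect the comonotonicity/scalar-reduction step to be the main obstacle. The attention set lives in the $N$-dimensional space of policy matrices, and, as a simple product-order example shows, monotonicity of both $\Phi_g$ and $D$ in the coordinates $(a_1,\dots,a_N)$ is \emph{not} by itself enough to push the constrained minimum strictly up: I must show that the minimum of $D$ over the attention set is governed by a single extremism index moving together with $D$, rather than being attained at an interior point where the several positions trade off. This is precisely where log-supermodularity and increasing differences do real work, and it is the noisy-news counterpart of the ``tedious but straightforward'' reduction invoked for Lemma \ref{lem_attentionset}.
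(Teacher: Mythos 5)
Your Part (i) is correct and is essentially the paper's own argument: your mixture representation of the garbled posterior differentials is exactly the paper's Lemma \ref{lem_mps}, the convexity step is the same application of Jensen's inequality, and in both cases the strictness of the inclusion comes from the hypothesis that the two attention sets differ.

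The genuine gap is in Part (ii), and it sits precisely where you flagged it: the ``comonotonicity lemma'' is never proved, and the strict inequality is the entire content of the claim (the weak one is trivial from nestedness). Your plan---restrict attention to a one-parameter ``spreading'' family, show that $\Phi_g$ and $D$ both strictly increase along it, argue that $D$-minimizers sit on the boundary $\left\{\Phi_g=1\right\}$, and then order the two boundaries---is not how the paper proceeds, and as stated it does not close: the feasible set is an $N$-dimensional region of policy vectors $0<a_1<\cdots<a_N$, a minimizer of $D$ over $\left\{\Phi_g\geq 1\right\}$ need not lie on any spreading ray, and for a voter $t<0$ the terms $\exp\left(\nu_g(\bm{\omega},t)/\mu\right)$ entering $\Phi_g$ move in opposite directions as policies spread, so even the monotonicity of $\Phi_g$ along such families is not established. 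Note also that your boundary-ordering step invokes a \emph{pointwise strict} gap $\Phi_{f'}<\Phi_f$, whereas your Part (i) only delivers a weak pointwise inequality; strictness of the Jensen step requires the garbling to genuinely mix distinct posterior differentials at the point in question.

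What the paper does instead is a direct \emph{scalar reduction} with no level-set geometry at all. Writing $x''=\langle f'',{\bf{A}},\bm{\Sigma}\rangle$ and $\Delta_{x'',mn}=\mu^{-1}\nu_{x''}\left(\bm{\omega}_{mn},0\right)$, symmetry (Lemma \ref{lem_noisy_symmetry}) makes each diagonal term contribute $\exp\left(\Delta_{x'',mm}\right)=1$ and pairs the off-diagonal terms into $\gamma\left(\Delta_{x'',mn}\right)=\exp\left(\Delta_{x'',mn}\right)+\exp\left(-\Delta_{x'',mn}\right)$; since $\Delta_{x'',K1}$ is the maximal differential and $\gamma$ is increasing on $\mathbb{R}_+$, every pair is at most $\gamma\left(\Delta_{x'',K1}\right)$; and Assumption \ref{assm_u} (iii)--(iv) pulls the type out as a factor $\exp\left(\kappa t/\mu\right)$. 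This yields
\[
\mathbb{E}_{x''}\left[\exp\left(\widetilde{\Delta}_{x'',mn}(t)\right)\right]
\leq \exp\left(\kappa t/\mu\right)\left[\sum_{m=1}^K \mathbb{P}_{x''}\left(\bm{\omega}_{mm}\right)+\frac{1}{2}\left(1-\sum_{m=1}^K \mathbb{P}_{x''}\left(\bm{\omega}_{mm}\right)\right)\gamma\left(\Delta_{x'',K1}\right)\right],
\]
i.e.\ an upper bound on the attention functional by an \emph{increasing function of the single scalar} $D({\bf{A}})=\nu_{x''}\left(\bm{\omega}_{K1},0\right)$. Hence ``${\bf{A}}$ grabs voter $t$'s attention under $f''$'' immediately forces $D({\bf{A}})$ above an explicit threshold (Condition (\ref{eqn_necessary2})) which, after the paper bounds the diagonal news mass $\sum_m \mathbb{P}_{x''}\left(\bm{\omega}_{mm}\right)$ from below by $1/K$, depends only on $\left(\mu,\kappa,\lvert t\rvert, K\right)$---not on ${\bf{A}}$ and not on the technology. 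This bound is the noisy-news counterpart of the ``tedious but straightforward algebra'' behind Lemma \ref{lem_attentionset}, and it is exactly the ingredient your proposal lacks: it converts membership in the attention set into a lower bound on the very index being minimized, so that the Part (i) comparison (attention under $f'$ forces the $f$-inequality to hold with strict slack) yields the strict ranking of the minima without any boundary-attainment or comonotonicity argument.
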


\begin{proof}
See Appendix \ref{sec_proof_noisy}. 
\end{proof}

Theorem \ref{thm_noisy_attention} conveys an intuitive message: as the news technology becomes less Blackwell-informative, retaining voters' attention becomes harder and requires that the policy differential between the varying types of the candidates be greater on average. To develop intuition, notice that 
\[\nu_{\langle f, {\bf{A}}, {\bm{\Sigma}}\rangle}\left(\bm{\omega}_{K1},0\right)=\underbrace{\mathbb{E}_{\langle f,{\bf{A}}, {\bm{\Sigma}}\rangle}\left[u\left(\widetilde{a}_{\beta},0\right) \mid \omega_{\beta}=\omega_1\right]}_\text{(1)}-\underbrace{\mathbb{E}_{\langle f,{\bf{A}}, {\bm{\Sigma}}\rangle}\left[u\left(\widetilde{a}_{\beta},0\right) \mid \omega_{\beta}=\omega_K\right]}_\text{(2)},\]
where (1) and (2) is the median voter's expected valuation of candidate $\beta$'s policy upon hearing the most centrist news report and the most extreme news report, respectively. Expanding these terms, we obtain
\[(1)=\frac{\sum_{i=1}^N f_{\beta}\left(\omega_1 \mid a_i\right)u(a_i,0)\sigma_i}{\sum_{i=1}^N f_{\beta}\left(\omega_1 \mid a_i\right)\sigma_i},\]
and 
\[(2)=\frac{\sum_{i=1}^N f_{\beta}\left(\omega_K \mid a_i\right)u(a_i,0)\sigma_i}{\sum_{i=1}^N f_{\beta}\left(\omega_K \mid a_i\right)\sigma_i},\]
where $\sigma_i=\sum_{j} \sigma_{ji}$ is the marginal probability that candidate $\beta$'s policy is $a_i$. By log-supermodularity, we know that (1) weighs centrist policies heavily and (2) extreme policies heavily. Thus, an enlarging gap between (1) and (2) means that the centrist and extreme policies have become more different on average, holding probabilities and the news technology fixed.


\paragraph{Media-driven extremism} The overall effect of garbling is illustrated by the next example:

\begin{example}[label=exa:cont2]
In Example \ref{exa:cont1}, let the policy spaces $A_c$'s can be  arbitrarily rich, and suppose the news signals can take either the centrist value $\left(\omega=\pm  \omega_1\right)$ or the extreme value $\left(\omega=\pm \omega_2\right)$, $0<\omega_1<\omega_2<1$. 
The news technology is $f_{\xi}=f_{\alpha,\xi} \times f_{\beta, \xi}$, where $f_{\beta,\xi}\left(\omega_2 \mid a\right)=a+\xi(1-a)$ is the probability that we hear the extreme news report when the policy is $a$. The parameter $\xi \in (0,1)$ captures the degree of \emph{slanting}, as well as the Blackwell-informativeness of the news technology: as $\xi$ increases, we are more likely to hear extreme news reports, ceteris paribus, and the quality of news deteriorates. 

Consider equilibria in which candidates adopt pure symmetric strategies with policy realizations $-a_2 <-a_1 < 0 < a_1 < a_2$ (indeed, all equilibria take this form and hence are strict). Figure \ref{figure_noisy} depicts the results for three levels of $\xi$'s. In particular, the diamonds represent the $\left(a_1, a_2\right)$'s that can arise in equilibrium, and the lines represent the boundaries at which voter $-.001$ is indifferent between paying attention or not. 


\begin{figure}[!h]
    \centering
     \includegraphics[width=8cm]{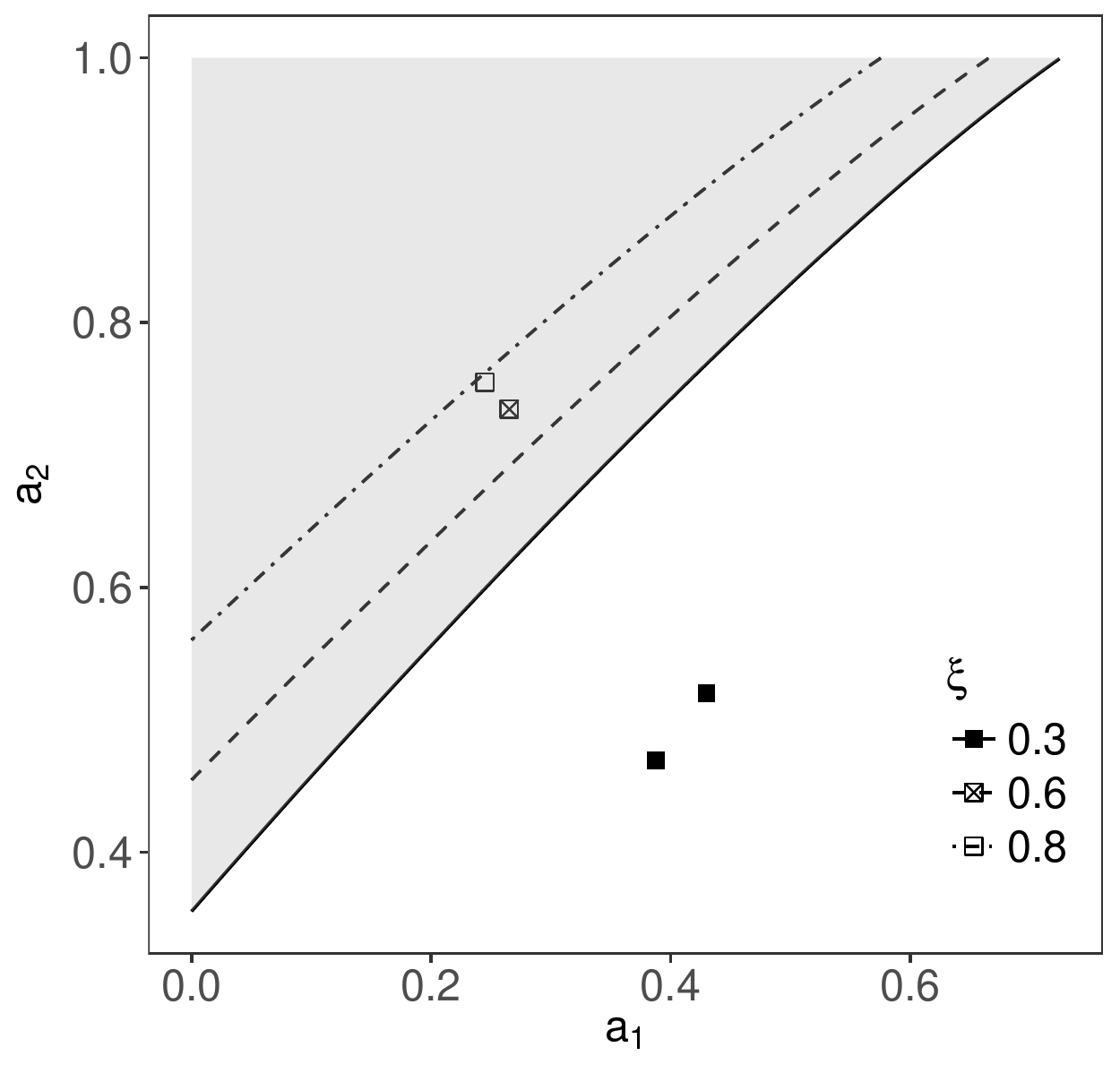}
    \caption{Equilibrium outcomes when $t_c=1/4$, $t_e=3/4$, $R=8$, $\delta_{+} = 3$ and $\delta_{-}=1$: diamonds represent the $\left(a_1,a_2\right)$'s that can arise in equilibrium when $A_{\beta}$ consists of 50 evenly spaced points in $[0,1]$; the shaded areas above the lines represent the attention sets when $\tau=.001$ and  $\mu=1$.  }\label{figure_noisy}
\end{figure}

As $\xi$ increases, the attention set depicted in the shaded area shrinks and moves northwest, and the remaining points exhibit a greater degree of policy differential than before. Equilibrium policies converge to candidates' bliss points $(1/4, 3/4)$, though the trajectory can depend subtly on their utility functions and the news technology. As in Section \ref{sec_eqm}, policy differential arises as a result of candidates' policy preferences: under the assumption of log-supermodular news, moving towards the center increases the winning probability by Theorem \ref{thm_noisy_policy}, suggesting, once again, that the winner must have a stronger policy preference than the loser does in order to prevent the convergence to the center from happening.\footnote{This feature distinguishes our noisy news from the aggregate shock to voters' preferences in probabilistic voting models, as the latter prevents the candidates as in \cite{wittman} and \cite{calvert} from moving towards the center.} 

Since garbling affects both the attention set and policies, the overall effect is more subtle than what we have seen so far. Two observations are immediate. First, any equilibrium that grabs voters' attention after garbling must exhibit a greater of policy differential than those failing to do so prior to garbling---a phenomenon we term as \emph{media-driven extremism}. 

Second, in order to back out the degree of garbling from real-world data, we need to identify shocks that affect only the attention set but not the policies, or vice versa.  By Theorem \ref{thm_main} and its corollary in Appendix \ref{sec_noisy_mu}, the attention cost shifters as  discussed in Section \ref{sec_cost} seem to serve this purpose well. 
\end{example}

\section{Concluding Remarks}\label{sec_conclusion}
As discussed in Section \ref{sec_discussion}, a main prediction of our analysis--which is robust to the attention function being used--is that equilibrium policies can be solved by considering an augmented incomplete information game between candidates only. To make further progress--e.g., prove existence of pure strategy equilibrium, give conditions for equilibrium uniqueness or multiplicity, provide detailed characterizations--we need to make additional assumptions about candidates and the news technology, but not about voters or their attention function. For now, we choose not to pursue this route, because the exercise is tangential to our focus. 

For the same reason, we well acknowledge the possibility of having multiple equilibria, and believe that it indeed makes the interpretation of attention- and media-driven extremism easier rather than harder. In case the reader wishes to conduct equilibrium refinement, he or she can be assured that this exercise--which replaces the set of equilibrium policies with its subset--leaves our main messages untouched. 

Caution should be exercised when extrapolating our results to the study of elite polarization. First, such exercise commands a holistic characterization of the policy distribution and requires the same additional assumptions as discussed above. Second, as noted by \cite{barbermccarthy}, there is no necessary logical connection between polarization and the reduction in the dimensionality of political conflict as documented in \cite{poole}---a key element of which is that intra-party division loses its predictive power of roll-call vote decisions during most of the twentieth century, only to rebound to the 90 percent level in recent years. 

The lesson is twofold. First, a richer model that captures party as a team, career concerns, etc. is needed for better predicting the outcomes of congressional voting (see, e.g., \cite{polbornsnyder} for researches along this line). Second, when testing the predictive power of our model, much effort should be devoted to the identification of shocks that affect only the attention cost or the news technology but not the countervailing factors as laid out in  \cite{barbermccarthy}.  

Finally, it seems plausible to apply our result to the study of industrial organization topics, such as product differentiation against rationally inattentive consumers. We hope someone, maybe ourselves, will pursue this research agenda in the future.

\appendix

\section{Rational Inattentive Voters: Part II}\label{sec_voter}

\subsection{Selective Exposure, Confirmatory Bias and Bit Occasional Surprise}
\begin{figure}[ht]
\centering
\includegraphics[scale=.55]{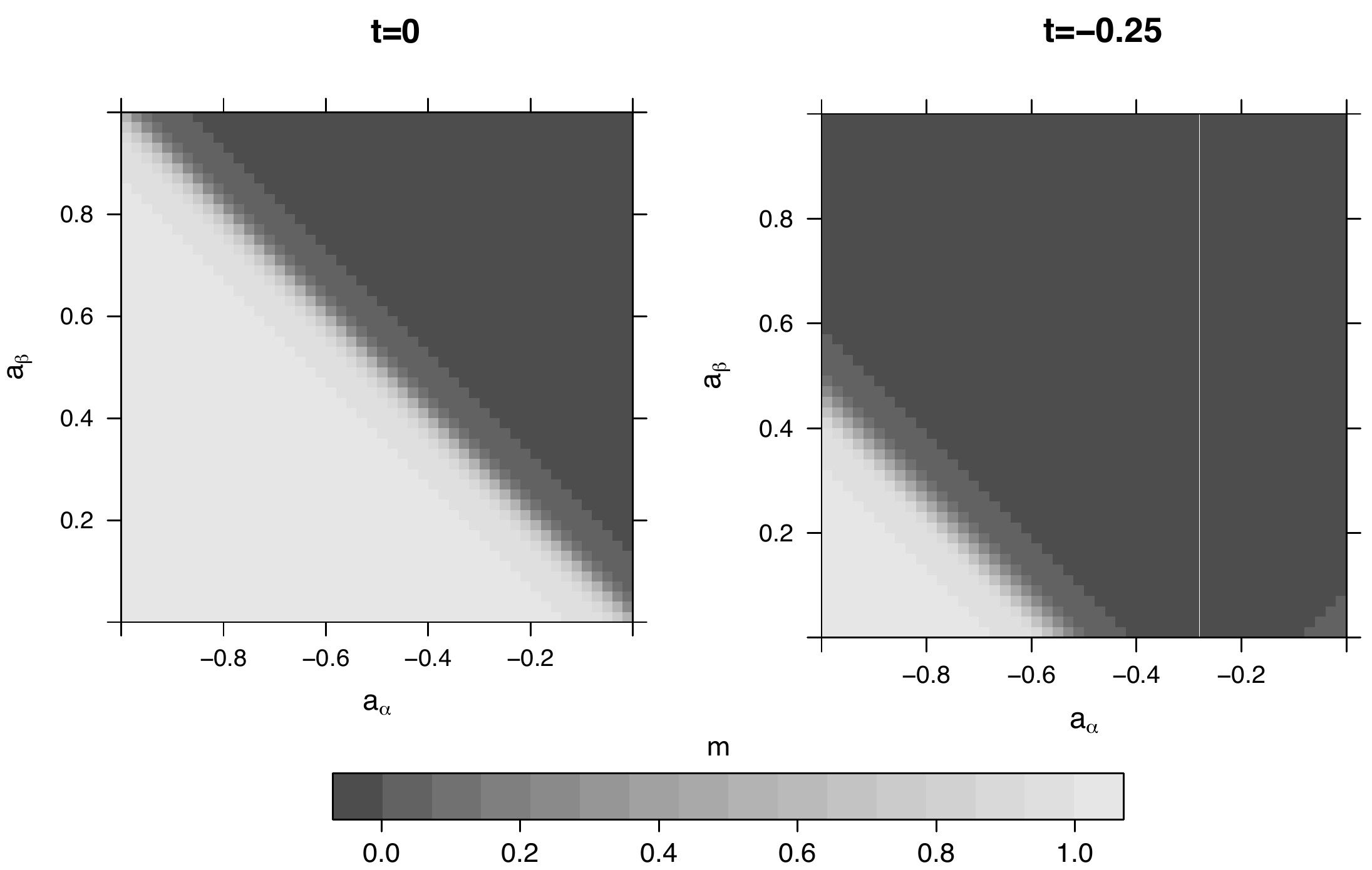}
\caption{Plot optimal attention strategies against ${\bf{a}}$ for $t=0$ and $t=-.25$: $a_c$ is uniformly distributed on $\Theta_c$, $c=\alpha, \beta$, $u(a,t)=-|t-a|$ and $\mu=.02$. }\label{figure_interpersonal_m}
\end{figure}

Figure \ref{figure_interpersonal_m} depicts the optimal attention strategies of two voters: $t=0$ and $t=-.25$. Compared to the median voter $t=0$, the pro-$\alpha$ voter $t=-.25$ more prefers candidate $\alpha$ than candidate $\beta$ ceteris paribus. Even in light of an arbitrarily small attention cost, such predisposition manifests itself through attention allocation, leading the pro-$\alpha$ voter to focus on whether candidate $\beta$ is much closer to the center than candidate $\alpha$ is, but nothing else.\footnote{An antecedent of this result appears in \cite{calvert}, showing that biased voters may benefit most from consulting biased experts when choosing between multiple information sources. } This shouldn't come as a surprise, since voters learn most from occasional surprises that challenge their predispositions when the default is to pay no attention and act based on the ideology. 

Outcome-wise, it is as if the pro-$\alpha$ voter selectively exposes himself to some pro-$\alpha$ media outlet or personalized news aggregator that gathers information from the sources on his behalf.\footnote{We emphasize the role of media outlet as information aggregator rather than information sources, and further pursue this agenda in our companion paper \cite{fragmentation}. } With high probability measured by the area of the dark area, the recommendation is to choose candidate $\alpha$, thus exhibiting confirmatory biases (Lemma \ref{lem_foc} (iv) formalizes this observation).\footnote{Confirmatory bias refers to the tendency to search for, interpret, favor and recall information in a way that confirms one's preexisting beliefs or hypotheses. The question of whether the consumption of political news and contents exhibits confirmatory bias has attracted much attention of economists and political scientists. We refer the reader to \cite{gentzkow2}, \cite{iyengar} and the references therein for thorough reviews of the literature.}  In the case where the recommendation is in favor of $\beta$, the evidence must be overwhelming in that candidate $\beta$ is known to be much closer to the center than candidate $\alpha$ is, hence even the pro-$\alpha$ voter should act accordingly.\footnote{See \cite{gentzkowgep} and the recent coverage of Fox News for examples of such recommendations. \cite{gentzkowgep} provides justifications based on competition and reputation concerns. Here and in \cite{fragmentation}, they are part of the demand of rationally inattentive voters. }


\subsection{Non-Monotonicity}\label{sec_nonmonotone}
An intriguing consequence of rational, flexible attention allocation is that voter behaviors can vary non-monotonically with their personal characteristics such as political predisposition and marginal attention cost.\footnote{Such flexibility pertains to, but is not unique to, RI attention allocation, and similar conclusions can be drawn for alternative posterior-separable attention functions. The material is available upon request. }  Below we conduct  analysis of attention levels and decision probabilities, holding the equilibrium policy distribution fixed.

\paragraph{Who pays more attention?}
Consider first the level of paid attention, measured by the mutual information between policies  and the voting decision. Table \ref{table1} compiles the results of three voters: $t=0$, $t=.05$ and $t=.2$. The median voter, by definition, holds a neutral position. While he focuses mainly on whether $a_{\alpha}+a_{\beta}\geq 0$ or not, he still distinguishes between the policy profiles within each category when attention is not too costly. By contrast, voter $t=-.2$ has a strong predisposition to support candidate $\alpha$, whereas voter $t=-.05$ focuses sharply on whether ${\bf{a}}=(-.4, .-.01)$ or not due to his slight preference for candidate $\alpha$. In the end, voter $t=-.05$ pays most attention, followed by the median voter and then voter $t=-.2$. 


\begin{table}
\begin{center}
  \begin{tabular}{ | c | c | c | c | c | c | }
    \hline
    $t$ & $I_t$ &  $m_t\left(-.01,.01\right)$ 
	& $m_t\left(-.01,.4\right)$ 
	& $m_t\left(-.4,.01\right)$ 
	& $m_t\left(-.4,.4\right)$ \\ \hline \hline
    -.2 & 0 & 0 & 0 & 0 & 0\\ \hline
    -.05 & .315 & .296&.006 & .930 & .148\\ \hline
    0 & .312 & .500 & .012 & .987 & .500\\
    \hline
  \end{tabular}
  \caption{Attention levels of various voters: policies equal $\pm .01$ and $\pm .4$ with probability $1/2$, $u(a,t)=-|t-a|$ and $\mu=.09$.}\label{table1}
   \end{center}
\end{table}

\paragraph{Effect of marginal attention cost} Consider next the average propensity to choose candidate $\beta$. Table \ref{table2} compiles the result of voter $t=-.05$, which first increases with the marginal attention cost and then decreases after the latter reaches a threshold. Interestingly, what drives this pattern seems to be the flexibility that pertains to, but is not unique to, RI attention allocation.

As $\mu$ increases, the voter garbles his decisions at the varying policy profiles. During the first phase, the degree of garbling is similar across all policy profiles, meaning that the pointwise decision probability is inflated if it is initially below the average and is deflated otherwise. Since there are three policy profiles in the former category and one in the latter category, the average decision probability mechanically increases. During the second phase, most garbling occurs at the policy profile $\left(-.4, .01\right)$ where the voter would have chosen candidate $\beta$ under perfect information but does so less often now. As a result, the average decision probability decreases. 


\begin{table}
\begin{center}
  \begin{tabular}{ | c | c | c | c | c | c | }
    \hline
    $\mu$ & $\overline{m}_t$ &  $m_t\left(-.01,.01\right)$ 
	& $m_t\left(-.01,.4\right)$ 
	& $m_t\left(-.4,.01\right)$ 
	& $m_t\left(-.4,.4\right)$ \\ \hline \hline
    .01 & .261 & .046 & .000 & 1 & .000\\ \hline
    .10 & .344 & .300 & .009 &  .905 & .162\\ \hline
    .20 & .283 & .263 & .048  & .627 & .148 \\
    \hline
  \end{tabular}
  \caption{Probabilities of choosing candidate $\beta$ at various levels of $\mu$'s: policies equal $\pm .01$ and $\pm .4$ with probability $1/2$, $u(a,t)=-|t-a|$ and $t=-.05$. }\label{table2}
   \end{center}
\end{table}

\paragraph{The role of flexibility} To facilitate comparison, consider a more stylized setting in which voters can pay a cost $c(I)$ and fully observe the policy profile with probability $p(I)$, where $p$ and $c$ are increasing smooth functions defined on $\mathbb{R}_+$.  For any pro-$\alpha$ voter, the problem becomes to choose the level (but not the allocation) of attention that maximizes 
\[p(I)\cdot \mathbb{E}_{\sigma}\left[v\left(\widetilde{\bf{a}},t\right)\right]-\mu \cdot c\left(I\right).\]Under the assumption that $u$ has strict increasing differences in $(a,t)$ and hence $v\left({\bf{a}}, t\right)$ is  strictly increasing in $t$, it is easy to show that $I_t$ is increasing in $t$ and $\overline{m}_t$ is decreasing in $\mu$ among pro-$\alpha$ voters.

 \paragraph{Lessons} By now, there is a broad scholarly consensus that extreme voters are activists, that they are more attentive to and engaged in politics (\cite{abramowitzbook}). Evidence pertaining to the middle is mixed (\cite{barbermccarthy}), with some asserting that the middle is apathetic, unengaged and uninformed (\cite{abramowitzbook}), and others insisting that the antipathy is exaggerated and is a measurement issue at large (\cite{broockman}). Our theory sheds light on this debate. On the one hand, we can induce high levels of attention among extreme voters by setting their marginal attention costs low, or by invoking the ``refocusing argument'' as provided in  the explanation of Table 1. On the other hand, Lemma \ref{lem_interior} shows that in environments like ours, the middle is always attentive, albeit at a level that could be arbitrarily low. 
 
The question of whether mass polarization is on the rise has triggered heated debates among the public. When it comes to measuring polarization, economists prefer to look at actual votes rather than self-reported political views, because the former is ``Closer to the kind of `revealed preference' measure economists prefer'' (\cite{gentzkow}). Our addition is twofold. First, if measured behaviors are the result of voters paying limited attention to their political surroundings, then changes in the information environment per se can already cause variations thereof, even if the policy distribution is held fixed. Thus, one should not equate measurements with intrinsic preferences and should instead tease them apart using the methods proposed and reviewed in \cite{caplindean}. Second, in light of the subtle results presented in Table 2, one should not be too surprised that the evidence on mass polarization as documented in \cite{fiorina} and \cite{gentzkow} is at best mixed.

%


\section{Omitted Proofs}\label{sec_proof}

\subsection{Proofs of Section \ref{sec_cost}}\label{sec_proof_cost}
\noindent Proof of Lemma \ref{lem_foc} (iv)
\begin{proof}
The problem of voter $t$ boils down to choosing the likelihood ratio $\Lambda$ that maximizes:
\[\mathbb{E}_{\sigma}\left[v\left(\widetilde{\bf{a}},t\right) \frac{\Lambda\exp\left(\frac{v\left(\widetilde{\bf{a}},t\right)}{\mu}\right)}{\Lambda\exp\left(\frac{v\left(\widetilde{\bf{a}},t\right)}{\mu}\right)+ 1}\right]-\mu \cdot I\left(\Lambda, \sigma\right),\]
where $I\left(\Lambda, \sigma \right)$ stands for the mutual information between the policies and the voting decision. Under Assumption \ref{assm_u} (iv), the above objective function has strict increasing differences in $\left(\Lambda, t\right)$, and the result follows from \cite{milgromshannon}. 
\end{proof}

\subsection{Proofs of Section \ref{sec_eqm}}\label{sec_proof_eqm}
 Define the function $\gamma: \mathbb{R} \rightarrow \mathbb{R}$ by 
\begin{equation}\label{eqn_gamma}
\gamma(x)=\exp(x)+\exp(-x), 
\end{equation}
and notice the following properties: (1) $\gamma \geq 2$ and the equality holds at $x=0$, (2) $\gamma'>0$ on $\mathbb{R}_+$, (3) $\gamma''>0$, and (4) for any $b>1$, the equation $\gamma(x)=2b$ has a unique positive root
	\begin{equation}
	\gamma^{-1}(2b)=\log\left(b+ \sqrt{b^2-1}\right)	
	\end{equation}

In what follows, write 
\begin{equation}
\Delta_{ij}(t)=\mu^{-1}v\left({\bf{a}}_{ij}, t\right), 
\end{equation}
and drop the notation of $t$ in the case of $t=0$. 

The next lemma is useful for the upcoming analysis:  
\newtheorem{lemma}{Lemma} \setcounter{lemma}{-1}
\begin{lemma}\label{lem_symmetry}
Under Assumptions \ref{assm_symmetry} and \ref{assm_u}(i), the following hold true for any policy matrix $\bf{A}$ of any order $N$: 
\begin{enumerate}[(i)]
\item $\Delta_{ij}=-\Delta_{ji}$ $\forall i, j=1,\cdots, N$;
\item $\Delta_{ij}>0$ $\forall i>j$;
\item $\Delta_{N1}=\displaystyle \max_{i,j}\Delta_{ij}$.
\end{enumerate}
\end{lemma}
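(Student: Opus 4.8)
The plan is to reduce all three claims to a single explicit expression for $\Delta_{ij}$ at the median type. By definition $v(\mathbf{a},t)=u(a_{\beta},t)-u(a_{\alpha},t)$ and $\mathbf{a}_{ij}=(-a_i,a_j)$, so evaluating at $t=0$ gives $\Delta_{ij}=\mu^{-1}\bigl[u(a_j,0)-u(-a_i,0)\bigr]$. The crux of the argument---and really the only substantive step---is to invoke the symmetry of the environment: Assumption \ref{assm_symmetry}(i) states $u(a,t)=u(-a,-t)$, and setting $t=0$ yields $u(a,0)=u(-a,0)$, i.e.\ $u(\cdot,0)$ is even. Substituting $u(-a_i,0)=u(a_i,0)$ then collapses the expression to the symmetric form $\Delta_{ij}=\mu^{-1}\bigl[u(a_j,0)-u(a_i,0)\bigr]$, in which both arguments $a_i,a_j$ are positive. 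Everything else follows by inspection.

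From this form, Part (i) is immediate, since interchanging $i$ and $j$ flips the sign: $\Delta_{ji}=\mu^{-1}\bigl[u(a_i,0)-u(a_j,0)\bigr]=-\Delta_{ij}$. For Part (ii) I would use Assumption \ref{assm_u}(i), which at $t=0$ says $u(\cdot,0)$ is strictly decreasing on $[0,1]$. Whenever $i>j$ we have $a_i>a_j>0$, so $u(a_j,0)>u(a_i,0)$ and hence $\Delta_{ij}>0$.

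Part (iii) uses the same monotonicity. To maximize $\Delta_{ij}=\mu^{-1}\bigl[u(a_j,0)-u(a_i,0)\bigr]$ over $(i,j)$ one maximizes $u(a_j,0)$ and minimizes $u(a_i,0)$ separately; since $u(\cdot,0)$ is strictly decreasing on $[0,1]$, the former is attained at the smallest policy $a_1$ (so $j=1$) and the latter at the largest policy $a_N$ (so $i=N$), giving $\max_{i,j}\Delta_{ij}=\mu^{-1}\bigl[u(a_1,0)-u(a_N,0)\bigr]=\Delta_{N1}$. I do not anticipate a genuine obstacle here: the lemma is essentially bookkeeping once the even-symmetry identity $u(a,0)=u(-a,0)$ is recorded, and the strict monotonicity of $u(\cdot,0)$ on the positive half-line does all the remaining work. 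The only point demanding mild care is keeping the roles of the two indices straight---which index sits in the $\alpha$-slot versus the $\beta$-slot---since a sign error there would interchange Parts (ii) and (iii).
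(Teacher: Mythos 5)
Your proof is correct and follows essentially the same route as the paper's one-line argument: evaluate $\Delta_{ij}$ at $t=0$, use the symmetry $u(a,0)=u(-a,0)$ from Assumption \ref{assm_symmetry}(i) to reduce to $\mu^{-1}\bigl[u(a_j,0)-u(a_i,0)\bigr]$ with both arguments in $(0,1]$, and then apply the strict monotonicity of $u(\cdot,0)$ on $[0,1]$ from Assumption \ref{assm_u}(i). In fact your bookkeeping is the correct one: the paper's proof writes $v({\bf{a}}_{ij},0)=u(a_i,0)-u(a_j,0)$ with the indices transposed (a typo, since that sign convention would contradict Parts (ii) and (iii)), whereas your expression $u(a_j,0)-u(a_i,0)$ is the one consistent with ${\bf{a}}_{ij}=(-a_i,a_j)$ and $v({\bf{a}},t)=u(a_\beta,t)-u(a_\alpha,t)$.
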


\begin{proof}
In the evaluation of $v\left({\bf{a}}_{ij},0\right)=u\left(a_i,0\right)-u\left(a_j,0\right)$, using the assumption that $u\left(\cdot, 0\right)$ is strictly increasing on $[-1,0]$, strictly decreasing on $[0,1]$ and symmetric around $a=0$ gives the desired result.
\end{proof}

\bigskip 

\noindent Proof of Lemma \ref{lem_interior}
\begin{proof}
Fix any non-degenerate strategy profile for the candidates. 
Let $\left[{\bf{A}}, {\bm{\Sigma}}\right]$ be the corresponding matrix representation, where ${\bf{A}}$ and $\bm{\Sigma}$ are square matrices of order $N \geq 2$. By Lemma \ref{lem_symmetry}, 
\begin{align*}
\mathbb{E}_{[{\bf{A}}, {\bm{\Sigma}}]} \left[\exp\left(\widetilde{\Delta}_{ij}\right) \right]
&=\sum_{i=1}^N \sigma_{ii} \exp\left(\Delta_{ii}\right) + \sum_{i=1}^N \sum_{j=1}^{i-1} \sigma_{ij} \gamma\left(\Delta_{ij}\right)\\
&>\sum_{i=1}^N \sigma_{ii} \cdot 1 + \sum_{i=1}^N \sum_{j=1}^{i-1} \sigma_{ij} \cdot 2=1, 
\end{align*}
and $\mathbb{E}_{[{\bf{A}}, {\bm{\Sigma}}]} \left[\exp\left(-\widetilde{\Delta}_{ij}\right)\right]>1$, where both inequalities are strict because $N\geq 2$. Then by continuity, the above inequalities are strict in a neighborhood $I$ of $t=0$, and the result follows from Lemma \ref{lem_foc}. 
\end{proof}

\noindent Proof of Lemma \ref{lem_attentionset}
\begin{proof}
Let $t$ and $\bm{\Sigma}$ be as in Theorem \ref{thm_main}. Below we demonstrate that as  $\mu$ grows from zero to infinity, 
\begin{enumerate}[(i)]
\item the set $\mathcal{A}_t\left(\bm{\Sigma}, \mu\right)$ shrinks;
\item the term $\min\left\{u(a_1,0)-u(a_N,0): {\bf{A}} \in \mathcal{A}_t(\bm{\Sigma}, \mu)\right\}$ increases;
\item the objects in Parts (i) and (ii) do not always stay constant. 
\end{enumerate}

\bigskip 

\noindent Part (i): Let $\bf{A}$ be any policy matrix of order $N$. By optimality, the mutual information between the policy profile induced by $[{\bf{A}}, {\bf{\Sigma}}]$ and the voter's optimal decision is decreasing in $\mu$, and using this fact in the assessment of the set $\mathcal{A}_t\left(\bm{\Sigma}, \mu\right)$ gives the desired result. 

\bigskip

\noindent Part (ii): Let $\bf{A}$ be as above. By Assumption \ref{assm_u} (iii) and (iv),  
\begin{align*} 
\mathbb{E}_{[{\bf{A}}, {\bm{\Sigma}}]}\left[\exp\left(\widetilde{\Delta}_{ij}(t)\right)\right] 
\leq \exp\left(\kappa t/\mu\right) \mathbb{E}_{[{\bf{A}},{\bm{\Sigma}}]}\left[\exp\left(\widetilde{\Delta}_{ij}\right)\right], 
\end{align*}
where the last term of the above inequality can be bounded above as follows:
\begin{align*}
\mathbb{E}_{[{\bf{A}},{\bm{\Sigma}}]}\left[\exp\left(\widetilde{\Delta}_{ij}\right)\right]
=&\sum_{i=1}^N \sigma_{ii} \exp\left(\Delta_{ii}\right)+ \sum_{i=1}^N \sum_{j=1}^{i-1} \sigma_{ij} \gamma\left(\Delta_{ij}\right)\\
\leq & \sum_{i=1}^N \sigma_{ii} \cdot 1+\frac{1}{2}\left(1-\sum_{i=1}^N \sigma_{ii}\right)\gamma\left(\Delta_{N1}\right).    
\end{align*}
Thus, a necessary condition for $\mathbb{E}_{[{\bf{A}}, {\bm{\Sigma}}]}\left[\exp\left(\widetilde{\Delta}_{ij}\left(t\right)\right)\right]\geq 1$ to hold true is  
 \[
 	\exp(\kappa t /\mu)\left[\sum_{i=1}^N \sigma_{ii} +\frac{1}{2}\left(1-\sum_{i=1}^N \sigma_{ii}\right) \gamma\left(\Delta_{N1}\right)\right] \geq 1,\] 
 or equivalently
	\[\gamma\left(\Delta_{N1}\right) \geq 2b, \]
	where 
	\[b \triangleq \frac{\exp(\kappa \lvert t \rvert/ \mu ) -\sum_{i=1}^N \sigma_{ii}}{1-\sum_{i=1}^N \sigma_{ii}}>1.\] 
 Since $\gamma'>0$ on $\mathbb{R}_{+}$ and the equation $\gamma(x)=2b$ has a unique positive root
	\[\gamma^{-1}(2b)=\log\left(b+ \sqrt{b^2-1}\right),\]
     the above necessary condition is equivalent to 
	\[
	u(a_{1},0)-u(a_{N},0) \geq \mu \gamma^{-1}(2b)\triangleq \delta(\mu).
	\]

	It is easy to verify that $\delta(\mu)$ is strictly positive for all $\mu>0$. Below we demonstrate that it is strictly increasing in $\mu$:
	
	\begin{description}
	
	\item[Step 1] Show that $\frac{d\delta(\mu)}{d \mu}>0$ when $\mu>\frac{\kappa}{2\log2}$. For ease, write $y= \exp(\kappa\lvert t \rvert/ \mu)$, $z=b+\sqrt{b^2-1}$  and $\Sigma=\sum_{i=1}^N \sigma_{ii}$. Notice that $y, z>1$ and $\Sigma<1$. 
	
	Differentiating $\delta(\mu)$ with respect to $\mu$, we obtain
		\[
		\frac{d\delta(\mu)}{d\mu}=\log z - \frac{y\log y}{(1-\Sigma)\sqrt{b^2-1}}.
		\]
		Since 
		\[\log z  =	\log [1+(z-1)]  \geq \frac{2(z-1)}{2+(z-1)}=\frac{2(z-1)}{z+1},\]
        and 
        \[\log y = \log[1+(y-1)]  \leq \frac{(y-1)}{\sqrt{1+(y-1)}}=\frac{y-1}{\sqrt{y}}, \]
		a sufficient condition for $\frac{d\delta(\mu)}{d\mu}>0$ to hold true is  
		\[
		\frac{2(z-1)}{z+1} - \frac{y \cdot \frac{y-1}{\sqrt{y}}}{(1-\Sigma)\sqrt{b^2-1}}>0.	
		\]
		Plugging $z=b+\sqrt{b^2-1}$ and $b=\frac{y -\Sigma}{1-\Sigma}$ into the above inequality and rearranging, we obtain 
		\begin{multline*}
			2\left(y-1+ \sqrt{(y-1)(y+1-2\Sigma)}\right)\sqrt{y+1-2\Sigma}\\ -\left(y+1-2\Sigma)+\sqrt{(y-1)(y+1-2\Sigma}\right)\sqrt{y(y-1)}>0.
		\end{multline*}
		This is equivalent to
		\[
		\sqrt{(y-1)(y+1-2\Sigma)} \left(\sqrt{y-1}+\sqrt{y+1-2\Sigma}\right)\left(2-\sqrt{y} \right)>0, 
		\]
		which holds true if and only if $2-\sqrt{y}>0$, or equivalently   $\exp(\kappa \lvert t \rvert /2 \mu) <2$. A sufficient condition is $\mu > \frac{\kappa}{2\log2}$.
		
		\item[Step 2] Show that $\frac{d^2 \delta(\mu)}{d\mu^2}<0$. Since 
		\[
			\frac{d^2 \delta(\mu)}{d\mu^2} = 
			- \frac{\log y}{\sqrt{y+1-2\Sigma}}\left(1-  \frac{y(y-\Sigma)}{y+1-2\Sigma} \right)\frac{d y}{d\mu}, 
		\]
	a sufficient condition for $\frac{d^2 \delta(\mu)}{d\mu^2}<0$ to hold true is
		\[
		1-  \frac{y(y-\Sigma)}{y+1-2\Sigma}<0, 
		\]
		or equivalently $y>2-\frac{1}{\Sigma}$. This is indeed the case since $y>1$ and $\Sigma<1$.
		\end{description}
		
	\bigskip 
	
\noindent Part (iii): When $\mu \approx 0$, $\mathcal{A}_t\left(\bm{\Sigma}, \mu\right) \neq \emptyset$ from the assumption that $v\left({\bf{a}},t\right)>0$ for some ${\bf{a}} \in {\bf{A}} \in \mathcal{E}\left({\bm{\Sigma}}\right)$. When $\mu$ is large, 
\begin{align*}
\mathbb{E}_{\left[{\bf{A}}, {\bm{\Sigma}}\right]}\left[\exp\left(\widetilde{\Delta}_{ij}(t)\right)\right] 
\underbrace{\approx}_\text{(1)} \mathbb{E}_{\left[{\bf{A}}, {\bm{\Sigma}}\right]}\left[1+\widetilde{\Delta}_{ij}(t)\right]
\underbrace{<}_\text{(2)}\mathbb{E}_{\left[{\bf{A}}, {\bm{\Sigma}}\right]}\left[1+\widetilde{\Delta}_{ij}\right]
= 1,
\end{align*}
where (1) uses the fact that $\exp(z)\approx 1+z$ when $z \approx 0$, and (2) Assumption \ref{assm_u} (iii).  Thus $\mathcal{A}_t\left(\bm{\Sigma}, \mu\right)=\emptyset$, and this completes the proof. 
\end{proof}

\noindent Proof of Theorem \ref{thm_main}
\begin{proof}
Part (i): By symmetry, the following holds true for all $i$ and $j$:  
\begin{align*}
\int m_t^*\left({\bf{a}}_{ij}\right) dF(t)=&\int_{t<0} 1-m_{-t}^*\left({\bf{a}}_{ji}\right) dF(t) + \int_{t>0} m_t^*\left({\bf{a}}_{ij}\right) dF(t)\\
=&\int_{t>0} m_t^*\left({\bf{a}}_{ij}\right)-m_t^*\left({\bf{a}}_{ji}\right) dF(t) + \frac{1}{2}. 
\end{align*}
When $N=1$, the result is immediate because the first term of the above line is equal to zero. 

Suppose $N \geq 2$. Below we argue that for all $i>j$, (1) $m_t^*\left({\bf{a}}_{ij}\right) \geq m_t^*\left({\bf{a}}_{ji}\right)$ for all $t$, and (2) the inequality is strict in a neighborhood of $t=0$: 

\begin{description}
\item[Step 1] Since the function $u(\cdot,t)$ is concave and hence the function $u(\cdot,t)+u(-\cdot,t)$ is decreasing in its argument, the following holds true for all $t$: 
\[
v\left({\bf{a}}_{ij},t\right)-v\left({\bf{a}}_{ji},t\right)
=u\left(a_j, t\right)+u\left(-a_j,t\right)-\left[u\left(a_i, t\right)+u\left(-a_i,t\right)\right] \geq 0. 
\]
The result then follows from Lemma \ref{lem_foc}, which shows that $m_t^*({\bf{a}})$ is increasing in $v({\bf{a}}, t)$.
 
\item[Step 2] By continuity, there exists a neighborhood $I'$ of $t=0$ in which $v({\bf{a}}_{ij}, t)>v({\bf{a}}_{ji}, t)$. Meanwhile, combining Lemmas \ref{lem_foc} and \ref{lem_interior} shows that $m_t^*({\bf{a}})$ is strictly increasing in $v({\bf{a}}, t)$ in the neighborhood $I$ of $t=0$. Intersecting these neighborhoods gives the desired result, namely $m_t^*\left({\bf{a}}_{ij}\right)>m_t^*\left({\bf{a}}_{ji}\right)$ for all $t \in I \cap I'$.
\end{description}

\bigskip

\noindent Part (ii): Combining the result of Part (i) and Lemma \ref{lem_attentionset} gives the desired result. 
\end{proof}

\subsection{Proofs of Section \ref{sec_noisy}}\label{sec_proof_noisy}

\subsubsection{Useful Lemmas}
\begin{lem}\label{lem_noisy_foc}
For any given $x=\left(f, \sigma\right)$, the optimal attention strategy of any voter $t$ uniquely exists. Let $m_t: \Omega \rightarrow [0,1]$ be as such, and let \[\overline{m}_t=\mathbb{E}_{x}\left[m_t\left(\widetilde{\bm{\omega}}\right)\right]\] be the average probability that voter $t$ chooses $\beta$ under $x$. 
Then, 
\begin{enumerate}[(i)]
\item if $\mathbb{E}_{x}\left[\exp\left(\nu_x\left(\widetilde{\bm{\omega}}, t\right)/\mu\right) \right] < 1$, then $\overline{m}_t=0$;
\item if $\mathbb{E}_{x}\left[\exp\left(-\nu_x\left(\widetilde{\bm{\omega}}, t\right)/\mu\right) \right] <1$, then $\overline{m}_t=1$;
\item otherwise $\overline{m}_t \in \left(0,1\right)$ and for any $ \bm{\omega}$: 
\[m_t\left({\bm{\omega}}\right)=\frac{\Lambda_t\exp\left(\frac{\nu_x\left({\bm{\omega}},t\right)}{\mu}\right)}{\Lambda_t\exp\left(\frac{\nu_x\left({\bm{\omega}},t\right)}{\mu}\right)+ 1}, \]
where 
\[\Lambda_t=\frac{\overline{m}_t}{1-\overline{m}_t}\]
is the likelihood that voter $t$ chooses $\beta$ over $\alpha$ under $x$. 
\end{enumerate}
\end{lem}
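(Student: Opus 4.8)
The plan is to reduce the noisy-news attention problem to the baseline problem of Lemma \ref{lem_foc}, so that the characterizations of \cite{mckay} (Theorem 1) and \cite{yang} (Proposition 2) apply verbatim. The key observation is that the voter's decision is measurable with respect to the news $\bm{\omega}$, not the policy profile ${\bf{a}}$, so the natural state space for her rational-inattention problem is $\Omega$ equipped with the marginal law $q(\bm{\omega})=\Pr_{x}(\bm{\omega})$ induced by $x=(f,\sigma)$. Because the news technology has full support, $q(\bm{\omega})>0$ for every $\bm{\omega}\in\Omega$, a fact I will use to justify the pointwise first-order conditions.

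First I would record that the program depends on $x$ only through the pair $\left(q,\nu_x(\cdot,t)\right)$. By the law of iterated expectations,
\[
\mathbb{E}_{x}\!\left[m_t(\widetilde{\bm{\omega}})\,v(\widetilde{\bf{a}},t)\right]=\mathbb{E}_{x}\!\left[m_t(\widetilde{\bm{\omega}})\,\mathbb{E}_{x}\!\left[v(\widetilde{\bf{a}},t)\mid\widetilde{\bm{\omega}}\right]\right]=\mathbb{E}_{x}\!\left[m_t(\widetilde{\bm{\omega}})\,\nu_x(\widetilde{\bm{\omega}},t)\right]=V_t(m_t,x),
\]
which is exactly the benefit term already in the setup, while the attention cost $I(m_t,x)$ is the mutual information between $\bm{\omega}$ and the binary decision and hence is a function of $(q,m_t)$ alone. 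Consequently, voter $t$'s program is formally identical to the baseline program of Lemma \ref{lem_foc} after the substitution $\left({\bf{a}},\sigma,v(\cdot,t)\right)\mapsto\left(\bm{\omega},q,\nu_x(\cdot,t)\right)$, and the conclusions (i)--(iii) follow from the cited binary rational-inattention characterization once this correspondence is in place.

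For completeness I would also indicate the direct argument. Existence follows from continuity on the compact set $[0,1]^{\Omega}$; since the benefit is linear in $m_t$ and $I(m_t,x)$ is convex in the channel $m_t$ for fixed signal marginal $q$, the objective is concave and the optimizer is unique exactly as in the baseline. To characterize the interior case, decompose $I(m_t,x)=H_b(\overline{m}_t)-\mathbb{E}_{x}\!\left[H_b(m_t(\widetilde{\bm{\omega}}))\right]$ with $H_b$ the binary entropy, differentiate in $m_t(\bm{\omega})$, and divide through by $q(\bm{\omega})>0$; the signal weights cancel and leave $\nu_x(\bm{\omega},t)/\mu=-\log\Lambda_t+\log\frac{m_t(\bm{\omega})}{1-m_t(\bm{\omega})}$, which rearranges to the shifted-logit formula of (iii) with $\Lambda_t=\overline{m}_t/(1-\overline{m}_t)$. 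The corners (i) and (ii) are the KKT conditions for $\overline{m}_t\in\{0,1\}$, and the thresholds $\mathbb{E}_{x}\!\left[\exp\left(\pm\nu_x(\widetilde{\bm{\omega}},t)/\mu\right)\right]<1$ are precisely those delivered by the transformed problem.

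The step carrying the real content is the reduction: one must verify that \emph{both} the benefit and the cost collapse onto $\left(q,\nu_x(\cdot,t)\right)$ -- the benefit through iterated expectations, the cost because attention is paid to news rather than to policies. Once this is checked, the remainder is an application of the characterization used for Lemma \ref{lem_foc}. The only place demanding care is the appeal to full support of $f$ (hence of $q$), without which the pointwise first-order conditions need not hold at every news realization.
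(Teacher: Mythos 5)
Your proposal is correct and follows essentially the same route as the paper: the paper's proof simply declares the result ``analogous to that of Lemma \ref{lem_foc}'' (whose parts (i)--(iii) are cited from Theorem 1 of \cite{mckay} and Proposition 2 of \cite{yang}), and your reduction via the substitution $\left({\bf{a}},\sigma,v(\cdot,t)\right)\mapsto\left(\bm{\omega},q,\nu_x(\cdot,t)\right)$ is exactly the content that makes this analogy precise. Your additional direct verification (concavity, first-order conditions, and the full-support observation justifying pointwise optimality) is sound but supplementary to what the paper intends.
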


\begin{proof}
The proof is analogous to that of Lemma \ref{lem_foc} and is thus omitted.
\end{proof}

The next lemma is adapted from \cite{milgrom}: 

\begin{lem}\label{lem_milgrom}
Assume Assumption \ref{assm_logsupermodular}. Take any $\sigma$ that is symmetric and non-degenerate, and write $x=(f,\sigma)$. Then $\mathbb{E}_{x}\left[h\left(\widetilde{a}_{\beta}\right) \mid \omega_{\beta}=\omega\right]$ is increasing (resp. strictly increasing) in $\omega$ if the function $h$ is increasing (resp. strictly increasing) in its argument.
\end{lem}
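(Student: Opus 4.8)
The plan is to reduce the claim to the monotone-likelihood-ratio machinery of \cite{milgrom}. The quantity $\mathbb{E}_{x}\left[h\left(\widetilde{a}_{\beta}\right) \mid \omega_{\beta}=\omega\right]$ is the expectation of $h$ against the posterior distribution of candidate $\beta$'s policy given the report $\omega_\beta=\omega$. I would show that, as $\omega$ rises, this family of posteriors shifts up in the sense of first-order stochastic dominance, so that the expectation of any increasing $h$ rises with it. The single ingredient driving everything is that the news technology is log-supermodular (Assumption \ref{assm_logsupermodular}), i.e.\ satisfies the strict monotone likelihood ratio property (MLRP).

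First I would write the posterior explicitly. Because $\omega_\beta$ is drawn from $f_\beta(\cdot\mid a_\beta)$ and depends on the profile only through $a_\beta$, Bayes' rule gives, for each policy $a_i$ in the support,
\[
\Pr\left(\widetilde a_\beta=a_i \mid \omega_\beta=\omega\right)=\frac{f_\beta(\omega\mid a_i)\,\sigma_i}{\sum_{k} f_\beta(\omega\mid a_k)\,\sigma_k},
\]
where $\sigma_i=\sum_j \sigma_{ji}$ is the marginal probability that $a_\beta=a_i$. Non-degeneracy of $\sigma$ guarantees that at least two of the $\sigma_i$ are positive, so the conditioning is nontrivial.

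Next I would verify the MLRP for this posterior family. Fixing $\omega<\omega'$ and two policies $a_i<a_j$, the normalizing denominators cancel in the cross-ratio of posterior weights, leaving exactly
\[
\frac{f_\beta(\omega'\mid a_j)/f_\beta(\omega'\mid a_i)}{f_\beta(\omega\mid a_j)/f_\beta(\omega\mid a_i)}>1,
\]
which is precisely Assumption \ref{assm_logsupermodular}. Hence the posterior indexed by $\omega'$ dominates the one indexed by $\omega$ in the likelihood-ratio order, and therefore in first-order stochastic dominance. The standard consequence, which is the content of \cite{milgrom}, is that $\mathbb{E}_{x}\left[h(\widetilde a_\beta)\mid\omega_\beta=\omega\right]$ is increasing in $\omega$ for every increasing $h$. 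Symmetry of $\sigma$ plays no essential role beyond fixing the setup; the lemma rests on MLRP plus non-degeneracy.

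The step requiring the most care is the strict statement. Here I would use that Assumption \ref{assm_logsupermodular} is a \emph{strict} inequality together with non-degeneracy: with at least two policies carrying positive posterior weight, strict MLRP makes the likelihood-ratio ordering strict, which upgrades first-order stochastic dominance to a strict ranking of the posterior CDFs at the relevant interior support points. Paired with a strictly increasing $h$, this yields strict monotonicity of the conditional expectation. (Concretely, in the two-point case the posterior weight $p_\omega$ on the higher policy is strictly increasing in $\omega$, and $\mathbb{E}[h]=(1-p_\omega)h(a_i)+p_\omega h(a_j)$ is then strictly increasing whenever $h(a_j)>h(a_i)$.) I expect this strictness bookkeeping, rather than the monotone-expectation argument itself, to be the only delicate point.
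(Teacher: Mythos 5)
Your proposal is correct and follows essentially the same route as the paper, whose entire proof of Lemma \ref{lem_milgrom} is the citation to \cite{milgrom}: Bayes' rule for the posterior over $\widetilde{a}_{\beta}$, strict MLRP from Assumption \ref{assm_logsupermodular}, likelihood-ratio dominance of posteriors, hence first-order stochastic dominance and monotone conditional expectations, with strictness coming from strict MLRP plus non-trivial conditioning plus strictly increasing $h$.

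One correction, and it lands exactly on the step you flag as delicate. Your closing remark that symmetry of $\sigma$ ``plays no essential role beyond fixing the setup'' is wrong: non-degeneracy of $\sigma$ is a property of the \emph{joint} distribution of $\left(\widetilde{a}_{\alpha}, \widetilde{a}_{\beta}\right)$, and by itself it does not guarantee that the marginal of $\widetilde{a}_{\beta}$ charges two or more points. A non-symmetric profile in which $\widetilde{a}_{\alpha}$ is random but $\widetilde{a}_{\beta}\equiv a^{*}$ is non-degenerate in the paper's sense, yet then $\mathbb{E}_{x}\left[h\left(\widetilde{a}_{\beta}\right) \mid \omega_{\beta}=\omega\right]=h\left(a^{*}\right)$ is constant in $\omega$ and the strict statement fails. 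It is precisely the symmetry hypothesis that rescues your claim that ``at least two of the $\sigma_i$ are positive'': since $\sigma_{\beta}\left(a \mid t\right)=\sigma_{\alpha}\left(-a \mid -t\right)$ and $P_{\beta}(t)=P_{\alpha}(-t)$, the marginal law of $\widetilde{a}_{\beta}$ is the mirror image of that of $\widetilde{a}_{\alpha}$, so if $\widetilde{a}_{\beta}$ were almost surely constant then so would be $\widetilde{a}_{\alpha}$, contradicting non-degeneracy of the joint. With that attribution fixed, your strictness bookkeeping (strict MLRP, at least two posterior-positive support points, strictly increasing $h$) goes through as written.
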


\begin{proof}
See \cite{milgrom}.
\end{proof}

The next lemma generalizes Lemma \ref{lem_symmetry} to encompass noisy news: 

\begin{lem}\label{lem_noisy_symmetry}
Assume Assumptions \ref{assm_symmetry}-\ref{assm_logsupermodular}. Take any $\sigma$ that is symmetric and non-degenerate, and write $x=(f,\sigma)$. Then, 
\begin{enumerate}[(i)]
\item $\nu_x\left(\bm{\omega}_{mn}, t\right) \geq \nu_x\left(\bm{\omega}_{nm}, t\right)$  $\forall t$ and $m>n$; 
\item for $t=0$, 
  \begin{enumerate}[(a)]
\item $\nu_x\left(\bm{\omega}_{mn},0\right)=-\nu_x\left(\bm{\omega}_{nm},0\right)$ $\forall m$, $n$;
\item $\nu_x\left(\bm{\omega}_{mn},0\right)>\nu_x\left(\bm{\omega}_{nm},0\right)$ $\forall m>n$;
\item $\nu_{x}\left(\bm{\omega}_{K1},0\right)=\max_{\bm{\omega}}\nu_x\left(\bm{\omega},0\right)$. 
\end{enumerate}
\end{enumerate}
\end{lem}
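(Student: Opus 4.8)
The plan is to reduce all three claims to monotonicity statements about a single one-dimensional conditional expectation, exploiting the product structure of the news technology together with the symmetry assumptions. First I would record the factorization $\nu_x\left(\bm{\omega}_{mn}, t\right) = \mathbb{E}_x\left[u\left(\widetilde{a}_\beta, t\right) \mid \omega_\beta = \omega_n\right] - \mathbb{E}_x\left[u\left(\widetilde{a}_\alpha, t\right) \mid \omega_\alpha = -\omega_m\right]$. This is immediate from $v\left({\bf{a}}, t\right) = u\left(a_\beta, t\right) - u\left(a_\alpha, t\right)$ together with the fact that, since types and news are drawn independently across candidates, the posterior over $\left(\widetilde{a}_\alpha, \widetilde{a}_\beta\right)$ given $\bm{\omega}$ factorizes into the two marginal posteriors.

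The key preliminary step is to collapse the two conditional expectations into one. Writing $g\left(\omega, s\right) = \mathbb{E}_x\left[u\left(\widetilde{a}_\beta, s\right) \mid \omega_\beta = \omega\right]$, I would show using Assumption \ref{assm_noisy_symmetry} (which reads $f_\alpha\left(-\omega \mid -a\right) = f_\beta\left(\omega \mid a\right)$) and the symmetry of $\sigma$ (which gives that the marginal policy distribution of $\widetilde{a}_\alpha$ is the reflection of that of $\widetilde{a}_\beta$) that the posterior of $\widetilde{a}_\alpha$ given $\omega_\alpha = -\omega$ is the reflection of the posterior of $\widetilde{a}_\beta$ given $\omega_\beta = \omega$. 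Combined with $u\left(-a, t\right) = u\left(a, -t\right)$ from Assumption \ref{assm_symmetry}(i), this yields $\mathbb{E}_x\left[u\left(\widetilde{a}_\alpha, t\right) \mid \omega_\alpha = -\omega\right] = g\left(\omega, -t\right)$, hence the clean identity $\nu_x\left(\bm{\omega}_{mn}, t\right) = g\left(\omega_n, t\right) - g\left(\omega_m, -t\right)$.

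From here each claim is short. For part (i), the difference $\nu_x\left(\bm{\omega}_{mn}, t\right) - \nu_x\left(\bm{\omega}_{nm}, t\right)$ equals $\left[g\left(\omega_n, t\right) + g\left(\omega_n, -t\right)\right] - \left[g\left(\omega_m, t\right) + g\left(\omega_m, -t\right)\right]$, so I must show that $\omega \mapsto g\left(\omega, t\right) + g\left(\omega, -t\right)$ is decreasing. The point is that its integrand $a \mapsto u\left(a, t\right) + u\left(a, -t\right) = u\left(a, t\right) + u\left(-a, t\right)$ is decreasing on the positive support of $\widetilde{a}_\beta$ by concavity of $u\left(\cdot, t\right)$ (this is exactly the concavity fact used in Step 1 of the proof of Theorem \ref{thm_main}(i)), so Lemma \ref{lem_milgrom} applied to the negated integrand delivers monotonicity of the conditional expectation. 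For part (ii), at $t = 0$ the identity becomes $\nu_x\left(\bm{\omega}_{mn}, 0\right) = g\left(\omega_n, 0\right) - g\left(\omega_m, 0\right)$; the skew-symmetry (a) is then visible directly, while (b) and (c) follow once I note that $u\left(\cdot, 0\right)$ is strictly decreasing on $(0,1]$ by Assumption \ref{assm_u}(i), so that $g\left(\cdot, 0\right)$ is strictly decreasing by the strict form of Lemma \ref{lem_milgrom} (using non-degeneracy of $\sigma$); maximizing $g\left(\omega_n, 0\right) - g\left(\omega_m, 0\right)$ then forces $n = 1$ and $m = K$.

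I expect the main obstacle to be part (i), because $g\left(\cdot, t\right)$ is generally not monotone for $t \neq 0$ (on the positive support $u\left(\cdot, t\right)$ rises and then falls), so one cannot compare $\nu_x\left(\bm{\omega}_{mn}, t\right)$ and $\nu_x\left(\bm{\omega}_{nm}, t\right)$ term by term. The trick that makes it work is to symmetrize the second argument and invoke the concavity-driven monotonicity of $u\left(a, t\right) + u\left(-a, t\right)$ rather than of $u\left(\cdot, t\right)$ itself; getting the bookkeeping of this cancellation right, and confirming that the integrand is decreasing precisely on the support of $\widetilde{a}_\beta$, is where the care is needed.
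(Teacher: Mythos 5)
Your proposal is correct and follows essentially the same route as the paper's proof: both reduce $\nu_x$ to conditional expectations of $u\left(\widetilde{a}_{\beta},\cdot\right)$ via the symmetry assumptions, prove part (i) by showing $\mathbb{E}_{x}\left[u\left(\widetilde{a}_{\beta},t\right)+u\left(-\widetilde{a}_{\beta},t\right)\mid \omega_{\beta}=\omega\right]$ is decreasing in $\omega$ using concavity (Assumption \ref{assm_u}(ii)) together with Lemma \ref{lem_milgrom}, and prove part (ii) by noting that at $t=0$ the identity collapses to a difference of a single strictly decreasing function of $\omega$. The only difference is expositional: you spell out the posterior-reflection argument and the $t=0$ bookkeeping that the paper compresses into ``by symmetry'' and ``the remainder is straightforward.''
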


\begin{proof}
Part (i): By symmetry, 
\[\nu_x\left(\bm{\omega}_{mn},t\right)=\mathbb{E}_x\left[u\left(\widetilde{a}_{\beta},t\right)\mid \omega_{\beta}=\omega_n\right]-\mathbb{E}_x\left[u\left(-\widetilde{a}_{\beta},t\right)\mid \omega_{\beta}=\omega_m\right],\]
and hence
\begin{align*}
\nu_x\left(\bm{\omega}_{mn},t\right) - \nu_x\left(\bm{\omega}_{nm},t\right)&= \mathbb{E}_{x}\left[u\left(\widetilde{a}_{\beta}, t\right)+u\left(-\widetilde{a}_{\beta},t\right)\mid \omega_{\beta}=\omega_n\right]\\
&-\mathbb{E}_{x}\left[u\left(\widetilde{a}_{\beta}, t\right)+u\left(-\widetilde{a}_{\beta},t\right)\mid \omega_{\beta}=\omega_m\right],
\end{align*} 
so the problem boils down to showing that the right-hand side of the above equality is positive. To this end, notice that the term $\mathbb{E}_{x}\left[u\left(\widetilde{a}_{\beta}, t\right)+u\left(-\widetilde{a}_{\beta},t\right)\mid \omega_{\beta}=\omega\right]$ is decreasing in $\omega$ by Assumption \ref{assm_u} (ii) and Lemma \ref{lem_milgrom}.

\bigskip

\noindent Part (ii): In the above derivation, using the fact that $u(a,0)=u(-a,0)$ yields 
\begin{align*}
\nu_x\left(\bm{\omega}_{mn},0\right)&=\mathbb{E}_x\left[u\left(\widetilde{a}_{\beta},0\right)\mid \omega_{\beta}=\omega_n\right]-\mathbb{E}_x\left[u\left(-\widetilde{a}_{\beta},0\right)\mid \omega_{\beta}=\omega_m\right]\\
&=\mathbb{E}_x\left[u\left(\widetilde{a}_{\beta},0\right)\mid \omega_{\beta}=\omega_n\right]-\mathbb{E}_x\left[u\left(\widetilde{a}_{\beta},0\right)\mid \omega_{\beta}=\omega_m\right], 
\end{align*}
where the term $\mathbb{E}_x\left[u\left(\widetilde{a}_{\beta},0\right)\mid \omega_{\beta}=\omega\right]$ is strictly decreasing in $\omega$ by Lemma \ref{lem_milgrom} and Assumption \ref{assm_u} (i). The remainder the proof is straightforward and is thus omitted.
\end{proof}

The next lemma shows that garbling adds white noises to voters' differential utilities from choosing one candidate over another: 

\begin{lem}\label{lem_mps}
Fix any $t$, $\sigma$ and $f \succeq f'$, and write $x=(f,\sigma)$ and $x'=(f',\sigma)$. Then, 
\begin{enumerate}[(i)]
\item for any $\bm{\omega}' \in \Omega$, there exist probability weights $\left\{\pi\left(\bm{\omega}', \bm{\omega}\right)\right\}_{\bm{\omega}\in \Omega}$ such that 
\[\nu_{x'}(\bm{\omega}',t)=\sum_{\bm{\omega}}\pi\left(\bm{\omega}',\bm{\omega}\right)\nu_x\left(\bm{\omega},t\right).\]
\item $\mathbb{E}_{x}\left[\nu_x\left(\widetilde{\bm{\omega}},t\right)\right]=\mathbb{E}_{x'}\left[\nu_x\left(\widetilde{\bm{\omega}}',t\right)\right]$.
\end{enumerate}
\end{lem}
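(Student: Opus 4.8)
The plan is to treat the claim as a pure Bayesian bookkeeping fact: neither symmetry, non-degeneracy, nor log-supermodularity (Assumptions~\ref{assm_symmetry}--\ref{assm_logsupermodular}) is needed, and everything follows from Bayes' rule together with the garbling identity $f'(\bm{\omega}'\mid{\bf{a}})=\sum_{\bm{\omega}\in\Omega}f(\bm{\omega}\mid{\bf{a}})\rho(\bm{\omega}'\mid\bm{\omega})$ defining $f\succeq f'$. Write $p_x(\bm{\omega})=\sum_{{\bf{a}}}\sigma({\bf{a}})f(\bm{\omega}\mid{\bf{a}})$ for the marginal law of the fine news realization under $x=(f,\sigma)$ and $p_{x'}(\bm{\omega}')$ for its coarse counterpart under $x'=(f',\sigma)$. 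Since $f$ has full support on $\Omega$, every $p_x(\bm{\omega})>0$, and $\nu_{x'}(\bm{\omega}',t)$ together with the ratios below are well defined at each $\bm{\omega}'$ in the support of $p_{x'}$; realizations $\bm{\omega}'$ with $p_{x'}(\bm{\omega}')=0$ carry no probability and can be ignored throughout.

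For Part~(i), I would expand the coarse posterior mean by Bayes' rule,
\[
\nu_{x'}(\bm{\omega}',t)=\frac{\sum_{{\bf{a}}}v({\bf{a}},t)\,\sigma({\bf{a}})\,f'(\bm{\omega}'\mid{\bf{a}})}{p_{x'}(\bm{\omega}')},
\]
substitute the garbling identity into the numerator and the denominator, and interchange the order of summation over ${\bf{a}}$ and $\bm{\omega}$. The decisive recognition is that the inner sums collapse to $\sum_{{\bf{a}}}v({\bf{a}},t)\,\sigma({\bf{a}})\,f(\bm{\omega}\mid{\bf{a}})=\nu_x(\bm{\omega},t)\,p_x(\bm{\omega})$ and $\sum_{{\bf{a}}}\sigma({\bf{a}})\,f(\bm{\omega}\mid{\bf{a}})=p_x(\bm{\omega})$, which yields $\nu_{x'}(\bm{\omega}',t)=\sum_{\bm{\omega}}\pi(\bm{\omega}',\bm{\omega})\,\nu_x(\bm{\omega},t)$ with
\[
\pi(\bm{\omega}',\bm{\omega})=\frac{\rho(\bm{\omega}'\mid\bm{\omega})\,p_x(\bm{\omega})}{p_{x'}(\bm{\omega}')}.
\]
It remains to check that these are genuine probability weights over $\bm{\omega}$: nonnegativity is immediate, and $\sum_{\bm{\omega}}\rho(\bm{\omega}'\mid\bm{\omega})\,p_x(\bm{\omega})=p_{x'}(\bm{\omega}')$ shows they sum to one. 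Conceptually $\pi(\bm{\omega}',\bm{\omega})$ is just the posterior $\Pr_x(\bm{\omega}\mid\bm{\omega}')$ of the fine signal given the coarse one, so Part~(i) says each coarse posterior mean is an average of fine posterior means.

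For Part~(ii), the substance is that averaging either posterior-mean function against its own signal distribution returns the common prior mean $\mathbb{E}_{\sigma}[v(\widetilde{{\bf{a}}},t)]$. The quickest argument is the law of iterated expectations: $\mathbb{E}_x[\nu_x(\widetilde{\bm{\omega}},t)]=\mathbb{E}_x\big[\mathbb{E}_x[v(\widetilde{{\bf{a}}},t)\mid\widetilde{\bm{\omega}}]\big]=\mathbb{E}_{\sigma}[v(\widetilde{{\bf{a}}},t)]$, and the coarse average collapses identically because the marginal law of $\widetilde{{\bf{a}}}$ is $\sigma$ under both $x$ and $x'$. Alternatively, and more in the spirit of Part~(i), I would average the Part~(i) identity against $p_{x'}(\bm{\omega}')$ and use $\sum_{\bm{\omega}'}p_{x'}(\bm{\omega}')\,\pi(\bm{\omega}',\bm{\omega})=\sum_{\bm{\omega}'}\rho(\bm{\omega}'\mid\bm{\omega})\,p_x(\bm{\omega})=p_x(\bm{\omega})$ to collapse the double sum to $\sum_{\bm{\omega}}p_x(\bm{\omega})\,\nu_x(\bm{\omega},t)=\mathbb{E}_x[\nu_x(\widetilde{\bm{\omega}},t)]$. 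Read together, Parts~(i) and~(ii) say exactly that the coarse posterior-mean variable is a mean-preserving contraction of the fine one, i.e.\ garbling injects mean-zero noise, which is the property the lemma is used for downstream.

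I do not expect a real obstacle; the care points are (a) confirming the denominators $p_{x'}(\bm{\omega}')$ are positive on the relevant support so the Bayes ratios and the weights $\pi$ are well defined, which the full-support assumption on the news technology supplies, and (b) performing the interchange of summation and the collapse of the inner sums without error. It is worth emphasizing that no structural assumption on the electoral environment enters, so this is a general Blackwell statement rather than anything specific to the model.
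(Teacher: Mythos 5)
Your proof is correct and takes essentially the same approach as the paper: you construct the identical posterior weights $\pi(\bm{\omega}',\bm{\omega})=\mathbb{P}_x(\bm{\omega})\rho(\bm{\omega}'\mid\bm{\omega})/\mathbb{P}_{x'}(\bm{\omega}')$, prove Part (i) by the same Bayes-rule expansion of $\nu_{x'}$ with the garbling identity and interchange of sums, and your second route for Part (ii) (averaging the Part (i) identity against $\mathbb{P}_{x'}$ and collapsing via $\sum_{\bm{\omega}'}\rho(\bm{\omega}'\mid\bm{\omega})=1$) is exactly the paper's computation. Your added care about positive denominators (supplied by the full support of $f$) and your remark that no structural assumptions on the electoral environment are needed are both consistent with the paper's treatment.
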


\begin{proof}
Let $\rho$ be the Markov kernel. For all $\bm{\omega}, \bm{\omega}' \in \Omega$, define 
\begin{equation}
\pi\left(\bm{\omega}' ,\bm{\omega}\right)=\frac{\mathbb{P}_x\left(\bm{\omega}\right)\rho\left(\bm{\omega}' \mid \bm{\omega}\right)}{\sum_{\widetilde{\bm{\omega}}}\mathbb{P}_x\left(\widetilde{\bm{\omega}}\right)\rho\left(\bm{\omega}' \mid \widetilde{\bm{\omega}}\right)}=\frac{\mathbb{P}_x\left(\bm{\omega}\right)\rho\left(\bm{\omega}' \mid \bm{\omega}\right)}{\mathbb{P}_{x'}\left(\bm{\omega}'\right)}, 
\end{equation}
and notice that $\sum_{\bm{\omega}} \pi\left(\bm{\omega}', \bm{\omega}\right)=1$ for all $\bm{\omega}'$. 

By definition, 
\begin{align*}
\mathbb{P}_{x'}\left(\bm{\omega}'\right)=\sum_{\bm{\omega}, {\bf{a}} \in \supp(\sigma)} \rho\left(\bm{\omega}' \mid \bm{\omega}\right)f\left(\bm{\omega} \mid {\bf{a}}\right)\sigma({\bf{a}})=\sum_{\bm{\omega}} \rho\left(\bm{\omega}'\mid \bm{\omega}\right) \mathbb{P}_{x}\left(\bm{\omega}\right), 
\end{align*}
and 
\begin{align*}
\nu_{x'}\left(\bm{\omega}', t\right)= & \mathbb{E}_{x'}\left[v\left(\widetilde{\bf{a}},t\right) \mid \bm{\omega}'\right]\\
= & \frac{\displaystyle \sum_{{\bf{a}} \in \supp(\sigma)} f'\left(\bm{\omega}'\mid {\bf{a}}\right) \sigma\left({\bf{a}}\right) v\left({\bf{a}},t\right) }{\displaystyle \sum_{{\bf{a}} \in \supp(\sigma)} f'\left(\bm{\omega}'\mid {\bf{a}}\right) \sigma\left({\bf{a}}\right) }\\
= & \frac{\displaystyle \sum_{{\bf{a}} \in \supp(\sigma)}\sum_{\bm{\omega}} \rho\left(\bm{\omega}'\mid \bm{\omega}\right)f\left(\bm{\omega}\mid {\bf{a}}\right)\sigma\left({\bf{a}}\right) v\left({\bf{a}},t\right) }{\displaystyle \sum_{{\bf{a}} \in \supp(\sigma)} \sum_{\bm{\omega}}\rho\left(\bm{\omega}'\mid \bm{\omega}\right) f\left(\bm{\omega}\mid {\bf{a}}\right) \sigma\left({\bf{a}}\right) }\\
=&\frac{\displaystyle \sum_{\bm{\omega}} \rho\left(\bm{\omega}' \mid \bm{\omega}\right)\mathbb{P}_x\left(\bm{\omega}\right)\nu_x\left(\bm{\omega},t\right) }{\displaystyle\sum_{\bm{\omega}} \rho\left(\bm{\omega}'\mid \bm{\omega}\right) \mathbb{P}_{x}\left(\bm{\omega}\right)}\\
= & \sum_{\bm{\omega}} \pi \left(\bm{\omega}', \bm{\omega}\right)\nu_x\left(\bm{\omega},t\right). 
\end{align*}
Thus, 
\begin{multline*}
\mathbb{E}_{x'}\left[\nu_{x'}\left(\widetilde{\bm{\omega}}',t\right)\right]
=\sum_{\bm{\omega}'}\left(\sum_{\bm{\omega}} \pi\left(\bm{\omega}',\bm{\omega}\right)\nu_x\left(\bm{\omega},t\right) \right) \cdot \mathbb{P}_{x'}\left(\bm{\omega}'\right)\\
=\sum_{\bm{\omega}} \nu_x\left(\bm{\omega},t\right) \mathbb{P}_x\left(\bm{\omega}\right) \cdot \sum_{\bm{\omega}'}\rho\left(\bm{\omega}'\mid\bm{\omega}\right)
=\mathbb{E}_x\left[\nu_x\left(\widetilde{\bm{\omega}},t\right)\right],
\end{multline*}
and this completes the proof.
\end{proof}

The next lemma shows that garbling reduces the median voter's differential utility from choosing $\beta$ over $\alpha$ upon hearing the most centrist report of the former and the most extreme report of the latter: 

\begin{lem}\label{lem_extreme}
Assume Assumptions \ref{assm_symmetry} and \ref{assm_u}. Fix any $\sigma$ and any $f \succeq f'$ that satisfy Assumptions \ref{assm_noisy_symmetry} and \ref{assm_logsupermodular}, and write $x=(f,\sigma)$ and $x'=(f',\sigma).$ Then $\nu_{x}\left(\bm{\omega}_{K1}, 0\right) \geq \nu_{x'}\left(\bm{\omega}_{K1}, 0\right).$
\end{lem}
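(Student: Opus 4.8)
The plan is to combine the two immediately preceding lemmas. The structural fact that makes the proof short is that $\bm{\omega}_{K1}$ is the news profile \emph{maximizing} the median voter's differential utility $\nu_x(\cdot,0)$, whereas garbling replaces this quantity by an \emph{average} of the $\nu_x(\bm{\omega},0)$ over all profiles. Since an average can never exceed a maximum, the garbled value can only be smaller, which is exactly the claimed inequality.

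Concretely, I would first invoke Lemma \ref{lem_noisy_symmetry}(ii)(c), which, under Assumptions \ref{assm_symmetry}--\ref{assm_logsupermodular} and for symmetric (non-degenerate) $\sigma$, gives
\[
\nu_x\left(\bm{\omega}_{K1},0\right)=\max_{\bm{\omega}\in\Omega}\nu_x\left(\bm{\omega},0\right).
\]
Next I would apply Lemma \ref{lem_mps}(i) to the profile $\bm{\omega}'=\bm{\omega}_{K1}$ at $t=0$: because $f\succeq f'$, there are probability weights $\{\pi(\bm{\omega}_{K1},\bm{\omega})\}_{\bm{\omega}\in\Omega}$, nonnegative and summing to one, with
\[
\nu_{x'}\left(\bm{\omega}_{K1},0\right)=\sum_{\bm{\omega}\in\Omega}\pi\left(\bm{\omega}_{K1},\bm{\omega}\right)\nu_x\left(\bm{\omega},0\right).
\]
Combining the two displays, the right-hand side is a convex combination of numbers each bounded above by $\max_{\bm{\omega}}\nu_x(\bm{\omega},0)=\nu_x(\bm{\omega}_{K1},0)$, so
\[
\nu_{x'}\left(\bm{\omega}_{K1},0\right)\le \nu_x\left(\bm{\omega}_{K1},0\right),
\]
which is the assertion.

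The argument leaves no genuine obstacle at this stage, since the two invoked lemmas carry the real weight; the only points I would verify are that their hypotheses are met. For Lemma \ref{lem_mps}(i) nothing beyond $f\succeq f'$ is needed, and its weights form a bona fide probability vector by construction (this is where I would double-check nonnegativity and normalization). For the maximization characterization in Lemma \ref{lem_noisy_symmetry}(ii)(c) I rely on log-supermodularity (Assumption \ref{assm_logsupermodular}) together with single-peakedness of $u(\cdot,0)$ at the center (Assumption \ref{assm_u}(i)): extreme reports shift the posterior over $\beta$'s policy toward the extremes, and since centrist policies are most valued by the median voter, the most centrist report $\omega_1$ for $\beta$ paired with the most extreme report $\omega_K$ for $\alpha$ jointly maximize $\nu_x(\cdot,0)$. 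If a self-contained derivation were preferred, the same conclusion follows by writing $\nu_x(\bm{\omega}_{mn},0)=g(\omega_n)-g(\omega_m)$ with $g(\omega)=\mathbb{E}_x[u(\widetilde{a}_{\beta},0)\mid\omega_\beta=\omega]$ decreasing (Lemma \ref{lem_milgrom}), and noting that garbling writes $g'$ as a convex combination of the $g(\omega)$'s, whence $g'(\omega_1)\le g(\omega_1)$ and $g'(\omega_K)\ge g(\omega_K)$ so the gap shrinks.
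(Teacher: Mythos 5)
Your proof is correct and follows essentially the same route as the paper's: both rest on the observation that garbling makes $\nu_{x'}\left(\bm{\omega}_{K1},0\right)$ a convex combination of the $\nu_x\left(\bm{\omega},0\right)$'s, which is then dominated by the maximum $\nu_x\left(\bm{\omega}_{K1},0\right)$ identified via Lemma \ref{lem_noisy_symmetry}(ii)(c). The only cosmetic difference is that you invoke Lemma \ref{lem_mps}(i) for the averaging step, whereas the paper re-derives that representation inline from the Markov kernel (and states the bound for all $\bm{\omega}'$ before specializing), so nothing of substance separates the two arguments.
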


\begin{proof}
For all $\bm{\omega}'$,  
\begin{align*}
\nu_{x'}\left(\bm{\omega}',0\right)=& \frac{\displaystyle \sum_{\bm{\omega} }\sum_{{\bf{a}}\in \supp\left(\sigma\right)}  v\left({\bf{a}}, 0\right) \rho\left(\bm{\omega}' \mid \bm{\omega}\right)f\left(\bm{\omega} \mid {\bf{a}}\right)\sigma({\bf{a}})}{\displaystyle \sum_{\bm{\omega} } \sum_{{\bf{a}}\in \supp(\sigma)} \rho\left(\bm{\omega}' \mid \bm{\omega}\right) f\left(\bm{\omega}\mid {\bf{a}}\right) \sigma\left({\bf{a}}\right)}\\
\leq & \max_{\bm{\omega}}\frac{\displaystyle \sum_{{\bf{a}}\in \supp\left(\sigma\right)}  v\left({\bf{a}}, 0\right) f\left(\bm{\omega} \mid {\bf{a}}\right)\sigma({\bf{a}})}{\displaystyle \sum_{{\bf{a}}\in \supp\left(\sigma\right)} f\left(\bm{\omega}\mid {\bf{a}}\right) \sigma\left({\bf{a}}\right)}\\
=& \max_{\bm{\omega}}\nu_x\left(\bm{\omega}, 0\right)\\
=&\nu_x\left(\bm{\omega}_{K1}, 0\right). 
\end{align*}
where the last equality uses Lemma \ref{lem_noisy_symmetry} (ii (c)). Maximizing the left-hand side of the above inequality, we obtain \[\nu_{x'}(\bm{\omega}_{K1}, 0)=\max_{\bm{\omega}'}\nu_{x'}\left(\bm{\omega}',t\right) \leq \nu_x\left(\bm{\omega}_{K1}, 0\right),\] and this completes the proof.
\end{proof}

\subsubsection{Proofs of Main Results}
\noindent Proof of Theorem \ref{thm_noisy_policy} 
\begin{proof}
In the proof of Theorem \ref{thm_main} (i), replacing $v\left({\bf{a}}, t\right)$'s with 
$\nu_x\left(\bm{\omega},t\right)$'s and invoking Lemmas  \ref{lem_noisy_foc} and \ref{lem_noisy_symmetry} gives the desired result.  

\end{proof}

\noindent Proof of Theorem \ref{thm_noisy_attention}
\begin{proof}
Fix any tuple $[{\bf{A}}, \bm{\Sigma}]$ of any order $N \geq 2$. 
\bigskip 

\noindent Part (i):  Let $f \succeq f'$ be as described in Theorem \ref{thm_noisy_attention}, and write $x=\langle f, {\bf{A}}, \bm{\Sigma} \rangle$ and $x'=\langle f', {\bf{A}}, \bm{\Sigma} \rangle$. Normalize $\mu$ to one for ease and for this part only.  

By Lemma \ref{lem_mps}, 
\begin{align*}
&\mathbb{E}_{x'}\left[\exp\left(\nu_{x'}\left(\widetilde{\bm{\omega}}', t\right)\right)\right]\\
=& \sum_{\bm{\omega'}} \mathbb{P}_{x'}\left(\bm{\omega}'\right) \exp\left(\nu_{x'}\left(\bm{\omega}',t\right)\right)\\
=& \sum_{\bm{\omega'}} \mathbb{P}_{x'}\left(\bm{\omega}'\right) \exp\left(\sum_{\bm{\omega}}\pi\left(\bm{\omega}',\bm{\omega}\right)\nu_{x}(\bm{\omega},t)\right)\\
\leq & \sum_{\bm{\omega'}, \bm{\omega}} \mathbb{P}_{x'}\left(\bm{\omega}'\right)\pi\left(\bm{\omega}',\bm{\omega}\right) \exp\left(\nu_{x}(\bm{\omega},t)\right)\\
=& \sum_{\bm{\omega}}\left(\sum_{\bm{\omega}'}\rho(\bm{\omega}'\mid \bm{\omega})\right) \mathbb{P}_x\left(\bm{\omega}\right)\exp\left(\nu_x\left({\bm{\omega}},t\right)\right)\\
=&\mathbb{E}_x\left[\exp\left(\nu_x\left(\widetilde{\bm{\omega}},t\right)\right)\right], 
\end{align*}
and hence \[
\mathbb{E}_{x'}\left[\exp\left(-\nu_{x'}\left(\widetilde{\bm{\omega}}', t\right)\right)\right]\leq \mathbb{E}_x\left[\exp\left(-\nu_x\left(\widetilde{\bm{\omega}},t\right)\right)\right]. 
\] 
Combining these facts with Lemma \ref{lem_noisy_foc} gives the desired result. 

\bigskip 
\noindent Part (ii): Let $f''$ be as described in Theorem \ref{thm_noisy_attention}, and write $x''=\langle f'', {\bf{A}}, \bm{\Sigma}\rangle$. As in Section \ref{sec_proof_eqm}, write 
\begin{equation}
\Delta_{x'', mn}(t)=\mu^{-1}\nu_{x''}\left(\bm{\omega}_{mn},t \right),
\end{equation}
and drop the notation of $t$ in the case of $t=0$. 

By Assumption \ref{assm_u} (iii) and (iv), 
\begin{align*} 
\mathbb{E}_{x''}\left[\exp\left(\widetilde{\Delta}_{x'',mn}(t)\right)\right]
\leq \exp\left(\kappa t/\mu\right) \mathbb{E}_{x''}\left[\exp\left(\widetilde{\Delta}_{x'',mn}\right)\right], 
\end{align*}
where the last term of the above inequality can be bounded above as follows: 
\begin{align*}
\mathbb{E}_{x''}\left[\exp\left(\widetilde{\Delta}_{x'',mn}\right)\right]
=&\sum_{m=1}^K \mathbb{P}_{x''} \left(\bm{\omega}_{mm}\right)\exp\left(\Delta_{x'',mm}\right)+\sum_{m=1}^K \sum_{n=1}^{m-1} \mathbb{P}_{x''} \left(\bm{\omega}_{mm}\right)\gamma\left(\Delta_{x'',mn}\right)\\
\leq & \sum_{m=1}^K \mathbb{P}_{x''} \left(\bm{\omega}_{mm}\right)\cdot 1 +\frac{1}{2}\left(1-\sum_{m=1}^K \mathbb{P}_{x''} \left(\bm{\omega}_{mm}\right)\right)\gamma\left(\Delta_{x'',K1}\right). 
\end{align*}
Thus a necessary condition for $\mathbb{E}_{x''}\left[\exp\left(\widetilde{\Delta}_{x'',mn}\right)\right]\geq 1$ to hold true is
\begin{equation}\label{eqn_necessary}
\Delta_{x'',K1}\geq \gamma^{-1}\left(2\left[ \frac{\exp\left(\kappa \lvert t \rvert/\mu\right) -\sum_{m=1}^K \mathbb{P}_{x''}(\bm{\omega}_{mm})}{1-\sum_{m=1}^K \mathbb{P}_{x''}(\bm{\omega}_{mm})}\right]\right).
\end{equation}

Two things are noteworthy. First, the right-hand side of Condition (\ref{eqn_necessary}) is strictly increasing in $\sum_{m=1}^K \mathbb{P}_{x''}(\bm{\omega}_{mm})$. Second, 
\begin{align*}
\sum_{m=1}^K \mathbb{P}_{x''}\left(\bm{\omega}_{mm}\right)&=\sum_{i,j=1}^N\sum_{m=1}^K f\left(\omega_m \mid a_i\right) f\left(\omega_m \mid a_j\right) \sigma_{ij}
 \geq \sum_{i,j=1}^N \sum_{m=1}^K \frac{1}{K}\cdot \frac{1}{K} \cdot \sigma_{ij}
=\frac{1}{K}, 
\end{align*}
where the inequality can be seen from solving the following optimization problem: \[\min_{z_m, z_m'}\sum_{m=1}^K z_m z_m' \text{ s.t. } z_m, z_m' \geq 0 \text{ } \forall m \text{ and } \sum_{m=1}^Kz_m=\sum_{m=1}^K z_m'=1,\]
which attains its minimum value at $z_{m}=z_m'=\frac{1}{K}$, $m=1,\cdots, K$. 
Thus Condition (\ref{eqn_necessary}) holds true only if 
\begin{equation}\label{eqn_necessary2}
\nu_{x''}\left(\bm{\omega}_{K1},0\right) \geq \mu \gamma^{-1}\left(2\left[\frac{K\exp\left(\kappa \lvert t \rvert/\mu\right) -1}{K-1}\right]\right).
\end{equation}

To complete the proof, notice that the right-hand side of Condition (\ref{eqn_necessary2}) is independent of $x$. By the result of Part (i), the inequality must be strict prior to garbling in order for it to hold true after garbling, and we are done.
\end{proof}

\section{Online Appendix (For Online Publication Only)}\label{sec_online}

\subsection{Costly Information Dissemination}\label{sec_selection}
In this section, suppose the dissemination of political information involves a content provider that profits from voters' eyeballs. By showing a signal $\widetilde{\bm{\omega}}_t$ to voter $t$, the content provider earns a revenue equal to the mutual information between $\widetilde{\bm{\omega}}_t$ and the voter's optimal decision. To maximize revenue, the content provider simply sets $\widetilde{\bm{\omega}}_t=\widetilde{\bf{a}}$ for all $t$. The net profit is then 
\[\int I\left(m_t,\sigma\right)dF(t)-C,\]
where $C>0$ represents the fixed operation cost. 

Two situations can arise in this modified setting: (1) voters pay no attention and act based on the prior, information dissemination is unprofitable and thus unachievable, and candidates adopt their most preferred positions; (2) candidates and voters act as in Section \ref{sec_baseline} and information dissemination incurs no loss, i.e.,
\begin{equation}
\tag{NL}\int I\left(m_t^*,\sigma^*\right)dF(t) \geq C.
\end{equation}
Below we study the equilibria of the second kind.

For any probability matrix $\bm{\Sigma}$, let $H\left(\bm{\Sigma}\right)$ denote the entropy of the policy profile that is distributed according to $\bm{\Sigma}$, and let $\mathcal{E}\left({\bm{\Sigma}}, \mu\right)$ be the set of policy matrices that can be attained in the symmetric equilibria of our interest: 
\begin{align*}
\mathcal{E}\left({\bm{\Sigma}}, \mu\right)=\left\{{\bf{A}}: \exists {\bf{W}} \text{ s.t. } \begin{matrix}  
[{\bf{A}}, {\bm{\Sigma}}] \text{ is } {\bf{W}}-\text{IC}  \\
 {\bf{W}} \text{ is } [{\bf{A}}, {\bm{\Sigma}}]-\text{rationalizable}\\
 \text{(NL)}
\end{matrix} \right\}. 
\end{align*}
The next corollary is immediate: 

\begin{cor}\label{cor_selection}
Assume Assumptions \ref{assm_symmetry} and \ref{assm_u}. Let $\bm{\Sigma}$ be any probability matrix of any order $N \geq 2$ for which the set $\mathcal{E}\left({\bm{\Sigma}}, 0\right)$ is non-empty, and let $C$ be any real number in $\left(0,H\left(\bm{\Sigma}\right)\right)$. As $\mu$ grows from zero to infinity, the set $\mathcal{E}\left({\bm{\Sigma}}, \mu\right)$ satisfies the description of $\mathcal{EA}_t\left(\bm{\Sigma}, \mu\right)$ in Theorem \ref{thm_main}.
\end{cor}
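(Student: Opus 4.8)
The plan is to reduce everything to Theorem~\ref{thm_main}(i) plus one monotonicity fact, so that the corollary really is immediate. First I would note that in the selection model the content provider optimally sets $\widetilde{\bm\omega}_t={\bf a}$ for every $t$, so voters face exactly the baseline attention problem and candidates continue to care about voters only through winning probabilities; the provider's presence adds nothing to the candidates' incentive constraints beyond the participation requirement (NL). Hence Theorem~\ref{thm_main}(i) applies unchanged: in every symmetric equilibrium the winner is Downsian and the candidate constraints are $\widehat{\bf W}_N$-IC \emph{regardless of} $\mu$. This yields the representation
\[
\mathcal E\left({\bm\Sigma},\mu\right)=\left\{{\bf A}\in\mathcal E\left({\bm\Sigma}\right): G\left({\bf A},\mu\right)\geq C\right\},\qquad G\left({\bf A},\mu\right):=\int I\left(m_t^*,\sigma^*\right)dF(t),
\]
where $\mathcal E({\bm\Sigma})=\{{\bf A}:[{\bf A},{\bm\Sigma}]\text{ is }\widehat{\bf W}_N\text{-IC}\}$ is the $\mu$-free equilibrium-policy set of Theorem~\ref{thm_main}(i) and $G$ is the aggregate attention appearing in (NL).

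Second, I would show that $G({\bf A},\mu)$ is nonincreasing in $\mu$ for each fixed ${\bf A}$. Pointwise in $t$ this is the revealed-preference comparison already used in the proof of Lemma~\ref{lem_attentionset}(i): writing the two optimality inequalities for $\mu_1<\mu_2$ and adding them gives $(\mu_2-\mu_1)\left(I(m_t^{*},\sigma^*)|_{\mu_1}-I(m_t^{*},\sigma^*)|_{\mu_2}\right)\geq 0$, so each voter's mutual information falls as attention grows costlier; integrating against $dF$ preserves the inequality. Monotonicity of $G$ gives the nesting $\mathcal E({\bm\Sigma},\mu_2)\subseteq\mathcal E({\bm\Sigma},\mu_1)$ whenever $\mu_1<\mu_2$, which is property~(a) (the set shrinks). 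Property~(b) is then automatic, since minimizing $u(a_1,0)-u(a_N,0)$ over a shrinking nested family produces a weakly increasing value (with the usual convention that the minimum over the empty set is $+\infty$).

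Third, for the non-triviality~(c) I would exhibit one regime where the set is non-empty and another where it is empty. Non-emptiness at $\mu=0$ is the standing hypothesis, and by right-continuity of $G({\bf A},\cdot)$ at $0$ (as $\mu\to0^+$ each $m_t^*$ converges to the full-information rule, so $G({\bf A},\mu)\to G({\bf A},0)$) it persists on an initial range of $\mu$. Emptiness for large $\mu$ uses finiteness of $A_\alpha\times A_\beta$, so that $\mathcal E({\bm\Sigma})$ is a finite collection of policy matrices: for each of them $m_t^*({\bf a})\to\overline m_t$ uniformly in ${\bf a}$ as $\mu\to\infty$ (the logit of Lemma~\ref{lem_foc}(iii) flattens), whence $I(m_t^*,\sigma^*)\to 0$ for every $t$ and, by dominated convergence using $I\leq H({\bm\Sigma})$, $G({\bf A},\mu)\to 0$; taking the maximum over the finitely many ${\bf A}\in\mathcal E({\bm\Sigma})$ shows $G<C$ eventually, so $\mathcal E({\bm\Sigma},\mu)=\emptyset$ for all large $\mu$. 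Thus neither the set nor the minimal differential can stay constant over $[0,\infty)$, which is~(c); the role of $C<H({\bm\Sigma})$ here is precisely to keep the no-loss requirement feasible, since $G\leq H({\bm\Sigma})$ always.

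I expect the only delicate point to be the reduction itself rather than any single estimate: one must verify that introducing the revenue-maximizing provider changes the equilibrium set only by appending the scalar constraint $G({\bf A},\mu)\geq C$, leaving the candidate game and hence $\mathcal E({\bm\Sigma})$ untouched. Once that is in place, (a)--(c) follow from monotonicity of $G$, the resulting nesting, and the two limiting regimes, with no need for the finer $\delta(\mu)$-type bounds of Lemma~\ref{lem_attentionset}.
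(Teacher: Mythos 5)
Your proof is correct, and it is essentially the argument the paper intends: the paper omits the proof of this corollary as ``analogous to that of Theorem \ref{thm_main},'' and your write-up is that analogy made explicit. The reduction is in fact nearly definitional, since the paper defines $\mathcal{E}(\bm{\Sigma},\mu)$ in the selection model as the conjunction of $\bf{W}$-IC, rationalizability, and (NL), so Theorem \ref{thm_main}(i) pins down $\widehat{\bf{W}}_N$ and leaves $\mathcal{E}(\bm{\Sigma})$ intersected with the scalar constraint $G({\bf{A}},\mu)\geq C$; your revealed-preference monotonicity of $G$ is exactly the fact the paper invokes in part (i) of the proof of Lemma \ref{lem_attentionset}. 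Where you deviate, you simplify: you obtain part (b) for free from the nesting in part (a), whereas the paper's Lemma \ref{lem_attentionset} goes through the explicit hurdle $\delta(\mu)$ (that heavier computation is what yields the quantitative threshold quoted in Example \ref{exa:cont1}, but it is not needed for the weak monotonicity claimed here); and for emptiness at large $\mu$ you use finiteness of the policy space plus the flattening of the logit rule of Lemma \ref{lem_foc}(iii), where the paper's lemma linearizes the exponential---the same idea in substance. One side remark of yours is imprecise but harmless: $C<H(\bm{\Sigma})$ does not by itself ``keep (NL) feasible''---indeed, since each voter's decision is binary, $G\leq \log 2$, which can be well below $H(\bm{\Sigma})$---rather, feasibility at $\mu=0$ is precisely what the hypothesis $\mathcal{E}(\bm{\Sigma},0)\neq\emptyset$ assumes, and your actual proof of (c) only uses that hypothesis together with $C>0$, so nothing breaks.
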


\begin{proof}
The proof is analogous to that of Theorem \ref{thm_main} and is thus omitted.
\end{proof}

\subsection{Noisy News: Effect of Marginal Attention Cost}\label{sec_noisy_mu}
In the model presented in Section \ref{sec_noisy}, write the attention set of any voter $t<0$ as $\mathcal{A}_t\left(\bm{\Sigma}, f, \mu\right)$ in order to make its dependence on $\mu$ explicit. The next corollary generalizes Theorem \ref{thm_main} to encompass noisy news:

\begin{cor}\label{cor_noisy_mu}
Assume Assumptions \ref{assm_symmetry}-\ref{assm_logsupermodular}, and let $\bm{\Sigma}$ be any probability matrix of order $N \geq 2$ for which the set $\mathcal{E}\left({\bm{\Sigma}},f\right)$ is non-empty. Take any $t<0$ such that $\nu_{\langle f, {\bf{A}}, {\bm{\Sigma}}\rangle}\left(\bm{\omega},t\right)>0$ for some ${\bf{A}} \in \mathcal{E}\left({\bm{\Sigma}}, f\right)$ and $\bm{\omega} \in \Omega$. As we increase $\mu$ from zero to infinity, the set $\mathcal{EA}_t\left({\bm{\Sigma}}, f, \mu\right)$ satisfies the description of its equivalent in Theorem \ref{thm_main}. 
\end{cor}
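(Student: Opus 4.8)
The plan is to mirror the proof of Theorem \ref{thm_main}, since the noisy-news model has by now been equipped with an analogue of every ingredient used there. First I would record the conceptual reduction. By Theorem \ref{thm_noisy_policy} the set $\mathcal{E}(\bm{\Sigma}, f)$ of equilibrium policy matrices is independent of $\mu$ -- winner is determined as if news were fully revealing, so the marginal attention cost never enters the candidates' augmented game. Hence $\mathcal{EA}_t(\bm{\Sigma}, f, \mu) = \mathcal{E}(\bm{\Sigma}, f) \cap \mathcal{A}_t(\bm{\Sigma}, f, \mu)$ carries all of its $\mu$-dependence through the attention set $\mathcal{A}_t$, and it suffices to prove the three comparative statics for $\mathcal{A}_t(\bm{\Sigma}, f, \mu)$ and then intersect with the fixed set $\mathcal{E}(\bm{\Sigma}, f)$. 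This reduction is exactly what separates the corollary from a ``mere'' attention-set lemma and is the reason the conclusion has content.

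For the shrinking claim I would fix ${\bf{A}}$ and note that $\nu_{\langle f, {\bf{A}}, \bm{\Sigma}\rangle}(\bm{\omega}, t)$ does not depend on $\mu$, so membership is governed solely by $\mu$ through $\mathbb{E}_x[\exp(\nu_x(\widetilde{\bm{\omega}}, t)/\mu)] \geq 1$. Writing $\lambda = 1/\mu$ and $\psi(\lambda) = \mathbb{E}_x[\exp(\lambda \nu_x)]$, the map $\psi$ is convex with $\psi(0) = 1$ and $\psi'(0) = \mathbb{E}_x[\nu_x(\widetilde{\bm{\omega}}, t)] = \mathbb{E}_{[{\bf{A}}, \bm{\Sigma}]}[v(\widetilde{\bf{a}}, t)] < 0$ for $t < 0$, by the law of iterated expectations together with the symmetry of $\bm{\Sigma}$ and Assumption \ref{assm_u}(iii); hence $\{\mu : \psi(1/\mu) \geq 1\}$ is an interval of the form $(0, \overline{\mu}({\bf{A}})]$, so ${\bf{A}} \in \mathcal{A}_t(\mu')$ implies ${\bf{A}} \in \mathcal{A}_t(\mu)$ whenever $\mu < \mu'$. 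Equivalently one may invoke Lemma \ref{lem_noisy_foc} and the monotonicity of optimal mutual information in $\mu$, exactly as in Part (i) of Lemma \ref{lem_attentionset}. The non-constancy claim uses the same $\psi$: at $\mu \approx 0$ the set is non-empty by the hypothesis that $\nu_x(\bm{\omega}, t) > 0$ for some admissible ${\bf{A}}$ and $\bm{\omega}$, while for large $\mu$ the expansion $\exp(z) \approx 1 + z$ together with $\mathbb{E}_x[\nu_x] < 0$ forces $\mathbb{E}_x[\exp(\nu_x/\mu)] < 1$, emptying the set.

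The substantive step is the monotone lower bound, and here I would reuse the derivation already carried out in the proof of Theorem \ref{thm_noisy_attention}. For any ${\bf{A}} \in \mathcal{A}_t(\bm{\Sigma}, f, \mu)$, Assumption \ref{assm_u}(iii)--(iv), the symmetry and log-supermodularity Lemmas \ref{lem_noisy_symmetry} and \ref{lem_milgrom}, and the bound $\sum_{m} \mathbb{P}_x(\bm{\omega}_{mm}) \geq 1/K$ reduce membership to the necessary condition (\ref{eqn_necessary2}), namely $\nu_{\langle f, {\bf{A}}, \bm{\Sigma}\rangle}(\bm{\omega}_{K1}, 0) \geq \mu\gamma^{-1}\big(2[(K\exp(\kappa|t|/\mu) - 1)/(K - 1)]\big) =: \delta(\mu)$. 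Since $\bm{\omega}_{K1}$ maximizes the median voter's conditional differential utility by Lemma \ref{lem_noisy_symmetry}(ii)(c), this yields $\min\{\nu_{\langle f, {\bf{A}}, \bm{\Sigma}\rangle}(\bm{\omega}_{K1}, 0) : {\bf{A}} \in \mathcal{A}_t(\bm{\Sigma}, f, \mu)\} \geq \delta(\mu)$. The function $\delta(\mu)$ is precisely the one analysed in Part (ii) of Lemma \ref{lem_attentionset} with the diagonal mass $\sum_i \sigma_{ii}$ replaced by $1/K$; because the two derivative computations there (the signs of $d\delta/d\mu$ and $d^2\delta/d\mu^2$) go through verbatim for any constant strictly below one, and $1/K < 1$ as $K \geq 2$, we obtain that $\delta$ is strictly increasing in $\mu$. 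Intersecting with the $\mu$-invariant $\mathcal{E}(\bm{\Sigma}, f)$ then delivers the stated conclusion for $\mathcal{EA}_t$.

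The main obstacle is not any single calculation -- the convexity/threshold argument, the reduction to (\ref{eqn_necessary2}), and the sign of $d\delta/d\mu$ all have analogues established earlier -- but rather the bookkeeping of verifying that each baseline ingredient has a faithful noisy-news counterpart: that $\nu_x$ inherits the antisymmetry, ordering, and extremal-at-$\bm{\omega}_{K1}$ properties of $v$ (supplied by Lemmas \ref{lem_noisy_symmetry} and \ref{lem_milgrom} under log-supermodularity), and that the diagonal-mass constant entering $\delta(\mu)$ is bounded away from one uniformly in ${\bf{A}}$. Once these are in place the argument is genuinely ``analogous to that of Theorem \ref{thm_main}.''
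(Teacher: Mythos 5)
Your proposal reproduces the paper's proof almost step for step: the reduction via the $\mu$-invariance of $\mathcal{E}(\bm{\Sigma},f)$ (Theorem \ref{thm_noisy_policy}), the shrinking of the attention set, the necessary condition (\ref{eqn_necessary2}) obtained from Assumption \ref{assm_u}(iii)--(iv), Lemmas \ref{lem_noisy_symmetry} and \ref{lem_milgrom}, and the diagonal-mass bound $\sum_m \mathbb{P}_x(\bm{\omega}_{mm})\geq 1/K$, with the threshold $\delta(\mu)$ shown to be strictly increasing by the Step 1--Step 2 computations of Lemma \ref{lem_attentionset}(ii) applied with $\sum_i\sigma_{ii}$ replaced by $1/K$, and finally the emptiness for large $\mu$ from $\exp(z)\approx 1+z$ and $\mathbb{E}_x\left[\nu_x(\widetilde{\bm{\omega}},t)\right]<0$. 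Your convexity argument for the shrinking part (the map $\lambda\mapsto\mathbb{E}_x[\exp(\lambda\nu_x)]$ is convex, equals $1$ at $\lambda=0$, and has negative slope there, so the attention region in $\mu$ is an interval $(0,\overline{\mu}({\bf{A}})]$) is a clean, self-contained substitute for the paper's appeal to the monotonicity of optimal mutual information in $\mu$, and you correctly note the two are interchangeable.

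One step lands on the wrong quantity, though the repair is one line. The description the corollary must match is Theorem \ref{thm_main}(ii)(b), which concerns the policy differential $u(a_1,0)-u(a_N,0)$; your Part (ii) terminates with a lower bound on $\nu_{\langle f,{\bf{A}},\bm{\Sigma}\rangle}(\bm{\omega}_{K1},0)$, which is the quantity featured in Theorem \ref{thm_noisy_attention}(ii), not here. The paper bridges exactly this gap by noting $\Delta_{N1}>\Delta_{x}\left(\bm{\omega}_{K1},0\right)$: since $\nu_x(\bm{\omega}_{K1},0)$ is a difference of conditional expectations of $u(\cdot,0)$, each of which lies between $u(a_N,0)$ and $u(a_1,0)$, the bound $\nu_x(\bm{\omega}_{K1},0)\geq\delta(\mu)$ passes immediately to $u(a_1,0)-u(a_N,0)\geq\delta(\mu)$, i.e. to the necessary condition $\gamma(\Delta_{N1})\geq 2\left[\frac{K\exp(\kappa\lvert t\rvert/\mu)-1}{K-1}\right]$ stated in the paper. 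This omission is not fatal---the weak increase of $\min\{u(a_1,0)-u(a_N,0):{\bf{A}}\in\mathcal{A}_t\}$ already follows from the shrinking you establish in Part (i)---but the substantive content of Part (ii), a strictly increasing extremism threshold on the policy differential itself, requires the conversion. With that line added, your argument coincides with the paper's.
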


\begin{proof}
Since the set $\mathcal{E}\left(\bm{\Sigma}, f\right)$ is invariant with $\mu$, it suffices to prove the following: as $\mu$ increases  from zero to infinity,
\begin{enumerate}[(i)]
\item the set $\mathcal{A}_t\left(\bm{\Sigma}, f, \mu\right)$ shrinks;
\item the term $\min\left\{u\left(a_1,0\right)-u\left(a_N,0
\right): {\bf{A}} \in \mathcal{A}_t\left({\bm{\Sigma}}, f, \mu\right)\right\}$ increases;
\item the above described objects do not always stay constant.
\end{enumerate}
\bigskip

\noindent Part (i): The proof is analogous to that of Lemma \ref{lem_attentionset} (i) and is thus omitted.
\bigskip 

\noindent Part (ii): Take any policy matrix ${\bf{A}}$ of order $N$  and write $x=\langle f, {\bf{A}}, {\bm{\Sigma}}\rangle$. In the derivation of Condition (\ref{eqn_necessary}), using the fact that $\Delta_{N1}>\Delta_{x}\left(\bm{\omega}_{K1},0\right)$ yields the following necessary condition: 
\[\gamma\left(\Delta_{N1}\right) \geq 2\left[\frac{K\exp(\kappa \lvert t \rvert/ \mu ) -1}{K-1}\right].\]
The remainder of the proof is analogous to that of Lemma \ref{lem_attentionset} (ii) and is thus omitted.

%

\bigskip 
	
\noindent Part (iii): When $\mu$ is small, $\mathcal{A}_t\left({\bm{\Sigma}}, f, \mu\right) \neq \emptyset$ by the assumption that $\nu_{\langle f, {\bf{A}}, {\bm{\Sigma}}\rangle}\left(\bm{\omega},t\right)>0$ for some ${\bf{A}} \in \mathcal{E}\left({\bm{\Sigma}}, f\right)$ and $\bm{\omega} \in \Omega$. When $\mu$ is large, 
\begin{align*}
\mathbb{E}_{x}\left[\exp\left(\widetilde{\Delta}_{x,mn}(t)\right)\right]
\approx\mathbb{E}_{x}\left[1+\widetilde{\Delta}_{x,mn}(t)\right]\underbrace{<}_\text{(1)} \mathbb{E}_{x}\left[1+\widetilde{\Delta}_{x,mn}\right]\underbrace{=}_\text{(2)} 1,
\end{align*}
where (1) uses Assumption \ref{assm_u} (iii) and (2) Lemma \ref{lem_noisy_symmetry}. 
Thus $\mathcal{A}_t\left(\bm{\Sigma}, f, \mu\right)=\emptyset$ in this case, and this completes the proof.
\end{proof}

\subsection{Limited Commitment}\label{sec_commitment}
In this section, suppose the winning candidate honors his policy proposal with probability $\eta \in [0,1]$ and adopts his most preferred position (assumed to be equal to his type) otherwise. The parameter $\eta$ captures the winner's level of commitment power and is treated as exogenously given. The policy proposal, now regarded as a campaign promise, serves a new role beyond what we have seen so far: it enables voters to infer the winner's type, which is useful in case the latter reneges and does what pleases himself most. 

In what follows, we restrict candidates to using pure symmetric strategies that are strictly increasing in their types. We could alternatively consider mixed strategies that satisfy the monotone likelihood ratio property, whereby high type candidates are more likely to propose high policies than do low type candidates. We choose not to pursue this route for ease of analysis.


Let $-t_N<\cdots<-t_1 \leq 0\leq t_1<\cdots<t_N$ denote the candidates' types and  $-a_N<\cdots<-a_1< 0< a_1<\cdots<a_N$ their policy proposals. The integer $N$ is greater than one and is treated as exogenously given. For $i, j=1,\cdots, N$, write ${\bf{t}}_{ij}=\left(-t_i, t_j\right)$, and define 
\[\widehat{v}({\bf{a}}_{ij}, t)=\eta v\left({\bf{a}}_{ij}, t\right)+(1-\eta)v\left({\bf{t}}_{ij},t\right)\]
as voter $t$'s differential utility from choosing candidate $\beta$ over candidate $\alpha$ under policy profile ${\bf{a}}_{ij}$. Let 
\[\widehat{u}_{+}\left(a_c,t_c\right)=\eta u_{+}\left(a_c,t_c\right)+(1-\eta)u_{+}\left(t_c,t_c\right) \] 
and \[\widehat{u}_{-}\left(a_c,t_{-c}\right)=\eta u_{-}\left(a_c,t_{-c}\right)+(1-\eta)u_{-}\left(t_c, t_{-c}\right)\] be the utility of the winning and losing candidate of type $t_c$ and $t_{-c}$, respectively, when the winning policy is $a_c$. In the formulation of Section \ref{sec_baseline}, replacing $v, u_+$ and $u_-$ with $\widehat{v}, \widehat{u}_+$ and $\widehat{u}_-$, respectively, yields players'  value functions in the current setting. 

Under the aforementioned restrictions, the probability matrix $\bm{\Sigma}$ is fully determined by the candidates' type distribution and will thus be taken as exogenously given. Let $\bf{T}$ be a square matrix of order $N$ whose $ij^{th}$ entry is ${\bf{t}}_{ij}$, and let $\bf{A}$ and $\bf{W}$ be as in the baseline model. A tuple $\langle {\bf{A}}, \bf{W} \rangle$ can be attained in  an equilibrium of our interest if (1) $\bf{A}$ is \emph{incentive compatible for the candidates under $\bf{W}$} (hereinafter, $\bf{W}$-IC), and if (2) $\bf{W}$ is \emph{can be rationalized by optimal attention strategies under $\bf{A}$} (hereinafter, $\bf{A}$-rationalizable). These conditions can be obtained from limiting candidates to adopting monotone pure symmetric strategies and treating $\bf{\Sigma}$ as exogenously given in Definitions \ref{defn_ic} and \ref{defn_rationalizable}. 

For any $\mu>0$, define 
\begin{align*}
\widehat{\mathcal{E}}\left(\mu\right)=\left\{{\bf{A}}: \exists {\bf{W}} \text{ s.t. } \begin{matrix}  
{\bf{A}} \text{ is } {\bf{W}} -\text{IC}  \\
 {\bf{W}} \text{ is } {\bf{A}} -\text{rationalizable}
\end{matrix} \right\}
\end{align*}
as the set of policy matrices that can be attained in the equilibria of our interest, as well as 
\begin{align*}
\widehat{\mathcal{A}}_t\left(\mu\right)=\left\{{\bf{A}}: \mathbb{E}_{[{\bf{A}},{\bf{T}}, {\bm{\Sigma}}]} \left[\exp\left(\mu^{-1} \widehat{v}\left(\widetilde{\bf{a}},t\right)\right)\right] \geq 1\right\}
\end{align*}
as the attention set of any voter $t<0$. Taking intersection yields the set $\widehat{\mathcal{EA}}_t(\mu)$ of policy matrices that captures the voter's attention in the equilibria of our interest.

Armed with these definitions and notations, we now state the main result of this section:

\begin{cor}\label{cor_commitment}
Under Assumptions \ref{assm_symmetry} and \ref{assm_u}, the set $\widehat{\mathcal{E}}\left(\mu\right)$ is independent of $\mu$ and is given by 
\[\widehat{\mathcal{E}}=\left\{{\bf{A}}: {\bf{A}} \emph{ is } \widehat{\bf{W}}_N- \emph{IC}\right\}.\]
In the case it is non-empty, the set $\widehat{\mathcal{EA}}_t\left(\mu\right)$ satisfies the description of it equivalent in Theorem \ref{thm_main} for any voter $t<0$ such that $\widehat{v}({\bf{a}}, t)>0$ for some ${\bf{a}} \in {\bf{A}} \in \widehat{\mathcal{E}}$.
\end{cor}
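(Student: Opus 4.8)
The plan is to exploit the observation, built into the construction of the limited-commitment model, that replacing $v, u_+, u_-$ by $\widehat{v}, \widehat{u}_+, \widehat{u}_-$ turns the game into a formal instance of the baseline model of Section~\ref{sec_baseline}. Consequently, every result feeding into Theorem~\ref{thm_main} --- Lemma~\ref{lem_foc}, Lemma~\ref{lem_symmetry}, Lemma~\ref{lem_interior} and Lemma~\ref{lem_attentionset} --- applies verbatim, \emph{provided} the composite differential utility $\widehat{v}$ inherits the structural properties of $v$ that those lemmas invoke. So the whole proof reduces to verifying these properties for $\widehat{v}({\bf{a}}_{ij}, t) = \eta\, v({\bf{a}}_{ij}, t) + (1-\eta)\, v({\bf{t}}_{ij}, t)$, a convex combination of the voter's differential utility evaluated at the policy profile ${\bf{a}}_{ij} = (-a_i, a_j)$ and at the type profile ${\bf{t}}_{ij} = (-t_i, t_j)$. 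The candidate utilities $\widehat{u}_\pm$ enter only the incentive-compatibility constraint, not the voter-side aggregation that pins down the winning matrix, so no separate work on them is needed.

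First I would check the symmetry identity $\widehat{v}({\bf{a}}_{ij}, 0) = -\widehat{v}({\bf{a}}_{ji}, 0)$ and the maximality $\widehat{v}({\bf{a}}_{N1},0)=\max_{i,j}\widehat{v}({\bf{a}}_{ij},0)$ underlying Lemma~\ref{lem_symmetry} and Lemma~\ref{lem_interior}. Using Assumption~\ref{assm_symmetry}(i) in the form $u(a, 0) = u(-a, 0)$, each of $v({\bf{a}}_{ij}, 0)$ and $v({\bf{t}}_{ij}, 0)$ is odd under $(i,j)\mapsto(j,i)$ and single-peaked in the index, so their convex combination is too; moreover $\widehat{v}({\bf{a}}_{ij}, 0) > 0$ whenever $i > j$ because $a_i > a_j > 0$ and $t_i > t_j \geq 0$ (candidates use strictly increasing strategies), which supplies the non-degeneracy needed for Lemma~\ref{lem_interior}. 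Second, I would reproduce the ``concavity step'' $\widehat{v}({\bf{a}}_{ij}, t) \geq \widehat{v}({\bf{a}}_{ji}, t)$ for $i > j$: by Assumption~\ref{assm_u}(ii) the map $a \mapsto u(a, t) + u(-a, t)$ is decreasing on $\mathbb{R}_+$, so each of $v({\bf{a}}_{ij}, t) - v({\bf{a}}_{ji}, t)$ and $v({\bf{t}}_{ij}, t) - v({\bf{t}}_{ji}, t)$ is nonnegative (applying the claim to $a_i > a_j$ and to $t_i > t_j$), and the convex combination preserves the sign. Third, strict increasing differences of $\widehat{v}({\bf{a}}, t)$ in $t$ --- needed for the monotonicity in Lemma~\ref{lem_foc}(iv) --- follows from Assumption~\ref{assm_u}(iii) applied termwise, since $a_j > -a_i$ always and $t_j > -t_i$ on every off-diagonal profile.

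The fourth and most delicate property is the analog of Assumption~\ref{assm_u}(iv), which drives the quantitative comparative statics in Lemma~\ref{lem_attentionset}. Because $\{{\bf{a}}_{ij}\}$ and $\{{\bf{t}}_{ij}\}$ are both finite and the constant $\kappa$ is really a property of $u$ on a finite set bounded away from the center, I would take $\widehat{\kappa} > 0$ to be the smaller of the two constants obtained by applying (iv) separately to the policy profiles and to the type profiles, giving $\widehat{v}({\bf{a}}, t) - \widehat{v}({\bf{a}}, 0) > \widehat{\kappa}\, t$ on the off-diagonal profiles that matter. The step I expect to require the most care is precisely this one: a centrist type ($t_1 = 0$) can make the type-profile bound degenerate on profiles involving it, and this is handled exactly by the hypothesis that we restrict to voters $t < 0$ and to profiles with $\widehat{v}({\bf{a}}, t) > 0$, mirroring the scope restriction in Theorem~\ref{thm_main}(ii). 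With these four properties established, I would run the aggregation-plus-symmetry argument of Theorem~\ref{thm_main}(i) --- now reading each $m_t^*$ as a monotone function of $\widehat{v}$ through Lemma~\ref{lem_foc} --- to force the winning matrix to equal $\widehat{\bf{W}}_N$ in every symmetric equilibrium, independently of $\mu$, whence $\widehat{\mathcal{E}}(\mu) = \{{\bf{A}}: {\bf{A}} \text{ is } \widehat{\bf{W}}_N\text{-IC}\}$; and feeding $\widehat{v}$ and $\widehat{\kappa}$ into the proof of Lemma~\ref{lem_attentionset} delivers the stated $\mu$-comparative statics for $\widehat{\mathcal{EA}}_t(\mu) = \widehat{\mathcal{E}}\cap\widehat{\mathcal{A}}_t(\mu)$.
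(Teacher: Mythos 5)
Your overall strategy is exactly the paper's: the published proof consists of observing that $\widehat{v}\left({\bf{a}}_{ij},t\right)=\eta\, v\left({\bf{a}}_{ij},t\right)+(1-\eta)\, v\left({\bf{t}}_{ij},t\right)$ ``shares the essential properties'' of $v$ and then rerunning the proof of Theorem \ref{thm_main} with $\widehat{v}$ in place of $v$. Your verifications of the oddness and maximality of $\widehat{v}\left(\cdot,0\right)$ at the profile $(N,1)$, of the concavity comparison $\widehat{v}\left({\bf{a}}_{ij},t\right)\geq\widehat{v}\left({\bf{a}}_{ji},t\right)$ for $i>j$, and of increasing differences are correct and are precisely what the paper leaves implicit.

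However, your treatment of the analog of Assumption \ref{assm_u}(iv) --- the step you yourself flag as the most delicate --- does not work as written, for two reasons. First, the exponential bound $\mathbb{E}\left[\exp\left(\widehat{\Delta}(t)\right)\right]\leq\exp\left(\widehat{\kappa}t/\mu\right)\mathbb{E}\left[\exp\left(\widehat{\Delta}\right)\right]$ used in Lemma \ref{lem_attentionset}(ii) is applied profile-by-profile inside an expectation over the \emph{entire} support, including the diagonal profiles ${\bf{a}}_{ii}$, so you cannot confine the required inequality to ``off-diagonal profiles that matter.'' Second, the model allows a centrist type $t_1=0$, in which case any constant obtained by applying (iv) separately to the type profiles is zero (the profile ${\bf{t}}_{11}=(0,0)$ gives $v\left({\bf{t}}_{11},t\right)\equiv 0$), so your $\widehat{\kappa}=\min(\kappa_{\mathrm{policy}},\kappa_{\mathrm{type}})$ degenerates to zero; and the scope restriction to voters $t<0$ with $\widehat{v}\left({\bf{a}},t\right)>0$ does not rescue this, because that hypothesis is used only to establish non-emptiness of the attention set at small $\mu$ (Part (iii) of Lemma \ref{lem_attentionset}), not to justify the uniform bound in Part (ii). The correct repair exploits the convex-combination structure instead of a minimum: by Assumption \ref{assm_u}(iii), since $t_j\geq -t_i$ on every profile, the type part satisfies $v\left({\bf{t}}_{ij},t\right)-v\left({\bf{t}}_{ij},0\right)\geq 0$ for all $t>0$, hence $\widehat{v}\left({\bf{a}}_{ij},t\right)-\widehat{v}\left({\bf{a}}_{ij},0\right)\geq\eta\left[v\left({\bf{a}}_{ij},t\right)-v\left({\bf{a}}_{ij},0\right)\right]>\eta\kappa t$ uniformly over all profiles, so $\widehat{\kappa}=\eta\kappa$ serves whenever $\eta>0$. (At $\eta=0$, $\widehat{v}$ no longer depends on ${\bf{A}}$ at all and the comparative-statics claim degenerates --- an edge case the corollary implicitly sets aside.) With this substitution, the rest of your argument goes through and coincides with the paper's.
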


\begin{proof}
Straightforward algebra shows that the term $\widehat{v}\left({\bf{a}}_{ij},t\right)$, which can be elaborated as follows: 
\begin{align*}
\widehat{v}\left({\bf{a}}_{ij},t\right)= & \eta\left[u\left(a_j, t\right)+u\left(-a_j,t\right)-\left[u\left(a_i, t\right)+u\left(-a_i,t\right)\right]\right]\\
& +(1-\eta)\left[u\left(t_j, t\right)+u\left(-t_j,t\right)-\left[u\left(t_i, t\right)+u\left(-t_i,t\right)\right]\right], 
\end{align*}
shares the essential properties of its equivalent $v\left({\bf{a}}_{ij},t\right)$ in the baseline model. Replacing the latter with the former in the proof of Theorem \ref{thm_main} gives the desired result. 
\end{proof}

The impact of limited commitment on equilibrium outcomes can be subtle. Below we illustrate the main intuitions in an example: 

\begin{example}[label=exa:cont3]
In Example \ref{exa:cont1}, suppose the winning candidate honors his policy proposal with probability $\eta$ and reneges with probability $1-\eta$. Tedious but straightforward algebra shows that in this case, the attention set of voter $-\tau$ is defined by the following inequality: 
\begin{equation*}
\eta \cdot (a_2-a_1)+(1-\eta) \cdot \underbrace{(t_e-t_c)}_\text{inference effect} \geq \underbrace{\mu\gamma^{-1}\left(4\exp(2\tau /\mu)-2\right)}_\text{hurdle of indifference}.
\end{equation*}
On the right-hand side of this inequality is the same \emph{hurdle of indifference} that appeared in the baseline model. This term represents the hurdle that candidates need to bypass in order to grab  the voter's attention, and it is this exact term that generates the phenomenon of policy extremism we have seen so far. On the left-hand side of this inequality is a new term called the \emph{inference effect}. It serves as a powerful motivator for the voter to stay attentive, hence inferences can be drawn in case the winner reneges. 

Consider two cases: 
\begin{description}
\item [Case 1] $\mu\gamma^{-1}\left(4\exp(2\tau/\mu)-2\right) \leq t_e-t_c = 1/2$. In this case, the inference effect itself is strong enough to overcome the hurdle of indifference.  As the commitment power increases, the need for drawing inferences diminishes, and the logic behind Theorem \ref{thm_main} kicks in, suggesting that greater policy differentials are needed for arousing and attracting attention. 

\item [Case 2] $ \mu\gamma^{-1}\left(4\exp(2\tau/\mu)-2\right)>t_e-t_c=1/2$. In this case, the inference effect is weak compared to the hurdle of indifference, and the opposite happens as we increase the winner's commitment power. 
\end{description}
\end{example}

\subsection{Multiple Issues}\label{sec_multi}
In this section, suppose a policy consists of two issues $a$ and $b$ (e.g., inflation and unemployment, defense and economy), both taking values in $\Theta$. The Pareto frontier $\mathcal{B}\left(a\right)$, defined as a function of $a$, is strictly decreasing, strictly concave and smooth, and it satisfies the Inada condition $\lim_{a \rightarrow -1} \mathcal{B}'(a)=0$ and $\lim_{a \rightarrow 1} \mathcal{B}'(a)=-\infty$. There is a continuum of voters and two candidates named $a$ and $b$. Each player is either pro-$a$ or pro-$b$, depending on whether his preference weight on issue $b$, or type, belongs to $\Theta_a=\left[-1,0\right]$ or $\Theta_b=\left[0,1\right]$. The type distribution is the same as that of Section \ref{sec_baseline}, subject to minor relabeling. The utility functions $u\left(a,b,t\right)$, $u_{+}\left(a,b, t\right)$ and $u_{-}\left(a,b, t\right)$ are strictly increasing and smooth in $(a,b)$, with $u$ being strictly concave in $(a,b)$ and satisfying the Spence-Mirrlees' generalized single-crossing property: 

\begin{assm}\label{assm_issue}
$u(a,b,t)$ is strictly concave in $(a,b)$ for all $t$ and $-\frac{u_a\left(a,b,t\right)}{u_b\left(a,b,t\right)}$ is strictly increasing in $t$ for all $(a,b)$. 
\end{assm}

Under the above assumptions, each voter $t$'s indifference curve is tangent to the Pareto frontier at a unique point $\left(a^{\circ}(t), \mathcal{B}\left(a^{\circ}(t)\right)\right)$ defined by 
\[\frac{u_a\left(a^{\circ}(t), \mathcal{B}\left(a^{\circ}(t)\right)\right)}{u_b\left(a^{\circ}(t), \mathcal{B}\left(a^{\circ}(t)\right)\right)}+\mathcal{B}'\left(a^{\circ}(t), \mathcal{B}\left(a^{\circ}(t)\right)\right)=0,\]
and $a^{\circ}(t)$ is strictly decreasing in $t$. In the case where the range of $a^{\circ}$, denoted by $a^{\circ}(\Theta)$, coincides with $\Theta$, each $a \in \Theta$ is associated with a unique type $\left(a^{\circ}\right)^{-1}(a)$ of voter whose indifference curve is tangent to the Pareto frontier at $\left(a, \mathcal{B}(a)\right)$.

Since utilities are increasing in both issues, it suffices to consider an augmented economy featuring a single issue $a$ and the following augmented utility functions: $\widehat{u}\left(a,t\right)=u\left(a,\mathcal{B}(a), t\right)$, $\widehat{u}_{+}\left(a,t\right)=u_{+}\left(a,\mathcal{B}(a), t\right)$ and $\widehat{u}_{-}\left(a,t\right)=u_{-}\left(a,\mathcal{B}(a), t\right)$. The resemblance between this economy and the baseline one is noteworthy: 

\begin{lem}\label{lem_issue}
Under Assumption \ref{assm_issue}, the function $\widehat{u}$ satisfies the following properties:
\begin{enumerate}[(i)]
\item $\widehat{u}(\cdot,t)$ is strictly increasing on $\left[-1, a^{\circ}(t)\right]$ and is strictly decreasing on $\left[a^{\circ}(t), 1\right]$ for all $t$;
\item $\widehat{u}(\cdot, t)$ is strictly concave for all $t$;
\item $\widehat{u}$ has strict increasing differences if $u_{at}\geq 0$ and $u_{bt} \leq 0$ and one of these inequalities is strict.
\end{enumerate}
\end{lem}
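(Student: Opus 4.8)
The plan is to reduce all three claims to elementary sign analyses of the first two derivatives of $\widehat{u}(\cdot,t)$ taken along the Pareto frontier, which the chain rule delivers directly. Writing $b=\mathcal{B}(a)$ and differentiating $\widehat{u}(a,t)=u(a,\mathcal{B}(a),t)$ in $a$ gives
\[
\widehat{u}_a(a,t)=u_a\big(a,\mathcal{B}(a),t\big)+\mathcal{B}'(a)\,u_b\big(a,\mathcal{B}(a),t\big).
\]
Since $u_a>0$ and $u_b>0$ (utilities are increasing in both issues), the sign of $\widehat{u}_a$ coincides with the sign of $u_a/u_b+\mathcal{B}'$, which is exactly the expression whose vanishing defines the tangency point $a^{\circ}(t)$; hence $\widehat{u}_a(a^{\circ}(t),t)=0$. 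This identity will drive Part (i) once concavity is in hand, so I would establish Part (ii) first.

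For Part (ii) I would differentiate once more and collect terms into a quadratic form in $(1,\mathcal{B}'(a))$:
\[
\widehat{u}_{aa}(a,t)=u_{aa}+2\mathcal{B}'\,u_{ab}+(\mathcal{B}')^2\,u_{bb}+\mathcal{B}''\,u_b,
\]
where every partial of $u$ is evaluated at $(a,\mathcal{B}(a),t)$. The first three terms equal $(1,\mathcal{B}')\,D^2u\,(1,\mathcal{B}')^{\top}$, with $D^2u$ the Hessian of $u$ in $(a,b)$; by the strict concavity of $u$ in $(a,b)$ (Assumption \ref{assm_issue}) this Hessian is negative definite, so the form is strictly negative because its argument $(1,\mathcal{B}')$ has a nonzero first component. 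The remaining term $\mathcal{B}''\,u_b$ is also strictly negative, since $\mathcal{B}''<0$ (strict concavity of the frontier) and $u_b>0$. Summing gives $\widehat{u}_{aa}<0$, which is Part (ii).

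Part (i) is then immediate: $\widehat{u}_a(\cdot,t)$ is strictly decreasing by Part (ii) and vanishes at the unique interior point $a^{\circ}(t)$, so $\widehat{u}_a>0$ on $[-1,a^{\circ}(t))$ and $\widehat{u}_a<0$ on $(a^{\circ}(t),1]$, which is the asserted monotonicity. For Part (iii) I would differentiate $\widehat{u}_a$ in $t$, obtaining
\[
\widehat{u}_{at}(a,t)=u_{at}\big(a,\mathcal{B}(a),t\big)+\mathcal{B}'(a)\,u_{bt}\big(a,\mathcal{B}(a),t\big).
\]
Because $\mathcal{B}'<0$ on the interior (the frontier is strictly decreasing and $\mathcal{B}'$ is itself strictly decreasing with $\mathcal{B}'\to 0$ as $a\to-1$ by the Inada condition), the hypotheses $u_{at}\ge 0$ and $u_{bt}\le 0$ make both summands nonnegative, and if either inequality is strict the corresponding summand is strictly positive; hence $\widehat{u}_{at}>0$, i.e.\ $\widehat{u}$ has strict increasing differences in $(a,t)$. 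The only step that is not pure bookkeeping is recognizing $\widehat{u}_{aa}$ as a negative-definite quadratic form plus the strictly negative curvature term $\mathcal{B}''u_b$; once that decomposition is spotted, the remainder is routine sign-chasing, with the crucial sign input throughout being $\mathcal{B}'<0$.
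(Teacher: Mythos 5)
Your proposal is correct, and parts (ii) and (iii) coincide with the paper's own proof: the same quadratic-form decomposition $\widehat{u}_{aa}=(1,\mathcal{B}')D^2u\,(1,\mathcal{B}')^{\top}+u_b\mathcal{B}''<0$, and the same sign-chasing for $\widehat{u}_{at}=u_{at}+\mathcal{B}'u_{bt}$ using $\mathcal{B}'<0$. Where you genuinely diverge is part (i). The paper does \emph{not} route it through concavity: it fixes $a<a^{\circ}(t)$, uses the monotonicity of $a^{\circ}$ together with the range condition $a^{\circ}(\Theta)=\Theta$ to produce a type $t'>t$ whose tangency point is exactly $a$, and then writes
\[
\widehat{u}_a(a,t)=u_b\left(a,\mathcal{B}(a),t\right)\left[\frac{u_a\left(a,\mathcal{B}(a),t\right)}{u_b\left(a,\mathcal{B}(a),t\right)}-\frac{u_a\left(a,\mathcal{B}(a),t'\right)}{u_b\left(a,\mathcal{B}(a),t'\right)}\right]>0,
\]
with the sign coming from the Spence--Mirrlees condition that $-u_a/u_b$ is strictly increasing in $t$. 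Your argument instead proves (ii) first and gets (i) for free: $\widehat{u}_a(\cdot,t)$ is strictly decreasing and vanishes at $a^{\circ}(t)$, hence is positive to the left and negative to the right. Your route is shorter and, notably, does not need the surjectivity $a^{\circ}(\Theta)=\Theta$, which the paper's part (i) quietly relies on (it is stated in the surrounding text only as a conditional case, not as part of Assumption 5); on the other hand, the paper's route shows that (i) is driven purely by single crossing and monotonicity in $b$, independent of any curvature of $u$ or $\mathcal{B}$, so it cleanly separates which assumption delivers which conclusion. One shared caveat, inherited from the paper rather than introduced by you: both proofs read ``strict concavity'' as ``negative definite Hessian'' (and $\mathcal{B}''<0$), which is not literally implied; a fully rigorous part (ii) would either assume negative definiteness or argue strict concavity of the composition directly from the strict concavity of $u$ and $\mathcal{B}$ and the strict monotonicity of $u$ in $b$.
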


\begin{cor}\label{cor_issue}
Theorem \ref{thm_main} holds true if the augmented economy satisfies Assumptions \ref{assm_symmetry} and \ref{assm_u}. 
\end{cor}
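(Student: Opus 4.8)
The plan is to certify that the augmented single-issue economy is a genuine instance of the baseline model of Section~\ref{sec_baseline}, so that Theorem~\ref{thm_main} transfers with no new computation. The reduction itself rests on the monotonicity of $u$, $u_+$ and $u_-$ in $(a,b)$: because more of each issue is strictly better for every player, no one is ever served by a policy strictly inside the feasible set, and the relevant locus of policies is the Pareto frontier $\{(a,\mathcal{B}(a))\}$. Parametrizing frontier policies by their first coordinate $a$ and recording preferences over them through $\widehat u(a,t)=u(a,\mathcal{B}(a),t)$, $\widehat u_+$ and $\widehat u_-$ turns the two-issue game into a one-issue game whose differential utility $\widehat u(a_\beta,t)-\widehat u(a_\alpha,t)$, candidate payoffs, attention strategies, winning probabilities, and incentive conditions are all of exactly the form studied in the baseline model.

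Granting the reduction, I would then check that the augmented economy inherits the hypotheses of Theorem~\ref{thm_main}. Lemma~\ref{lem_issue} does most of this work: it shows $\widehat u(\cdot,t)$ is single-peaked at $a^\circ(t)$, strictly concave, and has strict increasing differences, which are precisely the structural content of parts~(i)--(iii) of Assumption~\ref{assm_u} once types are relabeled so that each voter's peak $a^\circ(t)$ coincides with his type index, as the text arranges. The symmetry requirement (Assumption~\ref{assm_symmetry}) and the separation condition in part~(iv) of Assumption~\ref{assm_u} are not automatic, but they are exactly the content of the corollary's hypothesis and so may be assumed directly. With all of Assumptions~\ref{assm_symmetry} and~\ref{assm_u} in force for the augmented economy, Theorem~\ref{thm_main} applies verbatim and delivers both conclusions; by the reduction these are the conclusions for the original two-issue economy.

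The substantive step --- and the one I would be most careful about --- is the reduction to the frontier, not the final invocation. The delicate point is that a candidate's payoff mixes the winning-state utility $u_+$, which he controls, with a losing-state utility $u_-(\widetilde a_{-c},\widetilde t_c)$ that depends only on the opponent's policy, weighted by an endogenous winning probability. Raising one's own $b$ to the frontier strictly increases $u_+$ and weakly increases the winning probability, since voters value $b$ and a frontier candidate is weakly more attractive at every type; a candidate who wins therefore strictly benefits, while a candidate who loses has a losing-state payoff that does not depend on his own policy. Assembling these cases to conclude that the frontier restriction costs nothing --- in particular handling a candidate who would rather lose --- is the crux; everything downstream is a transcription of the baseline argument through $\widehat u$, $\widehat u_+$ and $\widehat u_-$.
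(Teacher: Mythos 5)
Your first two paragraphs reproduce what is, in effect, the paper's own proof --- which is no proof at all: the corollary is stated with its hypotheses placed directly on the augmented economy, so it is a bare instantiation of Theorem \ref{thm_main}, with Lemma \ref{lem_issue} serving only to show that parts (i)--(iii) of Assumption \ref{assm_u} are the kind of properties the frontier parametrization actually delivers (and note that for increasing differences, Lemma \ref{lem_issue}(iii) needs the extra sign conditions $u_{at}\geq 0$ and $u_{bt}\leq 0$ with one strict, which you omit; this is harmless here because Assumption \ref{assm_u} is hypothesized outright). As a proof of the corollary as stated, your proposal is therefore complete and takes the same route as the paper.

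The divergence is your third paragraph. The step you call ``the crux'' --- that restricting candidates to the Pareto frontier costs nothing --- is handled by the paper with the single sentence ``since utilities are increasing in both issues, it suffices to consider an augmented economy,'' i.e., by assertion, and it is kept outside the formal statement; that is precisely why the corollary can be proved without it. Your worry about a candidate who would rather lose is well aimed: moving up to the frontier weakly raises his winning probability, which is a cost rather than a benefit to such a candidate, and conversely, in the two-issue game, interior policies give him ways to make himself less attractive to every voter (and hence, under the off-path rule in Equation (\ref{eqn_perfect}), to depress his winning probability) that may have no counterpart among frontier policies. So a genuine equilibrium equivalence between the two-issue game and the augmented game would need either to rule out such preferences or to supply an argument that neither you nor the paper gives. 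Since you explicitly leave this unassembled, the honest accounting is: your proposal proves the corollary (whose quantifiers range over the augmented economy only), and its final paragraph correctly identifies --- but does not close --- a gap in the informal reduction of Section \ref{sec_multi} that motivates the corollary, not a gap in the corollary itself.
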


\bigskip

\noindent Proof of Lemma \ref{lem_issue}
\begin{proof}
Part (i): Differentiating $\widehat{u}(a,t)$ with respect to $a$ and using the fact that \[\frac{u_a\left(a^{\circ}(t), \mathcal{B}\left(a^{\circ}(t)\right), t\right)}{u_b\left(a^{\circ}(t), \mathcal{B}\left(a^{\circ}(t)\right) t\right)}+\mathcal{B}'\left(a^{\circ}(t)\right)=0,\] we obtain
\begin{equation*}
\frac{d}{da}\widehat{u}(a,t)=u_a\left(a,\mathcal{B}(a),t\right)+u_b\left(a, \mathcal{B}(a),t\right)\mathcal{B}'(a), 
\end{equation*}
and hence
\[\frac{d}{da}\widehat{u}(a,t)\bigg\rvert_{a=a^{\circ}(t)}=0.\] Meanwhile, since $a^{\circ}(t)$ is decreasing in $t$ and $a^{\circ}(I)=I$, it follows that for any $a<a^{\circ}(t)$, there exists $t'>t$ such that $a^{\circ}\left(t'\right)=a$ and hence
\[\frac{u_a\left(a, \mathcal{B}\left(a\right), t'\right)}{u_b\left(a, \mathcal{B}\left(a\right),  t'\right)}+\mathcal{B}'\left(a\right)=0.\] Therefore, 
\begin{align*}
\frac{d}{da}\widehat{u}(a,t)\bigg\rvert_{a<a^{\circ}(t)}&=u_a\left(a, \mathcal{B}(a), t\right)-u_b\left(a, \mathcal{B}(a), t\right)\frac{u_a\left(a, \mathcal{B}\left(a\right),  t'\right)}{u_b\left(a,\mathcal{B}(a), t'\right)}\\
&=u_b\left(a, \mathcal{B}(a), t\right)\left[\frac{u_a\left(a, \mathcal{B}\left(a\right),  t\right)}{u_b\left(a,\mathcal{B}(a), t\right)} - \frac{u_a\left(a, \mathcal{B}\left(a\right),  t'\right)}{u_b\left(a,\mathcal{B}(a), t'\right)}\right]\\
&>0, 
\end{align*}
where the inequality follows from the assumption that $u_b>0$ and $-\frac{u_a}{u_b}$ is strictly increasing in $t$. The proof of $\frac{d}{da}\widehat{u}(a,t)\Big\rvert_{a>a^{\circ}(t)}<0$ is analogous and is thus omitted. 

\bigskip

\noindent Part (ii): Straightforward algebra shows that 
\begin{align*}
\frac{d^2}{da^2}\widehat{u}(a,t)=&u_{aa}\left(a, \mathcal{B}(a),t\right)+\left[u_{ab}\left(a, \mathcal{B}(a),t\right)+u_{ba}\left(a, \mathcal{B}(a),t\right)\right] \mathcal{B}'(a)\\
&+u_{bb}\left(a, \mathcal{B}(a),t\right)\left(\mathcal{B}'(a)\right)^2+u_b\left(a, \mathcal{B}(a),t\right) \mathcal{B}''(a)\\
=&\left[1, \mathcal{B}'(a)\right]\begin{bmatrix}
u_{aa} & u_{ab}\\
u_{ba} & u_{bb}
\end{bmatrix} \begin{bmatrix}
1\\
\mathcal{B}'(a)
\end{bmatrix} + u_b\left(a, \mathcal{B}(a),t\right) \mathcal{B}''(a)\\
<&0,
\end{align*}
where the inequality follows from the assumption that $u(a,b, t)$ is strictly concave in $(a,b)$ and $\mathcal{B}(a)$ is strictly concave in $a$. 

\bigskip

\noindent Part (iii): Since
\[\frac{\partial^2 \widehat{u}(a,t)}{\partial a \partial t} = u_{at}\left(a,\mathcal{B}(a),  t\right)+u_{bt}\left(a,\mathcal{B}(a),t\right)\mathcal{B}'(a), \] $\frac{\partial^2 \widehat{u}(a,t)}{\partial a \partial t}>0$ if $u_{at}\geq 0$ and $u_{bt} \leq 0$ and one of these inequalities is strict. 
\end{proof}

\end{document}